\newcommand{\NP}{\ensuremath{\mathsf{NP}}}
\newcommand{\W}[1]{\ensuremath{\mathsf{W[#1]}}}
\newcommand{\homs}[2]{\mbox{\ensuremath{\mathsf{Hom}(#1 \to #2)}}}
\newcommand{\indsubs}[2]{\mbox{\ensuremath{\mathsf{IndSub}(#1 \to #2)}}}
\newcommand{\auts}[1]{\ensuremath{\mathsf{Aut}(#1)}}
\newcommand{\homsprob}{\ensuremath{\textsc{Hom}}}
\newcommand{\indsubsprob}{\ensuremath{\textsc{IndSub}}}
\newcommand{\fpol}{\ensuremath{\mathtt{f}}}
\newcommand{\lovasz}{Lov{\'{a}}sz}
\title{Counting Small Induced Subgraphs\texorpdfstring{\\}{ }Satisfying Monotone Properties}{}
\titlerunning{Counting Small Induced Subgraphs Satisfying Monotone Properties}
\author{Marc Roth}{Merton College, University of Oxford, United Kingdom}{marc.roth@merton.ox.ac.uk}{https://orcid.org/0000-0003-3159-9418}{}
\author{Johannes Schmitt}{Mathematical Institute, University of Bonn, Germany}{schmitt@math.uni-bonn.de}{https://orcid.org/0000-0001-5774-3508}{}
\author{Philip Wellnitz}{Max Planck Institute for Informatics,\and Saarland Informatics Campus
(SIC), Saarbrücken,
Germany}{wellnitz@mpi-inf.mpg.de}{https://orcid.org/0000-0002-6482-8478}{}
\authorrunning{M. Roth, J. Schmitt, and P. Wellnitz}
\keywords{Counting complexity, fine-grained complexity, graph homomorphisms, induced subgraphs, parameterized complexity}
\begin{document}

\maketitle

\begin{abstract}
    Given a graph property $\Phi$, the problem $\#\indsubsprob(\Phi)$ asks, on input a
    graph~$G$ and a positive integer $k$, to compute the number  $\#\indsubs{\Phi,k}{G}$
    of induced subgraphs of size $k$ in $G$ that satisfy $\Phi$.
    The search for  \emph{explicit} criteria on $\Phi$ ensuring that
    $\#\indsubsprob(\Phi)$ is hard was initiated by Jerrum and Meeks~[J.\ Comput.\
    Syst.\ Sci.\ 15] and is part of the major line of research on counting small patterns in graphs.
    However, apart from an implicit result due to Curticapean, Dell and
    Marx [STOC 17] proving that a full classification into ``easy'' and ``hard''
    properties is possible and some partial
    results on edge-monotone properties due to Meeks [Discret.\ Appl.\ Math.\ 16] and
    Dörfler et al.\ [MFCS~19], not much is known.

    In this work, we fully answer and explicitly classify the case of monotone, that is
    subgraph-closed, properties: We show that for any non-trivial monotone property
    $\Phi$, the problem $\#\indsubsprob(\Phi)$ cannot be solved in time $f(k)\cdot
    |V(G)|^{o(k/ {\log^{1/2}(k)})}$ for any function $f$, unless the Exponential Time
    Hypothesis fails. By this, we establish that any significant improvement over the
    brute-force approach is unlikely; in the language of parameterized complexity, we also
    obtain a $\#\W{1}$-completeness result.

    To prove our result, we use that for fixed $\Phi$ and $k$, we can express the function
    $G \mapsto \#\indsubs{\Phi, k}{G}$ as a finite linear-combination of homomorphism
    counts from graphs $H_i$ to $G$.
    The coefficient vectors of these homomorphism
    counts in the linear combination are called the \emph{homomorphism vectors} associated
    to $\Phi$; by the Complexity Monotonicity framework of Curticapean, Dell
    and Marx [STOC 17], the positions of non-zero entries of these vectors are known to
    determine the complexity of $\#\indsubsprob(\Phi)$.
    Our main technical result lifts the notion of $f$-polynomials from
    simplicial complexes to graph properties and relates the derivatives
    of the $f$-polynomial of~$\Phi$ to its homomorphism vector. We then apply
    results from the theory of Hermite-Birkhoff interpolation to the
    $f$-polynomial to establish sufficient conditions on $\Phi$ which ensure
    that certain entries in the homomorphism vector do not vanish---which in turn implies
    hardness.
    For monotone graph properties, non-triviality then turns out to be a sufficient condition.
    Using the same method, we also prove a conjecture by Jerrum and Meeks [TOCT\,15,
    Combinatorica\,19]: $\#\indsubsprob(\Phi)$ is $\#\W{1}$-complete if
    $\Phi$ is a non-trivial graph property only depending on the number of edges of the
    graph.
\end{abstract}

\newpage

\section{Introduction}
Detection, enumeration and counting of patterns in graphs are among the most well-studied
computational problems in theoretical computer science with a plethora of applications in
diverse disciplines, including biology~\cite{SchreiberS05,GrochovK07}, statistical
physics~\cite{TemperleyF61,Kasteleyn61,Kasteleyn63}, neural and social
networks~\cite{Miloetal02} and database theory~\cite{GroheSS01}, to name but a few. At the
same time, those problems subsume in their unrestricted forms some of the most infamous
$\mathrm{NP}$-hard problems such as Hamiltonicity, the clique problem, or, more generally,
the subgraph isomorphism problem~\cite{Cook71,Ullmann76}. In the modern-day era of ``big
data'', where even quadratic-time algorithms may count as inefficient, it is hence crucial
to find relaxations of hard computational problems that allow for tractable instances.

A very successful approach for a more fine-grained understanding of hard computational
problems is a multivariate analysis of the complexity of the problem: Instead of
establishing upper and (conditional) lower bounds only depending on the input size, we aim
to find additional parameters that, in the best case, are small in real-world instances
and allow for efficient algorithms if assumed to be bounded. In case of detection and
counting of patterns in graphs, it turns out that the size of the pattern is often
significantly smaller than the size of the graph: Consider as an example
the evaluation of database queries. While a classical analysis of this problem
requires considering instances where the size of the query is as large as the database, a
multivariate analysis allows us to impose the restriction of the query being much smaller
than the database, which is the case for real-world instances. More concretely, suppose we
are given a query $\varphi$ of size $k$ and a database $B$ of size $n$, and we wish to
evaluate the query $\varphi$ on $B$. Assume further, that we are given two algorithms for
the problem: One has a running time of $O(n^k)$, and the other one has a running time of
$O(2^k \cdot n)$. While, classically, both algorithms are inefficient in the sense that
their running times are not bounded by a polynomial in the input size $n+k$, the second
algorithm is significantly better than the first one for real-world instances and can even
be considered efficient.

In this work, we focus on \emph{counting} of small patterns in large graphs. The field of
counting complexity was founded by Valiant's seminal result on the complexity of computing
the permanent~\cite{Valiant79,Valiant79b}, where it was shown that computing the number of
perfect matchings in a graph is $\#\mathrm{P}$-complete, and thus harder than every
problem in the polynomial-time hierarchy $\mathrm{PH}$~\cite{Toda91}. This is in sharp
contrast to the fact that \emph{finding} a perfect matching in a graph can be done in
polynomial-time~\cite{Edmonds65}. Hence, a perfect matching is a pattern that allows for
efficient detection but is unlikely to admit efficient counting. Initiated by Valiant,
computational counting evolved into a well-studied subfield of theoretical computer
science. In particular, it turns out that counting problems are closely related to
computing partition functions in statistical
physics~\cite{TemperleyF61,Kasteleyn61,Kasteleyn63,GoldbergJ19,Chenetal19,Backensetal20}.
Indeed, one of the first algorithmic result in the field of computational counting is the
famous FKT-Algorithm by the statistical physicists Fisher, Kasteleyn and
Temperley~\cite{TemperleyF61,Kasteleyn61,Kasteleyn63} that computes the partition function
of the so-called dimer model on planar structures, which is essentially equivalent to
computing the number of perfect matchings in a planar graph. The FKT-Algorithm is the
foundation of the framework of holographic algorithms, which, among others, have been used
to identify the tractable cases of a variety of complexity classifications for counting
constraint satisfaction
problems~\cite{Valiant08,CaiL11,CaiFGW15,CaiHL12,HuangL16,CaiLX17,CaiLX18,Backens18}.
Unfortunately, the intractable cases of those classifications indicate that, except for
rare examples, counting is incredibly hard (from a complexity theory point of view).
In particular, many efficiently solvable combinatorial
decision problems turn out to be intractable in their counting versions, such as
counting of satisfying assignments of monotone $2$-CNFs~\cite{Valiant79b}, counting of
independent sets in bipartite graphs~\cite{ProvanB83} or counting of
$s$-$t$-paths~\cite{Valiant79b}, to name but a few. For this reason, we follow the
multivariate approach as outlined previously and restrict ourselves in this work on
counting of \emph{small} patterns in large graphs. Among others, problems of this kind
find applications in neural and social networks~\cite{Miloetal02}, computational
biology~\cite{Nogaetat08,Schilleretal15}, and database
theory~\cite{DurandM15,ChenM15,ChenM16,DellRW19icalp}.
\newpage

\noindent Formally, we follow the approach of Jerrum and Meeks~\cite{JerrumM15} and study
the family of problems $\#\indsubsprob(\Phi)$: Given a graph property $\Phi$, the problem
$\#\indsubsprob(\Phi)$ asks, on input a graph $G$ and a positive integer $k$, to compute
the number of induced subgraphs of size $k$ in $G$ that satisfy $\Phi$.\footnote{Note that
$\#\indsubsprob(\Phi)$ is identical to
$p$-$\#\textsc{UnlabelledInducedSubgraphWithProperty}(\Phi)$ as defined
in~\cite{JerrumM15}.} As observed by Jerrum and Meeks, the generality of the definition
allows to express counting of almost arbitrary patterns of size $k$ in a graph, subsuming
counting of $k$-cliques and $k$-independent sets as very special cases.

Assuming that $\Phi$ is computable, we note that the problem $\#\indsubsprob(\Phi)$ can be
solved by brute-force in time $O(f(k)\cdot |V(G)|^k)$ for some function $f$ only depending
on $\Phi$. The corresponding algorithm enumerates all subsets of $k$ vertices of $G$ and
counts how many of those subsets satisfy $\Phi$. As we consider $k$ to be significantly
smaller than $|V(G)|$, we are interested in the dependence of the exponent on~$k$. More
precisely, the goal is to find the best possible $g(k)$ such that $\#\indsubsprob(\Phi)$
can be solved in time
\begin{equation}\label{eq:intro_goal}
O(f(k)\cdot |V(G)|^{g(k)})
\end{equation}
for some function $f$ such that $f$ and $g$ only depend on $\Phi$. Readers familiar with
parameterized complexity theory will identify the case of $g(k)\in O(1)$ as
fixed-parameter tractability (FPT) results. We first provide some background and
elaborate on the existing results on $\#\indsubsprob(\Phi)$ before we present the
contributions of this paper.

\subsection{Prior Work}
So far, the problem $\#\indsubsprob(\Phi)$ has been investigated using primarily the framework of
parameterized complexity theory. As indicated before, $\#\indsubsprob(\Phi)$ is in
$\mathrm{FPT}$ if the function $g$ in \cref{eq:intro_goal} is bounded by a
constant (independent of $k$), and the problem is $\#\W{1}$-complete, if it is at least as
hard as the parameterized clique problem; here $\#\W{1}$ should be considered a
parameterized counting equivalent of $\mathrm{NP}$ and we provide the formal details in
\cref{sec:prelims}. In particular, the so-called Exponential Time Hypothesis (ETH) implies
that $\#\W{1}$-complete problems are not in $\mathrm{FPT}$; again, this is made formal in
\cref{sec:prelims}.

The problem $\#\indsubsprob(\Phi)$ was first studied by Jerrum and Meeks~\cite{JerrumM15}.
They introduced the problem and proved that $\#\indsubsprob(\Phi)$ is $\#\W{1}$-complete
if $\Phi$ is the property of being connected. Implicitly, their proof also rules out the
function $g$ of \cref{eq:intro_goal} being in $o(k)$, unless ETH fails, which
establishes a tight conditional lower bound.
In a subsequent line of research~\cite{JerrumM15density,Meeks16,JerrumM17}, Jerrum and
Meeks proved $\#\indsubsprob(\Phi)$ to be $\#\W{1}$-complete if at least one of the following
is true:
\begin{enumerate}[(1)]
    \item $\Phi$ has \emph{low edge-densities}; this is true for instance for all sparse
        properties such as planarity and made formal in \cref{sec:low_edge_dens}.
	\item $\Phi$ holds for a graph $H$ if and only if the number of edges of $H$ is even/odd.
    \item $\Phi$ is closed under the addition of edges, and the minimal elements have large
        treewidth.
\end{enumerate}
Unfortunately, none of the previous results establishes a conditional lower bound that
comes close to the upper bound given by the brute-force algorithm. This is particularly true
due to the application of Ramsey's Theorem in the proofs of many of the prior results:
Ramsey's Theorem states that there is a function $R(k) = 2^{\Theta(k)}$ such that
every graph with at least $R(k)$ vertices contains either a $k$-independent set or a
$k$-clique~\cite{Ramsey30,Spencer75,Erdos87}. Relying on this result for a reduction from
finding or counting $k$-independent sets or $k$-cliques, the best implicit conditional
lower bounds achieved only rule out an algorithm running in time
$f(k)\cdot |V(G)|^{o(\log k)}$ for any function $f$.

Moreover, the previous results only apply to a very
specific set of properties. In particular, Jerrum and Meeks posed the following open problem
concerning a generalization of the second result (2); we say that~$\Phi$ is $k$-trivial,
if it is either true or false for all graphs with $k$ vertices.
\begin{conjecture}[\cite{JerrumM15density,JerrumM17}]\label{conj:JM}
    Let $\Phi$ be a graph property that only depends on the number of edges of a graph. If
    for infinitely many $k$ the property $\Phi$ is not $k$-trivial, then $\#\indsubsprob(\Phi)$ is
    $\#\W{1}$-complete.\lipicsEnd
\end{conjecture}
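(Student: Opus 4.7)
I would follow the Complexity Monotonicity framework of Curticapean, Dell, and Marx together with the $f$-polynomial machinery advertised in the abstract. For each fixed $k$, write
\[
\#\indsubs{\Phi,k}{G} \;=\; \sum_{H} a_H(\Phi,k)\cdot\hom(H,G),
\]
where the sum ranges over isomorphism classes of graphs $H$. To obtain $\#\W{1}$-completeness it suffices to exhibit, for infinitely many $k$ with $\Phi$ non-$k$-trivial, some $H_k$ in the support of $a(\Phi,k)$ whose treewidth tends to infinity with $k$. Because $\Phi$ depends only on the edge count, the whole homomorphism vector is governed by the one-variable function $\phi\colon\{0,\dots,\binom{k}{2}\}\to\{0,1\}$, and non-$k$-triviality says precisely that $\phi$ is neither constantly $0$ nor constantly $1$ on this range.

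Next I would invoke the paper's main technical lemma, which lifts the simplicial $f$-polynomial to graph properties. For an edge-count-only $\Phi$ the lifted $f$-polynomial collapses to the single-variable polynomial
\[
f_{\Phi,k}(x) \;=\; \sum_{i=0}^{\binom{k}{2}}\phi(i)\,\binom{\binom{k}{2}}{i}\,x^{i},
\]
so non-$k$-triviality rules out both $f_{\Phi,k}\equiv 0$ and $f_{\Phi,k}=(1+x)^{\binom{k}{2}}$. The technical lemma should then identify each coefficient $a_H(\Phi,k)$ as a nonzero scalar multiple of a prescribed derivative $f_{\Phi,k}^{(j(H))}(x_0)$ at a fixed interpolation point $x_0$ (naturally $x_0=-1$, mirroring the Möbius passage between $\indsub$, $\sub$, and $\hom$), where $j(H)$ records how many edges $H$ is missing from a clique and the scalar absorbs the automorphism correction.

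With this dictionary in hand, I would test the candidates $H=K_r$ for $r=2,\dots,k$, a family of unbounded treewidth. If every $a_{K_r}(\Phi,k)$ vanished, $f_{\Phi,k}$ would satisfy a long list of derivative-vanishing conditions at $x_0$; applying Hermite--Birkhoff interpolation, provided the resulting incidence matrix satisfies the Pólya regularity condition, would force $f_{\Phi,k}$ to be either $0$ or $(1+x)^{\binom{k}{2}}$, contradicting non-$k$-triviality. Hence some $r_k$ yields $a_{K_{r_k}}(\Phi,k)\neq 0$, and Complexity Monotonicity together with the $\#\W{1}$-hardness of $\hom(K_r,\cdot)$ (counting $r$-cliques) delivers the desired $\#\W{1}$-completeness.

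The main obstacle is verifying the Pólya/regularity condition on the Hermite--Birkhoff incidence matrix that arises from the algebraic dictionary between $a_H$ and the derivatives of $f_{\Phi,k}$: controlling the interplay between the binomial weights $\binom{\binom{k}{2}}{i}$, the automorphism corrections, and the choice of the interpolation point $x_0$ is where the real combinatorial work lies.
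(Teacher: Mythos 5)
There is a genuine gap at the heart of your plan: the claimed ``dictionary'' identifying each individual coefficient $a_H(\Phi,k)$ with a nonzero scalar multiple of a derivative $\fpol_{\Phi,k}^{(j(H))}(x_0)$ does not exist, and the paper's technical lemma does not provide it. What the $f$-polynomial actually controls is only the \emph{aggregate} of coefficients at each edge-count level on exactly $k$ vertices: writing $d=\binom{k}{2}$, one has $\fpol_{\Phi,k}^{(j)}(0)=j!\cdot f^{\Phi,k}_{d-j}$ and $\fpol_{\Phi,k}^{(j)}(-1)=j!\cdot h^{\Phi,k}_{d-j}$, where $h^{\Phi,k}_{i}$ equals $k!$ times the \emph{sum} of $a(H)$ over all $k$-vertex graphs $H$ with exactly $i$ edges. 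Derivatives at $-1$ say nothing about a single graph such as $K_r$ with $r<k$ (these graphs arise from quotients in the M\"obius inversion and are invisible to the $f$-polynomial, which only records the $k$-vertex level), and even at the $k$-vertex level they cannot separate the clique from other graphs with the same number of edges---except for $i=d$, where $K_k$ is the unique graph, and $a(K_k)$ is proportional to $h^{\Phi,k}_d$, which can perfectly well vanish for non-$k$-trivial $\Phi$. Consequently your key step ``if every $a_{K_r}(\Phi,k)$ vanished, then $\fpol_{\Phi,k}$ would satisfy a long list of derivative-vanishing conditions'' is unfounded, and the Hermite--Birkhoff/P\'olya contradiction never gets off the ground. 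There is also no reason to expect any clique $K_r$, $r<k$, to appear with nonzero coefficient at all for edge-count-only properties.

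The way the argument actually works (and the route the paper takes) is to give up on cliques and instead extract from the interpolation a $k$-vertex graph with \emph{many edges}: if $w=\mathsf{hw}(f^{\Phi,k})$ and $\Phi$ is not trivially false on $k$-vertex graphs, P\'olya's theorem forces some $h^{\Phi,k}_i\neq 0$ with $i\geq d-w+1$, hence some $k$-vertex graph with at least $d-w+1$ edges and nonzero coefficient; large edge count gives large average degree, hence large treewidth, and then Complexity Monotonicity plus the Dalmau--Jonsson/Marx hardness for counting homomorphisms yields $\#\W{1}$-hardness. For properties depending only on the number of edges one additional twist is needed which your proposal omits: $w$ may be close to $d+1$, making the bound $d-w+1$ useless, so one first replaces $\Phi$ by $\neg\Phi$ when $\mathsf{hw}(f^{\Phi,k})>\tfrac12\binom{k}{2}$ (using $\#\indsubs{\neg\Phi,k}{G}=\binom{|V(G)|}{k}-\#\indsubs{\Phi,k}{G}$ and $\mathsf{hw}(f^{\neg\Phi,k})=\binom{k}{2}-\mathsf{hw}(f^{\Phi,k})$ for edge-count-only properties), guaranteeing $\beta(k)=\binom{k}{2}-\mathsf{hw}\in\Theta(k^2)$. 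Relatedly, the interpolation step only ever forces $\fpol_{\Phi,k}\equiv 0$ (contradicting ``not trivially false''); the alternative $(1+x)^{\binom{k}{2}}$ outcome you envisage is not what P\'olya's theorem delivers---the trivially-true case is instead disposed of by the complement trick above.
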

Note that the condition of $\Phi$ not being $k$-trivial for infinitely many $k$ is
necessary for hardness, as otherwise, the problem becomes trivial if $k$ exceeds a
constant depending only on $\Phi$.

The first major breakthrough towards a complete understanding of the complexity of
$\#\indsubsprob(\Phi)$ is the following implicit classification due to Curticapean, Dell
and Marx~\cite{CurticapeanDM17}:
\begin{theorem}[\cite{CurticapeanDM17}]
    Let $\Phi$ denote a graph property. Then the problem $\#\indsubsprob(\Phi)$ is either
    $\mathrm{FPT}$ or $\#\W{1}$-complete.\lipicsEnd
\end{theorem}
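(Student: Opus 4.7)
The plan is to invoke the \emph{Complexity Monotonicity} principle that underlies the work of Curticapean, Dell and Marx, combined with the classical dichotomy for parameterized homomorphism counting. The proof is essentially a three-step reduction of $\#\indsubsprob(\Phi)$ to a question about a single graph class associated with $\Phi$.

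\textbf{Step 1: Homomorphism basis expansion.} First I would express the target function as a finite linear combination of homomorphism counts. As already announced in the abstract, for each fixed $k$ one can write
\[
    \#\indsubs{\Phi,k}{G} \;=\; \sum_{H} a_{\Phi,k}(H)\cdot \hom(H \to G),
\]
where the sum ranges over isomorphism types of graphs $H$ of order at most $k$. This is the standard \lovasz-style transformation: induced subgraph counts are related to subgraph counts by Möbius inversion on the subgraph lattice, and subgraph counts in turn are linear combinations of homomorphism counts indexed by quotients of the pattern. The only data of $\Phi$ that survives is the coefficient vector $a_{\Phi,k}$, whose support I will denote $\mathcal{S}(\Phi,k)$.

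\textbf{Step 2: Complexity Monotonicity.} Next I would invoke the key theorem of Curticapean, Dell and Marx: evaluating a finite linear combination of pairwise non-isomorphic homomorphism counts is precisely as hard as evaluating the hardest individual summand with a non-zero coefficient. Concretely, the problem of computing $G \mapsto \sum_H a_{\Phi,k}(H) \hom(H \to G)$ admits a parameterized Turing reduction from $\hom(H^\ast \to \cdot)$ for each $H^\ast \in \mathcal{S}(\Phi,k)$, via interpolation on tensor products of $G$ with small gadgets. This turns a question about a potentially complicated linear combination into a question about the individual graphs in $\bigcup_k \mathcal{S}(\Phi,k)$.

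\textbf{Step 3: The homomorphism dichotomy.} Finally I would apply the Dalmau--Jonsson dichotomy: for a recursively enumerable class $\mathcal{H}$ of graphs, $\#\homsprob(\mathcal{H})$ is in $\mathrm{FPT}$ if $\mathcal{H}$ has bounded treewidth, and $\#\W{1}$-complete otherwise. Let $\mathcal{H}_\Phi := \bigcup_{k\geq 1} \mathcal{S}(\Phi,k)$. If $\mathcal{H}_\Phi$ has bounded treewidth, Step 2 together with the Dalmau--Jonsson algorithm yields an FPT algorithm for $\#\indsubsprob(\Phi)$ (note that the number of summands and the coefficients for a given $k$ only depend on $k$). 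If $\mathcal{H}_\Phi$ has unbounded treewidth, pick for each $w$ some $H^\ast_w \in \mathcal{H}_\Phi$ of treewidth at least $w$; by Step 2 each $\hom(H^\ast_w \to \cdot)$ reduces to $\#\indsubsprob(\Phi)$, and the hardness half of the dichotomy then gives $\#\W{1}$-completeness.

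\textbf{Main obstacle.} Steps 1 and 3 are essentially off-the-shelf. The real content lies in Step 2, namely establishing Complexity Monotonicity uniformly in $k$: one needs to argue that the reduction from a single homomorphism count $\hom(H^\ast \to G)$ to the linear combination can be carried out with an overhead that is FPT in $k$, even though both the number of summands and the coefficients grow with $k$. This requires the careful tensor-product/interpolation construction of Curticapean--Dell--Marx, controlling the size of the gadgets and the condition number of the resulting Vandermonde-type system in terms of $k$ alone. Once this is in place, the dichotomy follows purely from the structural dichotomy on $\mathcal{H}_\Phi$; no explicit description of the coefficients $a_{\Phi,k}(H)$ is needed, which is precisely why this classification is only \emph{implicit}.
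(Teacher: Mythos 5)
Your proposal is correct and follows exactly the route this result rests on: the paper itself states this classification as a citation to Curticapean--Dell--Marx without reproving it, but the machinery it assembles in the preliminaries (the homomorphism-basis expansion of $\#\indsubs{\Phi,k}{\star}$, Complexity Monotonicity, and the Dalmau--Jonsson/Marx dichotomy for $\#\homsprob(\mathcal{H})$) is precisely your three-step argument, with the FPT case from bounded treewidth of the support and $\#\W{1}$-hardness from the unbounded case via the monotonicity reduction. You also correctly locate the real technical content in the tensor-product/interpolation argument behind Complexity Monotonicity and the fact that no explicit knowledge of the coefficients is needed, which is exactly why the paper calls the classification implicit; the only caveat worth adding is that $\Phi$ must be computable so that the coefficients and the relevant support graphs can be found within FPT time.
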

While the previous classification provides a very strong result for the structural
complexity of $\#\indsubsprob(\Phi)$, it leaves open the question of the precise bound on
the function $g$. Furthermore, it is implicit in the sense that it does not reveal the
complexity of $\#\indsubsprob(\Phi)$ if a concrete property $\Phi$ is given. Nevertheless, the
technique introduced by Curticapean, Dell and Marx, which is now called \emph{Complexity
Monotonicity}, turned out to be the right approach for the treatment of
$\#\indsubsprob(\Phi)$. In particular, the subsequent results on $\#\indsubsprob(\Phi)$,
including the classification in this work, have been obtained by strong refinements of
\emph{Complexity Monotonicity}; we provide a brief introduction when we discuss the
techniques used in this paper.
More concretely, a superset of the authors established the following classifications for
edge-monotone properties in recent years~\cite{RothS18,DorflerRSW19}: The problem
$\#\indsubsprob(\Phi)$ is $\#\W{1}$-complete and, assuming ETH, cannot be solved in time
$f(k)\cdot |V(G)|^{o(k)}$ for any function $f$, if at least one of the following is
true:\footnote{We provide simplified statements; the formal and more general results can
be found in~\cite{RothS18,DorflerRSW19}.}
\begin{itemize}
    \item $\Phi$ is non-trivial, closed under the removal of edges and false on odd
        cycles.
	\item $\Phi$ is non-trivial on bipartite graphs and closed under the removal of edges.
\end{itemize}
While the second result completely answers the case of edge-monotone properties on
bipartite graphs, a general classification of edge-monotone properties is still unknown.

\subsection{Our Results}
We begin with monotone properties, that is, properties that are closed under taking
subgraphs. We classify those properties completely and explicitly; the following theorem
establishes hardness and an almost tight conditional lower bound.
\begin{mtheorem}\label{thm:monotone_refined_intro}
    Let $\Phi$ denote a monotone graph property.
    Suppose that for infinitely many~$k$ the property $\Phi$ is not $k$-trivial. Then
    $\#\indsubsprob(\Phi)$ is $\#\W{1}$-complete and cannot be solved in time
	$f(k)\cdot |V(G)|^{o(k/\sqrt{\log k})}$ for any function $f$, unless ETH fails.
    \lipicsEnd
\end{mtheorem}
In fact, we obtain a tight bound, that is, we can drop the factor of $1/\sqrt{\log
k}$ in the exponent, assuming the conjecture that ``you cannot beat treewidth''~\cite{Marx10}. The latter is an
important open problem in parameterized and fine-grained complexity theory asking for a
tight conditional lower bound for the problem of \emph{finding} a homomorphism from a
small graph $H$ to a large graph $G$: The best known algorithm for that problem runs in
time $\mathsf{poly}(|V(H)|) \cdot |V(G)|^{O(\mathsf{tw}(H))}$, where $\mathsf{tw}(H)$ is
the treewidth\footnote{We will only rely on treewidth in a black-box manner in this paper and thus refer the reader for instance to \cite[Chapter~7]{CyganFKLMPPS15} for a detailed treatment.} of $H$ (see for instance \cite{DiazST02,Marx10,CurticapeanDM17}), and the question is whether this running time is essentially optimal; we discuss the details later in the paper.

\noindent As a concrete example of a property that is classified by
\cref{thm:monotone_refined_intro}, but that was not classified before,
consider the (monotone) property of being $3$-colorable:
Recall that a $3$-coloring of a graph is a function mapping each vertex to one of three colors such that no two adjacent vertices are mapped to the same color. Clearly any subgraph of a graph $G$ admits a $3$-coloring if $G$ does.

Note that in \cref{thm:monotone_refined_intro},
the assumption of $\Phi$ not being $k$-trivial for infinitely many $k$ is
necessary, as otherwise the problem $\#\indsubsprob(\Phi)$ becomes trivial for all $k$
that are greater than a constant only depending on $\Phi$.

Note further, that $\#\W{1}$-completeness in \cref{thm:monotone_refined_intro} is not surprising,
as the decision version of $\#\indsubsprob(\Phi)$ was implicitly shown to be
$\W{1}$-complete by Khot and Raman~\cite{KhotR02}; $\W{1}$ is the decision version of $\#\W{1}$ and should be considered a parameterized equivalent of $\NP$. However, their reduction is not
parsimonious. Also, their proof uses Ramsey's Theorem and thus only yields an implicit
conditional lower bound of $f(k)\cdot |V(G)|^{o\left(\log k\right)}$, whereas our lower
bound is almost tight.

Our second result establishes an almost tight lower bound for sparse properties, that is,
properties $\Phi$ that admit a constant $s$ such that ever graph $H$ for which $\Phi$
holds has at most $s\cdot |V(H)|$ many edges. Furthermore, the bound can be made tight if
the set $\mathcal{K}(\Phi)$ of positive integers $k$ for which $\Phi$ is not $k$-trivial
is additionally \emph{dense}. By this we mean that there is a constant~$\ell$ such
that for every positive integer~$n$, there exists $n \leq k \leq \ell n$ such that $\Phi$
is not $k$-trivial. Note that density rules out artificial properties such as $\Phi(H)=1$
if and only if $H$ is an independent set and has precisely $2\uparrow n$ vertices for some
positive integer~$n$, with $2\uparrow n$ the $n$-fold exponential tower with
base~$2$. In particular, for every property $\Phi$ that is $k$-trivial only for finitely many
$k$, the set $\mathcal{K}(\Phi)$ is dense.
\begin{mtheorem}\label{thm:sparse_intro}
    Let $\Phi$ denote a sparse graph property such that $\Phi$ is not $k$-trivial for
    infinitely many~$k$. Then, $\#\indsubsprob(\Phi)$ is $\#\W{1}$-complete and cannot be solved in time $f(k)\cdot
    |V(G)|^{o\left(k/\log k\right)}$ for any function $f$,
    unless ETH fails.

    If $\mathcal{K}(\Phi)$ is additionally dense, then
    $\#\indsubsprob(\Phi)$ cannot
    be solved in time $f(k)\cdot |V(G)|^{o\left(k\right)}$ for any function $f$,
	unless ETH fails.\lipicsEnd
\end{mtheorem}
Our third result solves the open problem posed by Jerrum and Meeks by proving
that (a strengthened version of) \cref{conj:JM} is true.
\begin{mtheorem}\label{cor:number of edges_refined_intro}
    Let $\Phi$ denote a computable graph property that only depends on the number of edges
    of a graph. If $\Phi$ is not $k$-trivial for infinitely many $k$, then
    $\#\indsubsprob(\Phi)$ is $\#\W{1}$-complete and cannot be solved in time $f(k)\cdot
    |V(G)|^{o\left(k/\log k\right)}$ for any function $f$,
	unless ETH fails.

    If $\mathcal{K}(\Phi)$ is additionally dense, then
     $\#\indsubsprob(\Phi)$ cannot be solved in time $f(k)\cdot
     |V(G)|^{o(k/\sqrt{\log k})}$ for any function $f$,
	unless ETH fails.\lipicsEnd
\end{mtheorem}
Note that, similar to \cref{thm:monotone_refined_intro}, the conditional lower
bounds in the previous two theorems become tight, if ``you cannot beat
treewidth''~\cite{Marx10}; in particular, the condition of being dense can be removed in
that case.

Finally, we consider properties that are \emph{hereditary}, that is, closed under taking
\emph{induced} subgraphs. We obtain a criterion on such graph properties that, if
satisfied, yields a tight conditional lower bound for the complexity of
$\#\indsubsprob(\Phi)$. While the statement of the criterion is deferred to the technical
discussion, we can see that every hereditary property that is defined by a single
forbidden induced subgraph satisfies the criterion.
\begin{mtheorem}\label{thm:hereditary_intro}
    Let $H$ be a graph with at least $2$ vertices and let $\Phi$ denote the property of being
    $H$-free, that is, a graph satisfies $\Phi$ if and only if it does not contain $H$ as an induced subgraph.
    Then,
	$\#\indsubsprob(\Phi)$ is $\#\W{1}$-complete and cannot be solved in time
	$f(k)\cdot |V(G)|^{o(k)}$ for any function $f$, unless ETH fails.\lipicsEnd
\end{mtheorem}
Note that the case of $H$ being the graph with one vertex, which is excluded above, yields the property $\Phi$ which is false on all graphs $G$ with at least one vertex, for which $\#\indsubsprob(\Phi)$ is the constant zero-function and thus trivially solvable.
Hence, \cref{thm:hereditary_intro} establishes indeed a complete classification
for all properties $\Phi$=``$H$-free''.

\subsection{Technical Overview}
We rely on the framework of Complexity Monotonicity of computing linear combinations of
homomorphism counts~\cite{CurticapeanDM17}. More precisely, it is known that for every
computable graph property $\Phi$ and positive integer $k$, there exists a unique
computable function $a$ from graphs to rational numbers such that for all graphs $G$
\begin{equation}\label{eq:gmp_intro}
\#\indsubs{\Phi,k}{G} = \sum_H a(H) \cdot \#\homs{H}{G}\,,
\end{equation}
where $\#\indsubs{\Phi,k}{G}$ denotes the number of induced subgraphs of size $k$ in $G$
that satisfy $\Phi$, and $\#\homs{H}{G}$ denotes the number of graph homomorphisms from
$H$ to $G$. It is known that the function $a$ has finite support, that is, there is only a
finite number of graphs $H$ for which $a(H)\neq 0$.

Intuitively, Complexity Monotonicity states that computing a linear combination of
homomorphism counts is \emph{precisely} as hard as computing its hardest
term~\cite{CurticapeanDM17}. Furthermore, the complexity of computing the number of
homomorphisms from a small graph $H$ to a large graph $G$ is (almost) precisely understood
by the dichotomy result of Dalmau and Jonsson~\cite{DalmauJ04} and the conditional lower
bound under ETH due to Marx~\cite{Marx10}: Roughly speaking, it is possible to compute
$\#\homs{H}{G}$ efficiently if and only if $H$ has small treewidth. As a consequence, the
complexity of computing $\#\indsubs{\Phi,k}{G}$ is precisely determined by the support of the
function $a$. Unfortunately, determining the latter turned out to be an incredibly hard task: It was
shown in~\cite{RothS18} and~\cite{DorflerRSW19} that the function $a$ subsumes a variety
of algebraic and even topological invariants. As a concrete example, a subset of the
authors showed that for edge-monotone properties $\Phi$, the coefficient $a(K_k)$ of the
complete graph in \cref{eq:gmp_intro} is, up to a factor of $k!$, equal to the
reduced Euler characteristic of what is called the simplicial graph complex of $\Phi$ and
$k$~\cite{RothS18}. By this, a connection to Karp's Evasiveness
Conjecture\footnote{Intuitively, the Evasiveness Conjecture states that every decision
tree algorithm verifying a non-trivial edge-monotone graph property has to query every
edge of the input graph in the worst case~\cite{Rosenberg73,Miller13}.} was established.
In particular, it is known that $\Phi$ is evasive on $k$-vertex graphs if the reduced
Euler characteristic is non-zero~\cite{KahnSS84}. As a consequence, the coefficient
$a(K_k)$ can reveal a property to be evasive on $k$-vertex graphs if shown to be non-zero.

The previous example illustrated that identifying the support of the function $a$ in
\cref{eq:gmp_intro} is a hard task, but using the framework of Complexity
Monotonicity requires us to solve this task. In this work, we present a solution for
properties whose $f$-vectors (see below) have low Hamming weight: Given a property
$\Phi$ and a positive integer $k$, we define a $\binom{k}{2}+1$ dimensional vector
$f^{\Phi,k}$ by setting $f^{\Phi,k}_i$ to be the number of edge-subsets of size $i$ of the
complete graph with $k$ vertices such that the induced graph satisfies $\Phi$, that is,
\[f^{\Phi,k}_i := \#\{ A \subseteq E(K_k) ~|~\#A = i \wedge \Phi(K_k[A])=1 \} \]
for all $i=0,\dots,\binom{k}{2}$. By this, we lift the notion of $f$-vectors from abstract
simplicial complexes to graph properties; readers familiar with the latter will observe
that the $f$-vector of an edge-monotone property~$\Phi$ equals the $f$-vector of its
associated graph complex (see for instance \cite{Billera97}). Similarly, we introduce the notions
of $h$-vectors $h^{\Phi,k}$ and $f$-polynomials $\fpol_{\Phi,k}$ of graph properties,
defined as follows; we set $d= \binom{k}{2}$.
\[h^{\Phi,k}_\ell := \sum_{i=0}^\ell (-1)^{\ell-i} \cdot \binom{d - i}{\ell -i}\cdot
    f^{\Phi,k}_i, \text{ where }
    \ell\in\{0,\dots,d\};\quad\text{and}\quad\fpol_{\Phi,k}(x) :=
\sum_{i=0}^d f^{\Phi,k}_i \cdot x^{d-i}\!.\]
Our main combinatorial insight relates the function $a$ in \cref{eq:gmp_intro}
to the $h$-vector of $\Phi$. For the formal statement, we let $\mathcal{H}(\Phi,k,i)$
denote the set of all graphs $H$ with $k$ vertices and $i$ edges that satisfy $\Phi$.
We then show that for all $i =0,\dots, d$, we have
\[ k! \sum_{H \in \mathcal{H}(\Phi,k,i)} a(H) = h^{\Phi,k}_i\!.\]
In particular, the previous equation shows that there is a graph $H$ with $i$ edges
that survives with a non-zero coefficient $a(H)$ in \cref{eq:gmp_intro} whenever the
$i$-th entry of the $h$-vector $h^{\Phi,k}$ is non-zero. As graphs with many edges have
high treewidth, we can thus establish hardness of computing $\#\indsubs{\Phi,k}{G}$ by
proving that there is a non-zero entry with a high index in $h^{\Phi,k}$.
To this end, we relate $h^{\Phi,k}$ and $f^{\Phi,k}$ by observing that their entries
are evaluations of the derivatives of the $f$-polynomial $\fpol_{\Phi,k}(x)$. More
concretely, our goal is to show that a large amount of high-indexed zero entries of
$h^{\Phi,k}$ yields that the only polynomial of degree at most $d$ that satisfies the
constrains given by the evaluations of the derivatives is the zero polynomial. However,
the latter can only be true if $\Phi$ is trivially false on $k$-vertex graphs. Using
Hermite-Birkhoff interpolation and P\'olya's Theorem we are able to achieve this goal
whenever the Hamming weight of $f^{\Phi,k}$ is small. Our meta-theorem thus classifies the complexity of $\#\indsubsprob(\Phi)$ in terms of the Hamming weight of the $f$-vectors of $\Phi$.
\begin{mtheorem}\label{thm:main_general_intro}
    Let $\Phi$ denote a computable graph property and suppose that $\Phi$ is not
    $k$-trivial for infinitely many $k$. Let $\beta: \mathcal{K}(\Phi) \to \mathbb{Z}_{\ge
    0}$ denote the function that maps $k$ to $\binom{k}{2}-\mathsf{hw}(f^{\Phi,k})$.
	If $\beta(k)\in \omega(k)$ then the problem $\#\indsubsprob(\Phi)$ is
	$\#\W{1}$-complete and cannot be solved in time \[g(k)\cdot
    n^{o((\beta(k)/k)/(\log(\beta(k)/k)))}\] for any function $g$, unless
	ETH fails.\lipicsEnd
\end{mtheorem}
For the refined conditional lower bounds in case of monotone properties and properties for which the set $\mathcal{K}(\Phi)$ is dense (see
\cref{thm:monotone_refined_intro,thm:sparse_intro,cor:number of
edges_refined_intro}), we furthermore rely on a consequence of the
Kostochka-Thomason-Theorem~\cite{Kostochka84,Thomason01} that establishes a lower bound on
the size of the smallest clique-minors of graphs with many edges.

In contrast to the previous families of properties, we do not rely on the general
meta-theorem (\cref{thm:main_general_intro}) for our treatment of hereditary properties.
Instead, we carefully construct a
reduction from counting $k$-independent sets in bipartite graphs. Given a hereditary graph
property $\Phi$ defined by the (possibly infinite) set $\Gamma(\Phi)$ of forbidden induced
subgraphs, we say that $\Phi$ is \emph{critical} if there is a graph $H\in \Gamma(\Phi)$
and an edge $e$ of $H$ such that the graph obtained from $H$ by deleting $e$ and then
cloning the former endpoints of $e$ satisfies $\Phi$; the formal definition is provided in
\cref{sec:hereditary}. The reduction from counting $k$-independent sets in
bipartite graphs then yields the following result:
\begin{mtheorem}\label{thm:critical_hardness_intro}
	Let $\Phi$ denote a computable and critical hereditary graph property. Then
	$\#\indsubsprob(\Phi)$ is $\#\W{1}$-complete and cannot be solved in time
	$g(k)\cdot n^{o(k)}$ for any function $g$, unless ETH fails.\lipicsEnd
\end{mtheorem}
We then establish that every hereditary property with precisely one non-trivial forbidden
subgraph $H$ is critical, which yields \cref{thm:hereditary_intro}.

\subsection{Organization of the Paper}
We begin with providing all necessary technical background in \cref{sec:prelims}.
In particular, we introduce the most important notions in parameterized and
fine-grained complexity theory, as well as the principle of Hermite-Birkhoff
interpolation.

\noindent \Cref{sec:main_result} presents and proves our main combinatorial result which
relates the $f$-vectors and $h$-vectors of a computable graph property $\Phi$ on $k$-vertex
graphs to the coefficients in the associated linear combination of homomorphisms as given
by \cref{eq:gmp_intro}.

We derive the meta-theorem for the complexity classification of $\#\indsubsprob(\Phi)$
afterwards in \cref{sec:meta_theorem} and illustrate its applicability by
establishing new conditional lower bounds for properties that are monotone, that have low
edge-densities, and that depend only on the number of edges of a graph. Those bounds are
refined to match the statements of
\cref{thm:monotone_refined_intro,thm:sparse_intro} in \cref{sec:refined_bounds} by
combining our main combinatorial result with results
from extremal graph theory that relate the number of edges of a graph to the size of its
largest clique-minor.

Finally, we present our treatment of hereditary properties in
\cref{sec:hereditary}.

\subsection*{Acknowledgements}

We thank D{\'{a}}niel Marx for pointing out a proof of \cref{lem:hfree_critical} and Jacob Focke for helpful comments.

\section{Preliminaries}\label{sec:prelims}
Given a finite set $S$, we write $\# S$ and $|S|$ for the cardinality of $S$. Further, given a
non-negative integer~$r$, we set $[r]:=\{1,\dots,r\}$; in particular, we have
$[0]=\emptyset$.
The \emph{hamming weight} of a vector $f\in \mathbb{Q}^n$, denoted by
$\mathsf{hw}(f)$, is defined to be the number of non-zero entries of $f$.

\noindent Graphs in this work are simple and do not contain self-loops. Given a graph $G$, we
write $V(G)$ for the vertices and $E(G)$ for the edges of $G$. Furthermore, we define
$\mathcal{G}$ to be the set of all (isomorphism classes of) graphs. The \emph{complement}
$\overline{G}$ of a graph $G$ has vertices $V(G)$ and edges $\overline{E(G)}\setminus \{
\{v,v\}~|~v \in V(G)\}$.

Given a subset $\hat{E}$ of edges of a graph $G$, we write $G[\hat{E}]$ for the graph with
vertices $V(G)$ and edges~$\hat{E}$. Given a subset $\hat{V}$ of vertices of a graph $G$,
we write $G[\hat{V}]$ for the graph with vertices $\hat{V}$ and edges~$E(G)\cap\hat{V}^2$.
In particular, we say that $G[\hat{V}]$ is an \emph{induced subgraph} of $G$. Given graphs
$H$ and $G$, we define $\indsubs{H}{G}$ to be the set of all induced subgraphs of
$G$ that are isomorphic to $H$.

Given graphs $H$ and $G$, a \emph{homomorphism} from $H$ to $G$ is a function $\varphi:
V(H) \rightarrow V(G)$ such that $\{\varphi(u),\varphi(v)\} \in E(G)$ whenever $\{u,v\}
\in E(H)$. We write $\homs{H}{G}$ for the set of all homomorphisms from $H$ to $G$.
In particular, we write $\#\homs{H}{\star}$ for the function that maps a graph $G$ to
$\#\homs{H}{G}$. A bijective homomorphism from a graph $H$ to itself is an
\emph{automorphism} and we write $\auts{H}$ to denote the set of all automorphisms of $H$.

For a graph $H$, we define the {\em average degree} of $H$ as
\[d(H) := \frac{1}{|V(H)|}\cdot \sum_{v \in V(H)} \mathsf{deg}(v).\]
Further, we rely on the \emph{treewidth} of a graph, which is a graph parameter $\mathsf{tw}:
\mathcal{G} \rightarrow \mathbb{N}$. However, we only work with the treewidth in a black-box
manner, and thus we omit the definition and refer the interested reader to the literature,
(see for instance \cite[Chapter 7]{CyganFKLMPPS15}). In particular, we use the following
well-known result from extremal graph theory, which relates the treewidth of a graph $H$
to its average degree.
\begin{lemma}[Folklore, see for instance {\cite[Corollary~1]{ChandranS05}}]\label{lem:convenient}
    For any graph $H$ with average degree at least~$d$, we have $\mathsf{tw}(H)\geq \frac{d}{2}$.\lipicsEnd
\end{lemma}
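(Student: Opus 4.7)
The plan is to prove the slightly stronger statement that treewidth upper-bounds degeneracy, from which the average-degree bound follows by double counting. Let $t = \mathsf{tw}(H)$ and $n = |V(H)|$. I would first argue that $H$ is $t$-degenerate, meaning that every non-empty subgraph of $H$ contains a vertex of degree at most $t$.

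To establish degeneracy, I would fix an optimal tree decomposition $(T, \{B_x\}_{x \in V(T)})$ of $H$ of width $t$, so that $|B_x| \le t+1$ for every node $x$. I would then pick a leaf $x$ of $T$ with parent $y$; if $B_x \subseteq B_y$, the leaf can be pruned and we iterate, so we may assume there exists $v \in B_x \setminus B_y$. By the standard properties of tree decompositions, $v$ appears only in $B_x$, and consequently every neighbour of $v$ in $H$ must also lie in $B_x$. Hence $\deg_H(v) \le |B_x| - 1 \le t$. Removing $v$ from $H$ and from $B_x$ yields a graph $H'$ on $n-1$ vertices whose induced tree decomposition still has width at most $t$, so we may iterate the argument and inductively obtain an elimination order witnessing $t$-degeneracy.

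From $t$-degeneracy the edge bound is immediate: peeling off vertices $v_1, v_2, \ldots, v_n$ in an elimination order where each $v_i$ has at most $t$ neighbours among the not-yet-removed vertices gives
\[
|E(H)| \;=\; \sum_{i=1}^{n} \bigl|\{\,j > i : \{v_i, v_j\} \in E(H)\,\}\bigr| \;\le\; t \cdot n.
\]
Since by definition $|E(H)| = \tfrac{1}{2} \sum_{v \in V(H)} \deg(v) = \tfrac{1}{2} d(H) \cdot n$, we conclude $\tfrac{1}{2} d(H) \cdot n \le t \cdot n$, and therefore $\mathsf{tw}(H) = t \ge d(H)/2 \ge d/2$, as required.

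The only real subtlety is the selection of the vertex $v$ in a leaf bag: one has to use the property that a vertex appearing in a single bag of a tree decomposition has all its neighbours in that bag. Everything else is a straightforward induction and a counting step, so I would expect the argument to be short and self-contained, with this choice-of-leaf-bag argument being the only step worth spelling out in detail.
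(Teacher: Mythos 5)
Your proof is correct, and it is precisely the standard folklore argument behind this lemma (treewidth bounds degeneracy via a leaf bag with a private vertex, then double counting edges along the elimination order); the paper itself gives no proof, using the lemma purely as a black box with a citation to Chandran and Subramanian. The only detail you leave implicit is the terminal case where repeated pruning of redundant leaves reduces the tree to a single node, but then all of $H$ lies in one bag of size at most $t+1$ and any vertex has degree at most $t$, so the induction goes through.
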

Finally, we also rely on the following celebrated result from extremal graph theory:
\begin{theorem}[Tur\'an's Theorem, see for instance {\cite[Section 2.1]{Lovasz12}}]
    \label{thm:turan}
    A graph $H$ with more than $\left(1-\frac{1}{r}\right)\cdot \frac{1}{2}|V(H)|^2$ edges
    contains the clique $K_{r+1}$ as a subgraph.\lipicsEnd
\end{theorem}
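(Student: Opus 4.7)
The plan is to prove the contrapositive by strong induction on $n := |V(H)|$, showing that any $K_{r+1}$-free graph on $n$ vertices has at most $(1-1/r) \cdot n^2/2$ edges. For $n \le r$ I would note that $(1-1/r)\,n^2/2 \ge n(n-1)/2 = \binom{n}{2}$ in that regime (the inequality rearranges to $n \le r$), so no graph on $n$ vertices has strictly more edges and the theorem's hypothesis is vacuous.

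For the inductive step with $n > r$, let $H$ be $K_{r+1}$-free. First I would dispose, via a secondary induction on $r$ (the base $r = 1$ being trivial since more than $0$ edges immediately yields a $K_2$), of the subcase where $H$ happens to be $K_r$-free: here the strictly stronger bound $(1 - 1/(r-1))\,n^2/2 \le (1-1/r)\,n^2/2$ applies directly. Otherwise I fix any $r$-clique $K \subseteq V(H)$ and split $V(H) = K \sqcup (V(H) \setminus K)$. The induced subgraph $H' := H[V(H) \setminus K]$ remains $K_{r+1}$-free, so by the inductive hypothesis on $n$ one obtains $|E(H')| \le (1-1/r)(n-r)^2/2$. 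Moreover, every vertex outside $K$ has at most $r-1$ neighbors inside $K$, because otherwise together with $K$ it would complete a $K_{r+1}$; and there are exactly $\binom{r}{2}$ edges inside $K$.

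Adding these three counts yields
\[ |E(H)| \le (1-1/r)\frac{(n-r)^2}{2} + (n-r)(r-1) + \binom{r}{2}, \]
and the remaining step I would execute is purely algebraic: factor $(r-1)/(2r)$ out of the right-hand side and recognize the perfect square $(n-r)^2 + 2r(n-r) + r^2 = n^2$, which collapses the bound to $(1-1/r)\,n^2/2$. This contradicts the hypothesis $|E(H)| > (1-1/r)\,n^2/2$, completing the induction. I expect the only (mild) obstacle to be keeping the two inductions — on $n$ and on $r$ — properly interleaved; the algebraic identity at the end is the single place where the specific Turán-type constant enters, and it is short but essential. An alternative I would keep in reserve is Zykov symmetrization, which proves the stronger statement that the extremal graph is the balanced complete $r$-partite graph, but for the weaker bound stated here the clique-deletion induction above is the most direct route.
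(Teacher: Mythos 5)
The paper does not prove this statement at all---it is quoted as a classical result with a pointer to Lov\'asz's book---so there is no ``paper proof'' to compare against; what matters is whether your self-contained argument is sound, and it is. Your contrapositive statement (every $K_{r+1}$-free graph on $n$ vertices has at most $(1-\frac{1}{r})\frac{n^2}{2}$ edges) is exactly the weak form needed here, the base case computation $(1-\frac{1}{r})\frac{n^2}{2} \ge \binom{n}{2}$ for $n \le r$ is correct, and in the main step the three edge counts (at most $(1-\frac{1}{r})\frac{(n-r)^2}{2}$ inside $V(H)\setminus K$ by the inner induction, at most $r-1$ neighbours in $K$ for each outside vertex, and $\binom{r}{2}$ inside $K$) do collapse, after factoring out $\frac{r-1}{2r}$, via $(n-r)^2+2r(n-r)+r^2=n^2$ to exactly $(1-\frac{1}{r})\frac{n^2}{2}$, giving the contradiction. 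The double induction is not circular: in the $K_r$-free subcase you invoke the statement for $r-1$ with arbitrary $n$, and otherwise the statement for the same $r$ with smaller $n$, so a lexicographic induction on $(r,n)$ (outer on $r$, inner on $n$) makes the interleaving you worried about precise; the base $r=1$ is handled. One could alternatively avoid the $K_r$-free subcase by passing to an edge-maximal $K_{r+1}$-free supergraph, which necessarily contains a $K_r$ once $n>r$, but your version is equally valid and no stronger conclusion (such as the exact extremal bound $(1-\frac{1}{r})\frac{n^2}{2}$ being attained only by the Tur\'an graph, as Zykov symmetrization would give) is needed anywhere in the paper, which only uses the stated edge threshold to force a $K_{r+1}$ subgraph.
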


\paragraph*{Graph Properties}
A \emph{graph property} $\Phi$ is a function from graphs to $\{0,1\}$ such that
$\Phi(H)=\Phi(G)$ whenever $H$ and $G$ are isomorphic. We say that a graph $H$ \emph{satisfies} $\Phi$ if
$\Phi(H)=1$. Given a positive integer $k$ and a graph property~$\Phi$, we write~$\Phi_k$
for the set of all (isomorphism classes of) graphs with $k$ vertices that satisfy $\Phi$.
Furthermore, given a graph $G$, a positive integer $k$, and a graph property $\Phi$, we
write $\indsubs{\Phi,k}{G}$ for the set of all induced subgraphs with $k$ vertices of $G$
that satisfy $\Phi$. In particular, we write $\#\indsubs{\Phi,k}{\star}$ for the function
that maps a graph $G$ to $\#\indsubs{\Phi,k}{G}$.

Given a graph property $\Phi$, we define $\neg\Phi(H) = 1 :\Leftrightarrow \Phi(H) = 0$ as
the \emph{negation} of~$\Phi$.
Furthermore, we define $\overline{\Phi}(H)= 1 :\Leftrightarrow \Phi(\overline{H}) = 1$ as
the \emph{inverse} of~$\Phi$.\footnote{We omit using the word ``complement'' for graph
properties to avoid confusion on whether we mean $\neg\Phi$ or $\overline{\Phi}$.}
We observe the following identities:
\begin{fact}\label{fac:invariance}
    For every graph property $\Phi$, graph $G$ and positive integer $k$, we have
    \begin{align*}
        \#\indsubs{\neg\Phi,k}{G} &= \binom{|V(G)|}{k} - \#\indsubs{\Phi,k}{G}\,, \text{ and}\\
        \#\indsubs{\overline{\Phi},k}{G} &= \#\indsubs{\Phi,k}{\overline{G}} \,.
    \end{align*}
\end{fact}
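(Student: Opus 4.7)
The plan is to prove both identities by working with the natural bijection between $k$-vertex induced subgraphs of a graph $G$ and $k$-element subsets of $V(G)$: for every $S \in \binom{V(G)}{k}$ the induced subgraph $G[S]$ is a distinct element of $\indsubs{\star,k}{G}$, and conversely every $k$-vertex induced subgraph arises in this way from a unique vertex subset. Thus
\[\#\indsubs{\Phi,k}{G} = \#\{S \in \binom{V(G)}{k} \mid \Phi(G[S]) = 1\}\,,\]
and the total number of $k$-vertex induced subgraphs of $G$ equals $\binom{|V(G)|}{k}$.

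For the first identity, I would simply partition the collection $\binom{V(G)}{k}$ into those $S$ for which $\Phi(G[S]) = 1$ and those for which $\Phi(G[S]) = 0$. By definition of $\neg\Phi$ these two classes correspond precisely to $\indsubs{\Phi,k}{G}$ and $\indsubs{\neg\Phi,k}{G}$, respectively, so adding their cardinalities yields $\binom{|V(G)|}{k}$, which rearranges to the claimed equation.

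For the second identity, the key observation is that taking induced subgraphs commutes with taking complements: for any $S \subseteq V(G)$ we have $\overline{G[S]} = \overline{G}[S]$, since both graphs have vertex set $S$ and in both cases two distinct vertices $u,v \in S$ are adjacent if and only if $\{u,v\} \notin E(G)$. Unpacking the definition of $\overline{\Phi}$, this yields the chain of equivalences
\[\overline{\Phi}(G[S]) = 1 \iff \Phi(\overline{G[S]}) = 1 \iff \Phi(\overline{G}[S]) = 1\,.\]
Summing over $S \in \binom{V(G)}{k}$ and using that $V(G) = V(\overline{G})$ gives $\#\indsubs{\overline{\Phi},k}{G} = \#\indsubs{\Phi,k}{\overline{G}}$.

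Since everything reduces to the set-theoretic observation that induced subgraphs are indexed by vertex subsets and that complementation commutes with restriction to an induced subgraph, there is no genuine obstacle; the only point to be mindful of is that $\indsubs{\Phi,k}{G}$ is a set of (labeled) induced subgraphs rather than a set of isomorphism types, so that the counting $\sum_{S} 1 = \binom{|V(G)|}{k}$ really is tight. Given the short and essentially definitional nature of the argument, I would expect the entire proof to occupy only a few lines.
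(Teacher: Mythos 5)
Your proof is correct, and it takes a slightly different (and more self-contained) route than the paper. The paper handles the second identity by decomposing $\#\indsubs{\overline{\Phi},k}{G}$ as a sum over isomorphism types, writing it as $\sum_{H \in \overline{\Phi}_k} \#\indsubs{H}{G}$, reindexing via $H \in \overline{\Phi}_k \Leftrightarrow \overline{H} \in \Phi_k$, and then invoking the identity $\#\indsubs{H}{G} = \#\indsubs{\overline{H}}{\overline{G}}$, which it cites from Lov\'asz rather than proving. You instead work directly at the level of $k$-element vertex subsets $S \subseteq V(G)$ and prove the underlying commutation fact $\overline{G[S]} = \overline{G}[S]$ yourself, from which both the cited Lov\'asz identity and the statement follow immediately; your partition argument for the first identity matches what the paper dismisses as ``immediate.'' What your approach buys is a fully elementary, citation-free argument and an explicit handling of the labeled-versus-isomorphism-type distinction (which you rightly flag as the one point needing care, and which is consistent with the paper's definition of $\indsubs{\Phi,k}{G}$ as a set of induced subgraphs of $G$); what the paper's approach buys is brevity given the black-box identity and bookkeeping in terms of the sets $\Phi_k$, which is the notation used in the rest of the paper. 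The two arguments rest on the same underlying observation, so there is no gap either way.
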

\begin{proof}
    The first identity is immediate. For the second one, we observe that
    \begin{align*}
        \#\indsubs{\overline{\Phi},k}{G} &= \sum_{H \in \overline{\Phi}_k}
        \#\indsubs{H}{G} = \sum_{\overline{H} \in \Phi_k} \#\indsubs{H}{G}\\
        &= \sum_{\overline{H} \in \Phi_k} \#\indsubs{\overline{H}}{\overline{G}} =
        \#\indsubs{\Phi,k}{\overline{G}},
    \end{align*}
    where we use the equality $\#\indsubs{H}{G} = \#\indsubs{\overline{H}}{\overline{G}}$
    from \cite[Section~5.2.3]{Lovasz12}.
\end{proof}

\paragraph*{Fine-Grained and Parameterized Complexity Theory}
Given a computable graph property $\Phi$, the problem $\#\indsubsprob(\Phi)$ asks, given a
graph $G$ with $n$ vertices and a positive integer $k$, to compute
$\#\indsubs{\Phi,k}{G}$, that is, the number of induced subgraphs of size $k$ in $G$ that
satisfy $\Phi$. Note that the problem can be solved by brute-force in time $f(k)\cdot
O(n^k)$ by iterating over all subsets of $k$ vertices in $G$ and testing which of the
subsets induce a graph that satisfies~$\Phi$; the latter part takes time $f(k)$ for some
$f$ depending on $\Phi$.

As elaborated in the introduction, our goal is to understand the complexity of
$\#\indsubsprob(\Phi)$ for instances with small $k$ and large $n$. More precisely, we wish
to identify the best possible exponent of $n$ in the running time. To this end, we rely on
the frameworks of fine-grained and parameterized complexity theory. Regarding the former, we
prove conditional lower bounds based on the \emph{Exponential Time Hypothesis} due to
Impagliazzo and Paturi~\cite{ImpagliazzoP01}:

\begin{conjecture}[Exponential Time Hypothesis (ETH)]
    The problem $3$-$\textsc{SAT}$ cannot be solved in time $\mathsf{exp}(o(m))$, where $m$ is the
    number of clauses of the input formula.\lipicsEnd
\end{conjecture}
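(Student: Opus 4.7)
The final statement is the Exponential Time Hypothesis of Impagliazzo and Paturi, which is presented here as a conjecture and working hypothesis rather than as a theorem. Because any proof of ETH would in particular imply $\mathsf{P} \neq \NP$---a polynomial-time algorithm for $3$-$\textsc{SAT}$ would trivially run in time $\mathsf{exp}(o(m))$---no such proof is currently known. The plan I can offer is therefore not a genuine attack but a description of what any proof would have to accomplish, which I outline in the remaining paragraphs; the paper itself will just assume ETH and use it to derive conditional lower bounds for $\#\indsubsprob(\Phi)$.

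A natural strategy would be to exhibit an explicit family of $3$-CNF instances $(\varphi_m)_m$ of size $m$ together with an unconditional lower bound showing that no deterministic algorithm solves every instance in time $\mathsf{exp}(o(m))$. One would hope to establish this either by an information-theoretic or adversary argument (roughly: arrange the satisfying assignments so that any algorithm faster than $\mathsf{exp}(\Omega(m))$ would have to correctly certify satisfiability without looking at enough local structure to distinguish an unsatisfiable twin), or by transporting a known exponential lower bound from a restricted computational model---such as resolution, bounded-depth circuits, or read-once branching programs---to general Turing machines via a sufficiently efficient simulation. A second, more indirect approach would be to derive ETH from a still stronger but more structured conjecture (for example, a fine-grained variant of the PCP theorem) under which subexponential algorithms for $3$-$\textsc{SAT}$ can be ruled out by reduction.

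The main obstacle, and the reason ETH has remained open since its formulation, is that every such approach must simultaneously bypass the three classical barriers in complexity theory: relativization (Baker--Gill--Solovay), natural proofs (Razborov--Rudich), and algebrization (Aaronson--Wigderson). No presently available technique yields even super-polynomial unconditional lower bounds for $\NP$-hard problems on general deterministic machines, let alone the subexponential lower bound that ETH demands. Consequently, the hard part of any honest proof plan is the development of a genuinely new lower-bound method that evades all three barriers at once; this is well beyond the state of the art, which is exactly why the present paper treats ETH as an axiom for its fine-grained reductions rather than as something to be derived.
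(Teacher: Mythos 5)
You are right that this statement is the Exponential Time Hypothesis of Impagliazzo and Paturi, which the paper states as a conjecture and uses purely as an assumption for its conditional lower bounds; the paper offers no proof, and none is expected. Your assessment that any actual proof is far beyond current techniques matches the paper's treatment of ETH as an unproven hypothesis, so there is nothing further to compare.
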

Assuming ETH, we are able to prove that the exponent ($k$) of the brute-force
algorithm for $\#\indsubsprob(\Phi)$ cannot be improved significantly for non-trivial
monotone properties by establishing that no algorithm with a running time of $f(k)\cdot
|V(G)|^{o(k / \sqrt{\log k})}$ for any function $f$ exists.

\noindent In the language of parameterized complexity theory, our reductions also yield
$\#\W{1}$-completeness results, where $\#\W{1}$ should be considered the parameterized
counting equivalent of $\mathrm{NP}$; we provide a rough introduction in what follows and
refer the interested reader to references like \cite{CyganFKLMPPS15} and~\cite{FlumG04} for a detailed treatment.

A \emph{parameterized counting problem} is a pair of a function $P: \Sigma^\ast
\rightarrow \mathbb{N}$ and a computable parameterization $\kappa:\Sigma^\ast \rightarrow
\mathbb{N}$. Examples include the problems $\#\textsc{VertexCover}$ and
$\#\textsc{Clique}$ which ask, given a graph $G$ and a positive integer $k$, to compute
the number $P(G,k)$ of vertex covers or cliques, respectively, of size $k$. Both problems are parameterized by the solution size, that is $\kappa(G,k):=k$. Similarly, the problem
$\#\indsubsprob(\Phi)$ can be viewed as a parameterized counting problem when
parameterized by $\kappa(G,k):=k$; we implicitly assume this parameterization of
$\#\indsubsprob(\Phi)$ in the remainder of this paper.

A parameterized counting problem is called \emph{fixed-parameter tractable} (FPT) if there is a computable function $f$ such that the problem can be solved in time $f(\kappa(x))\cdot |x|^{O(1)}$, where $|x|$ is the input size.

Given two parameterized counting problems $(P,\kappa)$ and $(\hat{P},\hat{\kappa})$, a
\emph{parameterized Turing-reduction} from $(P,\kappa)$ to $(\hat{P},\hat{\kappa})$ is an
algorithm $\mathbb{A}$ that is given oracle access to $\hat{P}$ and, on input $x$,
computes $P(x)$ in time $f(\kappa(x))\cdot |x|^{O(1)}$ for some computable function $f$;
furthermore, the parameter $\kappa(y)$ of every oracle query posed by $\mathbb{A}$ must be
bounded by $g(\kappa(x))$ for some computable function $g$.

While $\#\textsc{VertexCover}$ is known to be fixed-parameter tractable~\cite{FlumG04},
$\#\textsc{Clique}$ is not fixed-parameter tractable, unless ETH
fails~\cite{Chenetal05,Chenetal06}. Moreover,
$\#\textsc{Clique}$ is the canonical complete problem for the parameterized complexity class
$\#\W{1}$, see~\cite{FlumG04}; in particular, we use the following definition of
$\#\W{1}$-completeness in this work.
\begin{definition}
    A parameterized counting problem is $\#\W{1}$-\emph{complete} if it is interreducible
    with $\#\textsc{Clique}$ with respect to parameterized Turing-reductions.
    \lipicsEnd
\end{definition}
Note that that the absence of an FPT algorithm for $\#\textsc{Clique}$ under ETH and the
definition of parameterized Turing-reductions yield that $\#\W{1}$-complete problems are
not fixed-parameter tractable, unless ETH fails, legitimizing the notion of
$\#\W{1}$-completeness as evidence for (fixed-parameter) intractability. Jerrum and Meeks~\cite{JerrumM15} have shown that $\#\indsubsprob(\Phi)$ reduces to $\#\textsc{Clique}$ for every computable property~$\Phi$ with respect to parameterized Turing-reductions. Thus we will only treat the ``hardness part'' of the $\#\W{1}$-completeness results in this paper.

The fine-grained and parameterized complexity of the homomorphism counting problem are
the foundation of the lower bounds established in this work:
Given a class of graphs $\mathcal{H}$, the problem $\#\homsprob(\mathcal{H})$ asks, on input a graph $H \in \mathcal{H}$ and an arbitrary graph $G$, to compute $\#\homs{H}{G}$. The
parameter is given by $|V(H)|$. The following classification shows that, roughly speaking,
the complexity of $\#\homsprob(\mathcal{H})$ is determined by the treewidth of the graphs
in $\mathcal{H}$.

\begin{theorem}[\cite{DalmauJ04,Marx10}]\label{thm:homsdicho}
    Let $\mathcal{H}$ denote a recursively enumerable class of graphs. If the treewidth of~$\mathcal{H}$ is bounded by a constant, then $\#\homsprob(\mathcal{H})$ is solvable in polynomial time. Otherwise, the problem is $\#\W{1}$-complete and cannot be solved in time
    \[f(|V(H)|)\cdot |V(G)|^{o\left(\frac{\mathsf{tw}(H)}{\log \mathsf{tw}(H)}\right)} \]
    for any function $f$, unless ETH fails.
\end{theorem}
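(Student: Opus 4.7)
The plan is to split the theorem into its tractability and hardness halves and prove them separately, since each half is due to different work and uses rather different techniques.

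For the tractable direction (Dalmau--Jonsson), I would use standard dynamic programming on a nice tree decomposition of $H$ of width $t = \mathsf{tw}(H)$. For each bag $B$ of the decomposition, maintain a table indexed by functions $\varphi\colon V(B) \to V(G)$ whose entry records the number of homomorphisms from the subgraph of $H$ built from all vertices processed so far into $G$ that agree with $\varphi$ on $B$. Introduce, forget, and join nodes can each be processed in time $|V(G)|^{O(t)}$ per bag, yielding a total running time of $f(|V(H)|)\cdot |V(G)|^{t+1}$. This is polynomial whenever $t$ is bounded by a constant.

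For the hardness direction (Marx), I would reduce from $\#\textsc{Clique}$, which by Chen et al.\ is $\#\W{1}$-hard and rules out running time $f(k)\cdot n^{o(k)}$ under ETH. The crucial tool is the excluded grid theorem: any graph of large treewidth contains a large grid-like minor. Given a target $H \in \mathcal{H}$ of treewidth $t$, one constructs a graph $G$ from a $k$-clique instance so that every potential $k$-clique corresponds, via a fixed grid-minor embedding of $H$, to a homomorphism from $H$ to $G$; a color-prescription gadget makes the construction essentially parsimonious, so $\#\homs{H}{G}$ reveals the number of $k$-cliques after a small amount of inclusion-exclusion over color classes. The $\log t$ loss in the exponent arises from the polynomial gap between treewidth and the side length of the largest grid minor currently guaranteed; eliminating this gap is the content of the ``you cannot beat treewidth'' conjecture.

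The main obstacle, by far, is implementing the Marx reduction with tight parameters: one must simultaneously ensure that the source $k$-clique instance has $k = \Omega(t/\log t)$, that the image graph $G$ has only polynomially many vertices in the source size, that the homomorphism counts track the clique counts faithfully, and that the target $H$ can be chosen from the given recursively enumerable class $\mathcal{H}$. Each of these ingredients requires a delicate combinatorial argument---embedding a high-density ``core'' of $H$ into grid-like structure, decoupling the colors via inclusion-exclusion to go from a decision to a counting reduction, and arguing about isomorphism types available in $\mathcal{H}$---whereas the tractability direction is essentially a routine dynamic-programming exercise.
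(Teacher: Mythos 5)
Your tractability half is fine (and matches what the paper delegates to Dalmau--Jonsson: standard dynamic programming over a tree decomposition), but your hardness half has a genuine gap. The paper does \emph{not} reprove Marx's theorem; it uses \cite[Corollary~6.2]{Marx10} as a black box: the stated lower bound already holds for \textsc{PartitionedSub}$(\mathcal{H})$, one observes that after deleting the edges of $G$ whose endpoints get non-adjacent colors the problem coincides with color-prescribed homomorphism counting (injectivity is automatic for color-preserving maps), and then color-prescribed $\homsprob$ reduces tightly to $\#\homsprob(\mathcal{H})$ by inclusion--exclusion or interpolation. Your plan instead re-derives Marx's result from scratch, and the route you sketch would fail: the excluded grid theorem is \emph{not} the tool behind the $\mathsf{tw}(H)/\log \mathsf{tw}(H)$ exponent. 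The gap between treewidth and the side length of the largest guaranteed grid minor is polynomial (and at the time of Marx's work far worse), so a reduction that embeds the clique instance into a grid minor of $H$ can only rule out running times of the form $f(|V(H)|)\cdot |V(G)|^{o(\mathsf{tw}(H)^{\delta})}$ for some small constant $\delta<1$, which is much weaker than the claimed bound. Marx obtains the almost-linear dependence on treewidth by a different mechanism: an embedding of arbitrary graphs with roughly $\mathsf{tw}(H)/\log \mathsf{tw}(H)$ edges into $H$ with bounded congestion, built from separator/multicommodity-flow arguments; the logarithmic loss comes from that embedding (flow--cut gap), not from grid minors. Consequently your closing remark is also off: the ``you cannot beat treewidth'' question asks whether this residual $\log$ factor in Marx's embedding-based bound can be removed, not whether the treewidth-versus-grid-minor gap can be closed.

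If you want a self-contained argument at the level of detail you aim for, the realistic options are either to quote Marx's Partitioned Subgraph Isomorphism lower bound and perform the short translation to $\#\homsprob(\mathcal{H})$ as the paper does (noting that the counting reduction from color-prescribed homomorphisms to plain homomorphism counts is where inclusion--exclusion/interpolation enters), or to actually reproduce Marx's embedding machinery, which is a substantial paper in itself and cannot be replaced by the grid-minor shortcut without losing the claimed exponent.
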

Note that the classification of $\#\homsprob(\mathcal{H})$ into polynomial-time
and $\#\W{1}$-complete cases is explicitly stated and proved in the work of Dalmau and
Jonson~\cite{DalmauJ04}. However, the conditional lower bound follows only implicitly by a
result of Marx~\cite{Marx10}. We provide a proof for completeness.
\pagebreak
\begin{proof}
    As discussed, we focus on the lower bound, which
    follows implicitly from a result of Marx~\cite{Marx10} on
    the complexity of \emph{Partitioned Subgraph Isomorphism}: The problem
    $\textsc{PartitionedSub}(\mathcal{H})$ asks, given a graph $H \in \mathcal{H}$, an
    arbitrary graph $G$ and a (not necessarily proper) vertex coloring $c:V(G)\rightarrow
    V(H)$, to \emph{decide} whether there is an injective homomorphism $\varphi$ from $H$
    to $G$ such that $c(\varphi(v)) = v$ for each vertex~$v$ of $H$.

    The result of Marx~\cite[Corollary~6.2]{Marx10} states that for every $\mathcal{H}$ of
    unbounded treewidth, the problem $\textsc{PartitionedSub}(\mathcal{H})$ cannot be
    solved in time
    \begin{equation}\label{eq:marx_bound}
        f(|V(H)|)\cdot |V(G)|^{o({\mathsf{tw}(H)}/{\log \mathsf{tw}(H)})}
    \end{equation}
    for any function $f$, unless ETH fails.
    Now suppose we are given a graph $H \in \mathcal{H}$, an arbitrary graph $G$ and a
    coloring $c:V(G)\rightarrow V(H)$. We wish to decide whether there is an
    injective homomorphism~$\varphi$ from~$H$ to~$G$ such that $c(\varphi(v)) = v$ holds for
    each vertex $v$ of~$H$. Note first, that we can drop the requirement of
    $\varphi$ being injective, as every homomorphism that preserves the coloring is
    injective. Note further, that without loss of generality,
    we can assume that~$c$ is a homomorphism from
    $G$ to $H$: Every edge $\{u,v\}$ of~$G$ such that $\{c(u),c(v)\} \notin E(H)$ is
    irrelevant for finding a homomorphism $\varphi$ from $H$ to $G$ that preserves the
    coloring~$c$. Hence, we can delete all of those edges from $G$. Thus, the
    problem $\textsc{PartitionedSub}(\mathcal{H})$ is equivalent to the problem
    $\textsc{cp}\text{-}\homsprob(\mathcal{H})$ which asks, given a graph $H \in
    \mathcal{H}$, an arbitrary graph~$G$, and a homomorphism $c \in \homs{G}{H}$, to decide
    whether there is a $\varphi \in \homs{H}{G}$ such that $c(\varphi(v))
    = v$ for each~$v\in V(H)$. Finally, it is known that (the counting version of)
    $\textsc{cp}\text{-}\homsprob(\mathcal{H})$ tightly reduces to
    $\#\homsprob(\mathcal{H})$ via the principle of inclusion and exclusion
    \cite[Lemma~2.52]{Roth19} or polynomial
    interpolation \cite[Section~3.2]{DellRW19icalp}.
    Thus the conditional lower bound
    in~\cref{eq:marx_bound} holds for $\#\homsprob(\mathcal{H})$ as well.
\end{proof}
The question whether the lower bound from \cref{thm:homsdicho}
can be strengthened to $f(|V(H)|)\cdot
|V(G)|^{o(\mathsf{tw}(H))}$ is known as ``Can you beat treewidth?'' and constitutes a major open problem in parameterized complexity theory and
an obstruction for tight conditional lower bounds on the complexity of a variety of
(parameterized) problems, see for instance
\cite{LokshtanovMS11,Curticapean15,CurticapeanM14,CurticapeanDM17}.

As described in the introduction, the complexity of computing a finite linear combination
of homomorphism counts is precisely determined by the complexity of computing the
non-vanishing terms. The formal statement is provided subsequently.
\begin{theorem}[Complexity Monotonicity~\cite{ChenM16,CurticapeanDM17}]\label{thm:monotonicity}
    Let $a:\mathcal{G} \rightarrow \mathbb{Q}$ denote a function of finite support and let $F$
    denote a graph such that $a(F) \neq 0$. There are a computable function $g$ and a
    deterministic algorithm $\mathbb{A}$ with oracle access to the
    function \[G \mapsto \sum_{H \in \mathcal{G}} a(H)\cdot \#\homs{H}{G},\]
    and which, given a graph $G$ with $n$ vertices, computes $\#\homs{F}{G}$ in time
    $g(a)\cdot n^c$, where $c$ is a constant independent of $a$. Furthermore, each queried
    graph has at most $g(a)\cdot n$ vertices.\lipicsEnd
\end{theorem}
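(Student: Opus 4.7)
The plan is to carry out the now-standard Complexity Monotonicity argument by combining \lovasz{}'s linear-independence theorem for homomorphism counts with the tensor (categorical) product of graphs. Let $H_1,\dots,H_s$ enumerate the finite support of $a$, say $H_1=F$, and write $L(G) := \sum_{i=1}^s a(H_i)\cdot\#\homs{H_i}{G}$ for the oracle function.

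First, I would construct test graphs $T_1,\dots,T_s$ such that the $s\times s$ matrix $M$ with entries $M_{ji} := \#\homs{H_i}{T_j}$ is invertible over $\mathbb{Q}$. Existence is guaranteed by \lovasz{}'s classical theorem that the functionals $G\mapsto\#\homs{H_i}{G}$ are linearly independent for pairwise non-isomorphic $H_i$. To obtain computability and a size bound $|V(T_j)|\le g(a)$, I enumerate finite graphs in increasing order of size and accept the first tuple for which $M$ has full rank; since this search depends only on $a$, all the work it requires is absorbed into $g(a)$.

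Next, I use the tensor-product identity $\#\homs{H}{G\times T} = \#\homs{H}{G}\cdot\#\homs{H}{T}$, valid for every graph $H$, and query the oracle at each of the $s$ graphs $G\times T_1,\dots,G\times T_s$. The $j$-th query returns
\[ L(G\times T_j) \;=\; \sum_{i=1}^s a(H_i)\cdot\#\homs{H_i}{G}\cdot M_{ji}. \]
Setting $x_i := a(H_i)\cdot\#\homs{H_i}{G}$, this is a linear system $M\mathbf{x}=\mathbf{L}$ with $\mathbf{L}_j := L(G\times T_j)$. Since $M$ is invertible, Gaussian elimination recovers $\mathbf{x}$, and I output $\#\homs{F}{G} = x_1/a(H_1)$. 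Each queried graph has at most $|V(G)|\cdot\max_j|V(T_j)|\le g(a)\cdot n$ vertices, and the total runtime is $g(a)\cdot n^c$, where $c$ bounds the cost of forming a tensor product.

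The main obstacle is the first step: producing the family $T_1,\dots,T_s$ in a way that is both constructive and has cost depending only on $a$. \lovasz{}'s theorem only supplies qualitative existence; the brute-force enumeration above closes this gap, and crucially it does so without blowing up the exponent $c$ in~$n$, since once the $T_j$ are fixed, all remaining operations (forming the $s$ tensor products, performing the $s$ oracle calls, and inverting a matrix of size depending only on $a$) are polynomial in $n$ with a constant exponent independent of~$a$.
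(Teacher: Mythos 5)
Your proposal is correct: the tensor-product trick $\#\homs{H}{G\times T} = \#\homs{H}{G}\cdot\#\homs{H}{T}$ combined with \lovasz{}'s linear-independence theorem to obtain an invertible test-graph matrix, followed by solving the resulting linear system, is exactly the argument behind Complexity Monotonicity in the works the paper cites (Chen--Mengel and Curticapean--Dell--Marx); the paper itself states the theorem as a black box without proof. Your handling of the constructive issue (brute-force search for the test graphs $T_1,\dots,T_s$, with cost charged to $g(a)$ and the exponent in $n$ left untouched) and of the query-size bound $|V(G\times T_j)|\le g(a)\cdot n$ matches the standard treatment, so there is nothing to add.
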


As observed by Curticapean, Dell and Marx~\cite{CurticapeanDM17}, counting induced
subgraphs of size $k$ that satisfy~$\Phi$ is equivalent to computing a finite linear
combination of homomorphism counts. Thus, the previous results yield an \emph{implicit}
dichotomy for $\#\indsubsprob(\Phi)$.

\begin{theorem}[\cite{CurticapeanDM17}]\label{thm:impl_dicho}
    Let $\Phi$ denote a computable graph property and let $k$ denote a positive integer.
    There is a unique and computable function $a:\mathcal{G} \rightarrow \mathbb{Q}$
    of finite support such that
    \[
        \#\indsubs{\Phi,k}{\star} = \sum_{H \in \mathcal{G}} a(H) \cdot \#\homs{H}{\star}.
    \]
    Furthermore, the problem $\#\indsubsprob(\Phi)$ is either fixed-parameter tractable or
    $\#\W{1}$-complete.\lipicsEnd
\end{theorem}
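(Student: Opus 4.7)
The plan is to express $\#\indsubs{\Phi,k}{\star}$ as a finite rational linear combination of homomorphism counts, and then to combine Complexity Monotonicity (Theorem~\ref{thm:monotonicity}) with the dichotomy for $\#\homsprob$ (Theorem~\ref{thm:homsdicho}). First I would decompose $\#\indsubs{\Phi,k}{G} = \sum_{H \in \Phi_k} \#\indsubs{H}{G}$, where $\Phi_k$ is finite. Each term equals $|\auts{H}|^{-1} \cdot \#\strembs{H}{G}$, and $\#\strembs{H}{G}$ expands by inclusion--exclusion over the non-edges of $H$ into a signed sum of embedding counts $\#\embs{H'}{G}$ with $H'$ a supergraph of $H$ on $V(H)$. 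Standard Möbius inversion on the partition lattice of $V(H)$ then writes each $\#\embs{H'}{G}$ as a finite rational combination of $\#\homs{H'/\sigma}{G}$. Composing these transformations yields $\#\indsubs{\Phi,k}{\star} = \sum_{H'} a(H') \cdot \#\homs{H'}{\star}$ with $a$ of finite support concentrated on graphs with at most $k$ vertices, and $a$ is computable from $\Phi$ and $k$. Uniqueness of $a$ follows from the classical theorem of Lovász that $\{\#\homs{H}{\star}\}_{H \in \mathcal{G}}$ is linearly independent over $\mathbb{Q}$; see \cite[Section~5]{Lovasz12}.

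For the dichotomy, I would set $\mathcal{H}(\Phi) := \bigcup_{k \geq 1} \mathsf{supp}(a_{\Phi,k})$, which is recursively enumerable since each $a_{\Phi,k}$ is computable. If $\mathsf{tw}(\mathcal{H}(\Phi))$ is bounded by a constant, then for every fixed $k$ the sum $\sum_{H'} a_{\Phi,k}(H') \cdot \#\homs{H'}{G}$ contains only boundedly-many-in-$k$ terms, each computable in polynomial time by the tractable part of Theorem~\ref{thm:homsdicho}; this places $\#\indsubsprob(\Phi)$ in $\mathrm{FPT}$. Otherwise, $\mathcal{H}(\Phi)$ has unbounded treewidth. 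Given an instance $(H', G^\ast)$ of $\#\homsprob(\mathcal{H}(\Phi))$, I would find by brute-force enumeration a value $k \leq |V(H')|$ with $a_{\Phi,k}(H') \neq 0$, and invoke Complexity Monotonicity (Theorem~\ref{thm:monotonicity}) on the linear combination associated to $(\Phi, k)$: using oracle access to the function $G' \mapsto \#\indsubs{\Phi,k}{G'}$---which is an $\#\indsubsprob(\Phi)$-oracle queried at parameter $k$---one recovers $\#\homs{H'}{G^\ast}$ in FPT time. This constitutes a parameterized Turing reduction from $\#\homsprob(\mathcal{H}(\Phi))$ to $\#\indsubsprob(\Phi)$; since the former is $\#\W{1}$-hard by the hard part of Theorem~\ref{thm:homsdicho}, and $\#\indsubsprob(\Phi)$ itself reduces to $\#\textsc{Clique}$ by Jerrum and Meeks as noted in the text, the latter is $\#\W{1}$-complete.

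The main obstacle is ensuring that the hardness reduction respects the parameterized framework: every query to the $\#\indsubsprob(\Phi)$-oracle must have parameter and instance size controlled by $|V(H')|$ and $|V(G^\ast)|$, respectively. Both conditions follow from Theorem~\ref{thm:monotonicity}, which guarantees that all queried graphs have size linear in $|V(G^\ast)|$; and the parameter blow-up is controlled because the support of $a_{\Phi,k}$ consists only of graphs on at most $k$ vertices, so a witness $k \leq |V(H')|$ always exists. Once these bookkeeping details are handled, the structural dichotomy is an immediate consequence of Theorems~\ref{thm:homsdicho} and~\ref{thm:monotonicity}.
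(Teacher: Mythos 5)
This theorem is quoted from Curticapean, Dell and Marx~\cite{CurticapeanDM17}; the paper does not reprove it, so your proposal can only be compared against the standard argument, parts of which the paper itself reuses inside \cref{clm:single_coef}. Your first half is exactly that argument and is fine: the decomposition over $\Phi_k$, the factor $\#\auts{H}^{-1}$ relating $\#\indsubs{H}{G}$ to $\#\strembs{H}{G}$, inclusion--exclusion over non-edges, M\"obius inversion over the partition lattice, uniqueness via \lovasz's linear independence of the functions $\#\homs{H}{\star}$, and the FPT upper bound when the support class has bounded treewidth are all correct.

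The gap is in your hardness direction. You claim that for an input $H'\in\mathcal{H}(\Phi)$ there is a witness $k\le|V(H')|$ with $a_{\Phi,k}(H')\neq0$ ``because the support of $a_{\Phi,k}$ consists of graphs on at most $k$ vertices''. That inequality points the wrong way: support graphs have at most $k$ vertices, so every witness satisfies $k\ge|V(H')|$, and the least witness can be far larger than $|V(H')|$ (for instance, if $\Phi$ is false on all graphs with fewer than $N$ vertices, then $a_{\Phi,k}\equiv 0$ for all $k<N$, yet small graphs enter $\mathcal{H}(\Phi)$ only via large $k$). Since the paper's definition of a parameterized Turing reduction requires the parameter of every oracle query (here the value $k$ at which you query $\#\indsubsprob(\Phi)$) and the running time of your upward search to be bounded by \emph{computable} functions of the input parameter $|V(H')|$, your reduction from all of $\#\homsprob(\mathcal{H}(\Phi))$ is not justified as stated. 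The standard repair --- and what both \cite{CurticapeanDM17} and this paper's own applications (e.g.\ the proof of \cref{thm:main_general}) do --- is to reduce from $\#\textsc{Clique}$ (equivalently, use the hardness machinery behind \cref{thm:homsdicho}) rather than from arbitrary instances of $\#\homsprob(\mathcal{H}(\Phi))$: given the clique parameter $\ell$, effectively enumerate pairs $(k,H)$ with $a_{\Phi,k}(H)\neq 0$ until one with sufficiently large treewidth is found; the search terminates because the support class has unbounded treewidth, the $k$ found is a computable function of $\ell$, and then \cref{thm:monotonicity} applied at that single $k$ gives the queries with controlled parameter. Alternatively one restricts attention to support graphs with exactly $k$ vertices, as the paper does, so that the query parameter coincides with $|V(H)|$. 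With that adjustment your argument goes through.
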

Note that the result on $\#\indsubsprob(\Phi)$ in the previous theorem does not concern
the fine-grained complexity of the problem. To reveal the latter, it is necessary to
understand the support of the function $a$; we tackle this task in detail in
\cref{sec:main_result}.

\paragraph*{$\bm{f}$-Vectors and $\bm{h}$-Vectors}
It was observed in~\cite{RothS18} that there is a close connection between the structure
of the simplicial graph complex of edge-monotone properties $\Phi$ and the complexity of
$\#\indsubsprob(\Phi)$. In this work, we generalize two important topological
invariants of simplicial complexes to arbitrary graph properties: The $f$-vector and the
$h$-vector.

\begin{definition}\label{def:fhvectors}
    Let $\Phi$ denote a graph property, let $k$ denote a positive integer and set $d=\binom{k}{2}$.
    The $f$\emph{-vector} $f^{\Phi,k} = (f^{\Phi,k}_i)_{i=0}^d$ of $\Phi$ and $k$ is defined by
    \[f^{\Phi,k}_i := \#\{ A \subseteq E(K_k) ~|~\#A = i \wedge \Phi(K_k[A])=1 \}\,,
    \text{ where } i\in\{0,\dots,d\}, \]
    that is, $f^{\Phi,k}_i$ is the number of edge-subsets of size $i$ of $K_k$ such that
    the induced graph satisfies~$\Phi$.

    The $h$\emph{-vector} $h^{\Phi,k} = (h^{\Phi,k}_\ell)_{\ell=0}^d$  is defined by
    \[h^{\Phi,k}_\ell := \sum_{i=0}^\ell (-1)^{\ell-i} \cdot \binom{d - i}{\ell -i}\cdot
    f^{\Phi,k}_i\,, \text{ where } \ell\in\{0,\dots,d\}.\lipicsEnd\]
\end{definition}
As mentioned before, note that those notions of $f$ and $h$-vectors correspond
to the eponymous notions for simplicial (graph) complexes.\footnote{In some parts of the
literature, the $f$-vector comes with an index shift of $-1$ due to the topological
interpretation of simplicial complexes.} We omit the definition of the latter as we are
only concerned with the generalized notions and refer the interested reader e.g. to \cite{Billera97}.


It turns out that the \emph{non-vanishing} of suitable entries $h^{\Phi,k}_\ell$ of the
$h$-vector implies hardness for $\#\indsubsprob(\Phi)$. The result in~\cite{RothS18} can
be considered as a very restricted special case as it shows that the non-vanishing of the reduced
Euler characteristic of the complex (which is equal to the entry $h^{\Phi,k}_d$) implies
hardness. On the other hand, for many graph properties it is easy to deduce information
about the $f$-vector (for instance that $f^{\Phi,k}_\ell=0$ for sufficiently large $\ell$ with
respect to $k$). We observe that the $f$ and $h$-vectors of a graph property are
related by the so-called the $f$-polynomial which is again a
generalization of the epynomous notion for simplicial complexes:

\begin{definition}
    Let $\Phi$ denote a graph property, let $k$ denote a positive integer and set $d=\binom{k}{2}$.
    The $f$\emph{-polynomial} of~$\Phi$ and~$k$ is a univariate polynomial of degree at most $d$
    defined as follows:
    \[ \fpol_{\Phi,k}(x) := \sum_{i=0}^d f^{\Phi,k}_i \cdot x^{d-i}\!.\lipicsEnd\]
\end{definition}
As we see in the proof of  Lemma \ref{lem:birkhoff_interpol}, the entries of the $f$ and $h$-vectors are given up to combinatorial factors by derivatives of the $f$-polynomial at $0$ and $-1$. Intuitively, we apply Hermite-Birkhoff interpolation on $\fpol_{\Phi,k}$ and
its derivatives to prove that specific entries of $h^{\Phi,k}$ cannot vanish in case a
sufficient number of entries of~$f^{\Phi,k}$ do, unless $\Phi$ is trivially false on
$k$-vertex graphs.

\paragraph*{Hermite-Birkhoff Interpolation and P\'olya's Theorem}
While a univariate polynomial of degree $d$ is uniquely determined by $d+1$ evaluations in
pairwise different points, the problem of \emph{Hermite-Birkhoff interpolation} asks under
which conditions we can uniquely recover the polynomial if we instead impose conditions on
the derivatives of the polynomial at $m$ distinct points.
Following the notation of Schoenberg~\cite{Schoenberg66}, the problem is formally
expressed as follows.
Given a matrix $E=(\varepsilon_{ij})\in\{0,1\}^{m \times d+1}$ where $i \in \{1,\dots,m\}$
and $j \in \{0,\dots d\}$, as well as reals~$x_1 < \dots < x_m$, the goal is to find a
polynomial~$\fpol$ of degree at most $d$ such that for all $i$ and $j$ with $\varepsilon_{ij}
= 1$ we have \[\fpol^{(j)}(x_i) = 0\]
Here, $\fpol^{(j)}$ denotes the $j$-th derivative of $\fpol$. In particular, we are
interested under which conditions on the matrix $E$, the zero polynomial is the
\emph{unique} solution. In this case, $E$ is called \emph{poised}. It turns out that
the case $m=2$ is sufficient for our purposes; fortunately, this case was fully solved by
P\'olya:

\begin{theorem}[P\'olya's Theorem~\cite{Polya31,Schoenberg66}]\label{thm:polya}
    Let $E$ be defined as above with $m=2$. Suppose that $\sum_{i,j}\varepsilon_{ij} = d+1$
    and for every $j \in \{0,\dots,d\}$ set
    \[M_j := \sum_{i=0}^j \varepsilon_{1,i} + \varepsilon_{2,i}. \]
    Then, $E$ is poised if and only if $M_j \geq j+1$ holds true for all $j \in
    \{0,\dots,d-1\}$.\lipicsEnd
\end{theorem}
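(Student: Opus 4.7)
The plan is to prove the two implications separately: sufficiency via a classical zero-counting induction using Rolle's theorem, and necessity via a short dimension count obtained by restricting attention to polynomials of small degree.

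For sufficiency, I assume $M_j \geq j+1$ for every $j \in \{0,\dots,d-1\}$ and suppose for contradiction that some nonzero polynomial $\fpol$ of degree at most $d$ satisfies $\fpol^{(j)}(x_i) = 0$ whenever $\varepsilon_{ij} = 1$. I would prove by induction on $j$ that $\fpol^{(j)}$ has at least $M_j - j$ distinct zeros in $[x_1,x_2]$. The base case $j = 0$ is immediate from the $M_0$ prescribed vanishings at $\{x_1,x_2\}$. In the inductive step, Rolle's theorem applied between consecutive zeros of $\fpol^{(j-1)}$ produces at least $M_{j-1} - j$ zeros of $\fpol^{(j)}$ strictly inside $(x_1,x_2)$, and these combine with the $\varepsilon_{1,j} + \varepsilon_{2,j}$ Birkhoff zeros prescribed at the endpoints (automatically disjoint from the interior Rolle zeros) to give $M_j - j$ distinct zeros in total. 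Applying the claim at $j = d$ forces the constant $\fpol^{(d)}$ to have at least $M_d - d = 1$ zero, hence to vanish, so $\deg \fpol \leq d-1$; applying the claim at $j = d-1$ and using $M_{d-1} \geq d$ likewise forces $\fpol^{(d-1)} \equiv 0$, and iterating this descent through $j = d-2, \dots, 0$ yields $\fpol \equiv 0$, the desired contradiction.

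For necessity, I assume $M_{j^*} \leq j^*$ for some $j^* \in \{0,\dots,d-1\}$ and observe that any polynomial $\fpol$ of degree at most $j^*$ automatically satisfies $\fpol^{(j)} \equiv 0$ for every $j > j^*$, so all ``upper'' Birkhoff conditions (those with $j > j^*$) are trivially fulfilled. It therefore suffices to exhibit a nonzero polynomial of degree at most $j^*$ satisfying only the $M_{j^*}$ ``lower'' conditions. Viewing these lower conditions as a linear map from the $(j^*+1)$-dimensional space of polynomials of degree at most $j^*$ into $\mathbb{R}^{M_{j^*}}$, its kernel has dimension at least $(j^* + 1) - M_{j^*} \geq 1$ by the failure of the Pólya condition, and any nonzero element of this kernel witnesses that $E$ is not poised.

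The main obstacle lies in the sufficiency direction, specifically in verifying that the interior Rolle zeros and the endpoint Birkhoff zeros remain disjoint across the inductive step (they do, since Rolle's theorem yields strictly interior zeros) and that the degree-reducing descent at the end correctly invokes the Pólya inequality at each index $j \in \{0,\dots,d-1\}$, not only at $j = d-1$. Once these bookkeeping details are carefully handled, both directions become routine.
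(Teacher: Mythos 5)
The paper does not prove this statement at all --- it is imported as a classical result with a citation to P\'olya and Schoenberg --- so there is no in-paper argument to compare against; your proposal has to stand on its own, and it does. Both directions are correct and are the standard treatment of the two-node Birkhoff problem: the sufficiency induction (``$\fpol^{(j)}$ has at least $M_j - j$ distinct zeros in $[x_1,x_2]$'') is sound, since the Rolle zeros produced between consecutive zeros of $\fpol^{(j-1)}$ lie strictly inside $(x_1,x_2)$ and hence cannot collide with the prescribed endpoint conditions of order $j$, and the final descent correctly uses $M_d = d+1$ (the normalization $\sum_{i,j}\varepsilon_{ij}=d+1$) at $j=d$ and the P\'olya inequalities $M_j \geq j+1$ at each $j \leq d-1$ to force $\fpol^{(j)} \equiv 0$ step by step. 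The necessity direction via the kernel of the linear map on polynomials of degree at most $j^*$, using that all conditions of order exceeding $j^*$ are vacuous there and that $(j^*+1) - M_{j^*} \geq 1$, is likewise complete. One cosmetic remark: in the inductive step it is worth stating explicitly that when $M_{j-1}-(j-1) \leq 1$ Rolle contributes nothing but the claimed bound $M_j - j$ still holds trivially, so the induction needs no appeal to the P\'olya condition; the condition is only used in the terminal descent, exactly as you indicate.
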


\section{Homomorphism Vectors of Graph Properties}\label{sec:main_result}

In this section we discuss and prove our main technical result:
\begin{restatable}{theorem}{mnthmcom}\label{thm:main_theorem_combinatorial}
    Let $\Phi$ denote a computable graph property, let $k$ denote a positive integer,
    and let~$w$ denote the Hamming weight of the $f$-vector $f^{\Phi,k}$.
    Suppose that $\Phi$ is not trivially false on $k$-vertex graphs.
    Then there is a unique and computable function
    $a:\mathcal{G} \rightarrow \mathbb{Q}$ of finite support such that
    \[\#\indsubs{\Phi,k}{\star} = \sum_{H \in \mathcal{G}} a(H) \cdot \#\homs{H}{\star},\]
    satisfying that there is a graph $K$ on $k$ vertices and at least $\binom{k}{2}-w+1$
edges such that $a(K) \neq 0$.\ifx\mthmrst\undefined{\lipicsEnd}\fi
\end{restatable}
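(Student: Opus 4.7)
The existence and uniqueness of $a$ already follow from \cref{thm:impl_dicho}, so the task reduces to exhibiting a $k$-vertex graph $K$ with $a(K)\neq 0$ and $|E(K)|\geq \binom{k}{2}-w+1$. Writing $d=\binom{k}{2}$, the plan is to first establish the combinatorial identity
\[k! \sum_{\substack{H:\,|V(H)|=k\\ |E(H)|=i}} a(H) \;=\; h^{\Phi,k}_i \qquad\text{for every }0\leq i\leq d,\]
which I will refer to as $(\star)$, and then to invoke Hermite--Birkhoff interpolation via P\'olya's theorem (\cref{thm:polya}) to produce a non-vanishing high-index entry $h^{\Phi,k}_\ell$, from which $(\star)$ supplies the desired $K$.

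To prove $(\star)$, I start from the decomposition $k!\cdot \#\indsubs{\Phi,k}{G}=\sum_{H\text{ labeled on }[k],\,\Phi(H)=1}\#\strembs{H}{G}$ obtained by collecting labeled representatives of each isomorphism class in $\Phi_k$. M\"obius inversion on the subset lattice of $E(K_k)$ then rewrites each $\#\strembs{H}{G}$ as $\sum_{F\supseteq H,\,V(F)=[k]}(-1)^{|E(F)|-|E(H)|}\#\embs{F}{G}$, and expanding $\#\embs{F}{G}$ via the standard identity over the admissible-partition lattice of $V(F)$ produces a linear combination of $\#\homs{F'}{\star}$. Since only the discrete partition leaves a quotient on $k$ vertices, the coefficient $a(F)$ for $|V(F)|=k$ equals $\alpha(F)/|\auts{F}|$ with
\[\alpha(F) \;=\; \sum_{A\subseteq E(F):\,\Phi(K_k[A])=1}(-1)^{|E(F)|-|A|}.\]
Summing $k!\cdot a(F)$ over all isomorphism classes of $k$-vertex graphs with $i$ edges yields $\sum_{F\text{ labeled on }[k]:\,|E(F)|=i}\alpha(F)$; swapping the order of summation and counting how many size-$i$ edge-sets contain a fixed $A$ via $\binom{d-|A|}{i-|A|}$ gives $\sum_{j=0}^i(-1)^{i-j}\binom{d-j}{i-j}f^{\Phi,k}_j=h^{\Phi,k}_i$, which is $(\star)$.

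For the second step, I exploit the identity $\fpol_{\Phi,k}(x-1)=\sum_\ell h^{\Phi,k}_\ell\,x^{d-\ell}$, which gives $\fpol^{(j)}_{\Phi,k}(0)=j!\,f^{\Phi,k}_{d-j}$ and $\fpol^{(j)}_{\Phi,k}(-1)=j!\,h^{\Phi,k}_{d-j}$. Arguing by contradiction, I suppose $h^{\Phi,k}_\ell=0$ for every $\ell>d-w$. Then $\fpol_{\Phi,k}$, a polynomial of degree at most $d$, satisfies exactly $d+1$ derivative-vanishing conditions: $d+1-w$ at $x=0$ (one per zero entry of $f^{\Phi,k}$) together with $w$ at $x=-1$ (orders $0,\dots,w-1$). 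Encoding these as the Birkhoff matrix $E$ with $x_1=-1$ and $x_2=0$, I verify P\'olya's cumulative condition $M_j\geq j+1$ for all $j\in\{0,\dots,d-1\}$: for $j\leq w-1$ the row at $x=-1$ alone supplies $j+1$; for $j\geq w$ that row supplies $w$, while the $d+1-w$ zeros of $f^{\Phi,k}$ can occupy at most $d-j$ positions strictly below index $d-j$, leaving at least $j+1-w$ among the top indices, so $M_j\geq w+(j+1-w)=j+1$. Hence $E$ is poised by \cref{thm:polya}, forcing $\fpol_{\Phi,k}\equiv 0$ and contradicting the assumption that $\Phi$ is not trivially false on $k$-vertex graphs. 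Thus some $h^{\Phi,k}_\ell\neq 0$ with $\ell\geq d-w+1$ exists, and $(\star)$ provides the required graph $K$.

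The main obstacle is the bookkeeping in the first step: one must chain two M\"obius inversions (one over edge-subset lattices, one over partition lattices) while isolating the observation that only the discrete partition can produce $k$-vertex targets, so that the coefficients $a(F)$ with $|V(F)|=k$ close into the inclusion-exclusion sum defining $h^{\Phi,k}_i$. The P\'olya application, by contrast, reduces to a clean arithmetic check once the $f$- and $h$-vectors are recognised as derivatives of $\fpol_{\Phi,k}$ at $0$ and $-1$, and the bound $\binom{k}{2}-w+1$ appears precisely because the two sources of vanishing derivatives together exhaust the degrees of freedom of $\fpol_{\Phi,k}$.
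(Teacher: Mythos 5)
Your proposal is correct and follows essentially the same route as the paper: your identity $(\star)$ is exactly the paper's \cref{lem:coef_sums} (you just keep the bookkeeping labeled throughout instead of invoking the Orbit--Stabilizer theorem at the end), and your Hermite--Birkhoff step mirrors \cref{lem:birkhoff_interpol}, with the shift identity $\fpol_{\Phi,k}(x-1)=\sum_\ell h^{\Phi,k}_\ell x^{d-\ell}$ replacing the paper's direct computation of $\fpol^{(j)}(-1)$ and your exact count of ones making the paper's trimming to $\hat{E}$ unnecessary. No gaps.
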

\def\mthmrst{1}
\noindent First, recall from \cref{thm:impl_dicho} that for any computable graph property $\Phi$ and
positive integer $k$,
there is a unique computable function $a:\mathcal{G} \rightarrow \mathbb{Q}$
(with finite support) satisfying
\begin{equation}\label{eq:indsubgmp}
\#\indsubs{\Phi,k}{\star} = \sum_{H \in \mathcal{G}} a(H) \cdot \#\homs{H}{\star}.
\end{equation}
Now, for the remainder of the section, fix a (computable) graph property $\Phi$ and
a positive integer $k$ (and thus the function $a$).
This allows us to simplify the notation for the $f$ and $h$-vectors, as well as
for the $f$-polynomial: We write $f := f^{\Phi, k}\!$, $h := h^{\Phi, k}\!$, and $\fpol :=
\fpol_{\Phi, k}$.
Furthermore, we set $d := \binom{k}{2}$ and we write~$\mathcal{H}_i$ for the set of all graphs
on $k$ vertices and with $i$ edges.

Next, we define the vector $\tilde{h}_i$ as \[
    \tilde{h}_i := \sum_{K \in \mathcal{H}_i} a(K) \,, \text{ where } i\in\{0,\dots,d\},
\] that is, the $i$-th entry of $\tilde{h}$ is the sum of the coefficients of graphs with
$k$ vertices and $i$ edges in~\cref{eq:indsubgmp}.
Now we  establish the aforementioned connection between the coefficients
of~\cref{eq:indsubgmp} and the $h$-vector of the property~$\Phi$.
\begin{lemma}\label{lem:coef_sums}
    We have $k! \cdot \tilde{h} = h$.
\end{lemma}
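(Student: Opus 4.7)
The plan is to prove the entrywise identity $k!\cdot \tilde h_i = h_i$ for each $i \in \{0,\dots,d\}$ in two stages: first, derive an explicit closed form for $a(F)$ whenever $|V(F)|=k$; second, sum this formula over $\mathcal{H}_i$ and rearrange to recover $h_i$.

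For the first stage, I would pass to labelled counts on the vertex set $[k]$. Writing $\#\strembs{J}{G} = |\auts{J}|\cdot\#\indsubs{J}{G}$ for the labelled induced-subgraph count, and using that each isomorphism class has $k!/|\auts{J}|$ labellings on $[k]$, I obtain
\[
k!\cdot\#\indsubs{\Phi,k}{G} \;=\; \sum_{J \text{ labelled on }[k],\,J\in\Phi} \#\strembs{J}{G}.
\]
Möbius inversion on the supergraph lattice of $J$ on $[k]$ gives $\#\strembs{J}{G} = \sum_{J'\supseteq J}(-1)^{|E(J')|-|E(J)|}\#\embs{J'}{G}$. Exchanging sums yields
\[
k!\cdot\#\indsubs{\Phi,k}{G} \;=\; \sum_{J' \text{ labelled on }[k]} c(J')\cdot\#\embs{J'}{G}, \qquad c(J') := \sum_{E''\subseteq E(J')}(-1)^{|E(J')|-|E''|}\,[\Phi(K_k[E''])=1],
\]
with $c$ manifestly isomorphism-invariant. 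Expanding each embedding count via the partition lattice, $\#\embs{J'}{G} = \sum_\sigma \mu(\hat 0,\sigma)\#\homs{J'/\sigma}{G}$, only the discrete partition $\sigma = \hat 0$ produces a quotient with $k$ vertices. Matching coefficients of $\#\homs{F}{G}$ for $|V(F)|=k$ and using uniqueness of the hom-expansion (\cref{thm:impl_dicho}) then yields $a(F) = c(F)/|\auts{F}|$.

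For the second stage, combining this closed form with the isomorphism-invariance of $c$ (each class $F \in \mathcal{H}_i$ contributes $k!/|\auts{F}|$ labelled copies) gives
\[
k!\cdot\tilde h_i \;=\; k!\sum_{F\in\mathcal{H}_i}\frac{c(F)}{|\auts{F}|} \;=\; \sum_{F \text{ labelled on } [k],\,|E(F)|=i} c(F).
\]
Substituting the definition of $c$ and swapping the order of summation—using that the number of labelled graphs $F$ on $[k]$ with $|E(F)| = i$ containing a fixed $E'' \subseteq E(K_k)$ equals $\binom{d-|E''|}{i-|E''|}$—this becomes $\sum_{E''\subseteq E(K_k)}[\Phi(K_k[E''])=1]\,(-1)^{i-|E''|}\binom{d-|E''|}{i-|E''|}$, which, grouped by $j := |E''|$, is exactly $h_i = \sum_{j=0}^i(-1)^{i-j}\binom{d-j}{i-j}f_j$. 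The main obstacle is the labelled-versus-unlabelled bookkeeping in stage one, specifically the justification that only the ``diagonal'' pair $(J',\sigma)=(F,\hat 0)$ contributes to the coefficient of $\#\homs{F}{\star}$ with $|V(F)|=k$: the labelled-subgraph sum ranges over graphs on $[k]$, and any non-discrete partition of a labelled $k$-vertex graph yields a quotient on strictly fewer vertices. Everything else is routine—two Möbius inversions, the partition-lattice expansion of $\#\embs{\cdot}{\cdot}$, and rearrangement of finite sums.
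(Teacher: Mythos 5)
Your proposal is correct and follows essentially the same route as the paper's proof: you derive the coefficient formula for $k$-vertex graphs via the chain induced subgraphs $\to$ strong embeddings $\to$ embeddings $\to$ homomorphisms, observe that only the discrete partition keeps $k$ vertices, and then use the binomial count of edge-extensions to regroup the sum into $h_i$. The only difference is cosmetic bookkeeping—you work with labelled graphs on $[k]$ from the start, whereas the paper sums over isomorphism classes and converts via the Orbit--Stabilizer Theorem at the end—so the two arguments are the same in substance.
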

Note that as a consequence, the $h$-vector of a simplicial graph complex
is determined by the coefficients of its associated linear combination of homomorphisms.
\begin{proof}
    Given two graphs $H$ and $H'$ on $k$ vertices each, we write $\#\{H'\supseteq H\}$ for the
    number of possibilities of adding edges to $H$ such that (a graph isomorphic to) $H'$
    is obtained. We start with the following claim which was implicitly shown
    in~\cite{RothS18}; we include a proof for completeness.
    \begin{claim}\label{clm:single_coef}
        Let $K$ denote a graph with $k$ vertices and define $a$ as in~\cref{eq:indsubgmp}.
        We have
        \[a(K) = \sum_{H \in \Phi_k}\#\auts{H}^{-1}\cdot (-1)^{\#E(K)-\#E(H)} \cdot \#\{K
        \supseteq H\}. \]
    \end{claim}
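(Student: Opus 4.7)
The plan is to derive the formula for $a(K)$ via a two-step procedure: first invert the relation between induced subgraphs and (non-induced) subgraphs via Möbius inversion on the Boolean lattice of edge additions, then convert the resulting subgraph counts into homomorphism counts via the standard partition decomposition, and finally read off the coefficient of $\#\homs{K}{\star}$.

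For the first step, I would fix a labeling of $K$ on the vertex set $V(K)$ and express a strong embedding of $K$ into $G$ as an injective map that both preserves and \emph{reflects} edges. By inclusion-exclusion over the edges one is allowed to add on top of $E(K)$, this gives
\[\#\stremb(K \to G) = \sum_{K' \supseteq K,\, V(K')=V(K)} (-1)^{|E(K')|-|E(K)|}\cdot \#\embs{K'}{G},\]
where $K'$ ranges over labeled graphs on $V(K)$ containing $E(K)$. Using the identities $\#\stremb(K \to G) = \#\auts{K}\cdot \#\indsubs{K}{G}$ and $\#\embs{H}{G} = \#\auts{H}\cdot \#\subs{H}{G}$ and regrouping the sum on the right by the isomorphism class~$H$ of~$K'$, the number of labeled supergraphs on~$V(K)$ isomorphic to~$H$ is exactly $\#\{H \supseteq K\}$ in the paper's notation. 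This yields
\[\#\indsubs{K}{G} \;=\; \frac{1}{\#\auts{K}}\sum_{H\in\mathcal{G}_k}\#\{H \supseteq K\}\cdot(-1)^{|E(H)|-|E(K)|}\cdot \#\auts{H}\cdot \#\subs{H}{G}.\]

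For the second step I would invoke the standard decomposition $\#\homs{H}{G} = \sum_\sigma \#\embs{H/\sigma}{G}$ over partitions $\sigma$ of $V(H)$ that are compatible with $H$, and its Möbius inverse, which expresses $\#\embs{H}{G}$ as a signed sum of $\#\homs{H/\sigma}{G}$. Since every non-trivial partition strictly reduces the vertex count, the only $\sigma$ contributing a $k$-vertex homomorphism term is $\sigma=\mathrm{id}$; hence for every $H$ with $k$ vertices, the coefficient of $\#\homs{K}{\star}$ in $\#\subs{H}{\star}$ equals $\tfrac{1}{\#\auts{H}}$ if $H\cong K$ and $0$ otherwise (smaller $H/\sigma$ contribute only to homomorphism counts from graphs with fewer than $k$ vertices).

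Plugging this into the identity from the first step, summing over $K' \in \Phi_k$ in place of my $K$, and collecting the coefficient of $\#\homs{K}{\star}$ yields
\[a(K) \;=\; \sum_{H \in \Phi_k}\frac{1}{\#\auts{H}}\cdot(-1)^{|E(K)|-|E(H)|}\cdot \#\{K \supseteq H\},\]
which is exactly the claimed formula. The main point requiring care is the regrouping in the first step: one must verify that the number of labeled supergraphs of $K$ on~$V(K)$ lying in a given isomorphism class $H$ is $\#\{H \supseteq K\}$, which follows from the symmetric double-counting of pairs $(H_0,K_0)$ with $V(H_0) = V(K_0)$, $E(H_0)\supseteq E(K_0)$, $H_0\cong H$, $K_0\cong K$.
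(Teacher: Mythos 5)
Your proposal is correct and follows essentially the same route as the paper's proof: the inclusion--exclusion over added edges (strong embeddings to embeddings) and the partition-lattice M\"obius inversion (embeddings to homomorphisms) that you spell out are exactly the ``standard transformations'' the paper invokes via Lov\'asz, and both arguments then read off the coefficient of $\#\homs{K}{\star}$ using that every proper quotient $H/\rho$ has fewer than $k$ vertices. The only difference is presentational: you make the two standard steps and the regrouping by isomorphism class explicit, whereas the paper cites them and writes the combined triple sum directly.
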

    \begin{claimproof}
        Fix a graph $K$ with $k$ vertices.
        Using the standard transformations from strong embeddings to embeddings and from
        embeddings to homomorphisms (see for instance \lovasz~\cite{Lovasz12}), we obtain
        the following:\footnote{This step is done explicitly in~\cite{RothS18}.}
        \begin{align*}
            &\#\indsubs{\Phi,k}{\star}\\
            &\quad=\sum_{H\in \Phi_k} \#\auts{H}^{-1} \sum_{H' \in
            \mathcal{G}} (-1)^{\#E(H')-\#E(H)} \cdot  \#\{H'\supseteq H\}  \sum_{\rho \geq
            \emptyset} \mu(\emptyset,\rho) \cdot  \#\homs{H'/\rho}{\star},
        \end{align*}
        where $\mu$ denotes the Möbius function and the rightmost sum ranges over the
        partition lattice of the set of vertices of $H'$. Furthermore, $H'/\rho$ is the
        quotient graph obtained by identifying vertices of $H'$ along the partition~$\rho$.
        In particular, $H'/\emptyset = H'$. We omit the details, which can be
        found in~\cite{Lovasz12,RothS18}, as we only need that $H'/\rho$ has strictly less
        than $k$ vertices for all $\rho > \emptyset$ and that
        $\mu(\emptyset,\emptyset)=1$. This allows us to rewrite the previous equation as
        follows:
        \begin{align*}
            &\#\indsubs{\Phi,k}{\star} \\
            &\quad=\sum_{H\in \Phi_k} \#\auts{H}^{-1} \sum_{H' \in \mathcal{G}}
            (-1)^{\#E(H')-\#E(H)}  \#\{H'\supseteq H\} \cdot \#\homs{H'}{\star} +
            R(\Phi_k),
        \end{align*}
        where the remainder $R(\Phi_k)$ does not depend on any numbers $\#\homs{F}{\star}$ for
        graphs $F$ with $k$ vertices. In particular, reordering and grouping the
        coefficients of $\#\homs{K}{\star}$ yields the claim.
    \end{claimproof}
    Next, we investigate the term $\sum_{K \in \mathcal{H}_\ell}\#\{K \supseteq H\}$.
    \begin{claim}\label{clm:collect}
        Let $\ell\in\{0,\dots,d\}$ denote an integer and let $H$ denote a graph with $k$ vertices
        and at most $\ell$ edges. Then, we have \[
            \sum_{K \in \mathcal{H}_\ell}\!\!\!
            \#\{K \supseteq H\} =
            \binom{d - \#E(H)}{\ell - \#E(H)}.
        \]
    \end{claim}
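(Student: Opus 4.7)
The plan is to prove the claim by a direct double-counting argument, partitioning all $(\ell-\#E(H))$-edge additions to the labeled graph $H$ according to the isomorphism class of the resulting graph.

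First I would fix a labeled graph $H$ on the vertex set $V(K_k)$. By definition, for any $K \in \mathcal{H}_\ell$, the quantity $\#\{K \supseteq H\}$ counts the number of edge subsets $A \subseteq E(K_k) \setminus E(H)$ such that the labeled graph with vertex set $V(K_k)$ and edge set $E(H) \cup A$ is isomorphic to $K$. Since $K$ has $\ell$ edges, every such~$A$ must have cardinality exactly $\ell - \#E(H)$; in particular, the sets counted for different isomorphism classes~$K$ are pairwise disjoint.

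Next I would check that summing over \emph{all} isomorphism classes $K \in \mathcal{H}_\ell$ yields every edge subset $A \subseteq E(K_k) \setminus E(H)$ with $|A| = \ell - \#E(H)$ exactly once: indeed, any such $A$ produces a labeled $k$-vertex graph $(V(K_k), E(H)\cup A)$ with exactly $\ell$ edges, and hence lies in exactly one isomorphism class in $\mathcal{H}_\ell$. Therefore
\[
    \sum_{K \in \mathcal{H}_\ell} \#\{K \supseteq H\}
    \;=\; \#\{A \subseteq E(K_k) \setminus E(H) \mid |A| = \ell - \#E(H)\}.
\]

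Finally, since $|E(K_k) \setminus E(H)| = d - \#E(H)$, the right-hand side equals $\binom{d - \#E(H)}{\ell - \#E(H)}$, which is well-defined because $\ell \geq \#E(H)$ by hypothesis. This yields the claimed identity. The argument is essentially bookkeeping, so there is no real obstacle; the only thing to be careful about is keeping the distinction between labeled supergraphs of $H$ and isomorphism classes straight, which is exactly what makes the identity nontrivial.
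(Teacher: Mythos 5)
Your proof is correct and follows essentially the same route as the paper's: the paper's (much terser) argument likewise observes that any extension of $H$ to an $\ell$-edge graph adds exactly $\ell-\#E(H)$ of the $d-\#E(H)$ remaining non-edges, and your write-up just makes the partition by isomorphism class explicit. No gap; this matches the intended proof.
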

    \begin{claimproof}
        Any extension from the graph $H$ to a graph with $\ell$ edges has to add
        $\ell - \#E(H)$ edges to $H$; there are exactly $d - \#E(H)$
        possible choices for these $\ell - \#E(H)$ edges. Hence the claim follows
        from basic combinatorics.
    \end{claimproof}
    Now, fix an $\ell \in \{0,\dots,d\}$; we proceed to show that
    $k!\cdot \tilde{h}_\ell = h_\ell$, which proves the lemma. To that end, from the
    definition of $\tilde{h}$, we obtain
    \begin{align*}
        k! \cdot \tilde{h}_\ell &= k!\cdot \sum_{K \in \mathcal{H}_\ell} a(K)\\
        ~&= k!\cdot \sum_{K \in \mathcal{H}_\ell} \sum_{H \in \Phi_k}\#\auts{H}^{-1}\cdot
        (-1)^{\ell-\#E(H)} \cdot \#\{K \supseteq H\}\\
        ~&=\sum_{H \in \Phi_k} k!\cdot \#\auts{H}^{-1}\cdot (-1)^{\ell-\#E(H)} \sum_{K \in
        \mathcal{H}_\ell} \#\{K \supseteq H\}\,,
    \end{align*}
    where the second equality holds due to \cref{clm:single_coef}.
    Now observe that $\#\{K \supseteq H\} = 0$ if $H$ has more edges than $K$. Thus we see that
    \begin{align*}
        k! \cdot \tilde{h}_\ell &= \sum_{\substack{H \in \Phi_k\\ \#E(H)\leq \ell}}
        k!\cdot \#\auts{H}^{-1}\cdot (-1)^{\ell-\#E(H)} \sum_{K \in \mathcal{H}_\ell}
        \#\{K \supseteq H\}\\
        ~&= \sum_{\substack{H \in \Phi_k\\ \#E(H)\leq \ell}} k!\cdot \#\auts{H}^{-1}\cdot
        (-1)^{\ell-\#E(H)} \cdot  \binom{d - \#E(H)}{\ell - \#E(H)}\,,
    \end{align*}
    where the last equality holds due to \cref{clm:collect}. Next we use the fact that $k!$ is
    the order of the symmetric group $\mathsf{Sym}_k$:
    For any graph $H$ in the above sum, choose a set $A$ of edges of the labeled complete
    graph~$K_k$ on $k$ vertices such that the corresponding subgraph $K_k[A]$ is
    isomorphic to $H$. The group $\mathsf{Sym}_k$ acts on the vertices and thus on the
    edges of $K_k$. By the definition of a graph automorphism, the stabilizer of the
    set $A$ has exactly $\#\auts{H}$ elements.

    \noindent Now observe that the orbit of $A$ under $\mathsf{Sym}_k$ is the collection of all sets
    $A'$ such that $K_k[A']\cong H$. Therefore, by the Orbit Stabilizer Theorem, we have
    \[k! \cdot \#\auts{H}^{-1} = \#\{A' \subseteq E(K_k)~|~K_k[A'] \cong H\}.\]
    Hence we can conclude that
    \begin{align*}
        k! \cdot \tilde{h}_\ell &= \sum_{\substack{H \in \Phi_k\\ \#E(H)\leq \ell}} \#\{A
        \subseteq E(K_k)~|~K_k[A] \cong H\}\cdot (-1)^{\ell-\#E(H)} \cdot  \binom{d -
    \#E(H)}{\ell - \#E(H)}\\
        ~&=\sum_{i=0}^\ell \sum_{\substack{H \in \Phi_k\\ \#E(H)= i}} \#\{A \subseteq
        E(K_k)~|~K_k[A] \cong H\}\cdot (-1)^{\ell-i} \cdot  \binom{d - i}{\ell - i}\\
        ~&=\sum_{i=0}^\ell \#\{ A \subseteq E(K_k) ~|~\#A = i \wedge \Phi(K_k[A])=1
        \}\cdot (-1)^{\ell-i} \cdot  \binom{d - i}{\ell - i}\\
        ~&= \sum_{i=0}^\ell f_i\cdot (-1)^{\ell-i} \cdot  \binom{d - i}{\ell - i} =
        h_\ell,
    \end{align*}
    completing the proof.
\end{proof}
In the next step, we use P\'olya's Theorem to prove that the Hamming weight of the
$f$-vector determines an index $\beta$ of the $h$-vector such that at least one entry of
$h$ with index at least~$\beta$ is non-zero. By \cref{lem:coef_sums} the same
then follows for $\tilde{h}$.

\begin{lemma}\label{lem:birkhoff_interpol}
    Let $w$ denote the Hamming weight of $f$ and set $\beta = d-w$. If $\Phi$ is not trivially
    false on $k$-vertex graphs then at least one of the values $h_d,\dots,h_{\beta+1}$ is non-zero.
\end{lemma}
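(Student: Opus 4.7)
The plan is to reduce the statement to an application of P\'olya's Theorem (\cref{thm:polya}) with the two base points $x_1 = -1$ and $x_2 = 0$, exploiting the fact that the entries of $f$ and $h$ are, up to factorials, the values at $0$ and at $-1$ of the successive derivatives of $\fpol$.

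First I would record the two translation identities. Differentiating $\fpol(x) = \sum_{i=0}^d f_i\, x^{d-i}$ and evaluating at $0$ gives $\fpol^{(j)}(0) = j!\cdot f_{d-j}$, so $f_i = 0 \Leftrightarrow \fpol^{(d-i)}(0) = 0$. For the $h$-side, a short rearrangement of the binomial sum in \cref{def:fhvectors} yields the generating-function identity $\sum_{\ell=0}^d h_\ell\, x^{d-\ell} = \fpol(x-1)$, and differentiating at $x = 0$ gives $\fpol^{(j)}(-1) = j!\cdot h_{d-j}$, so $h_\ell = 0 \Leftrightarrow \fpol^{(d-\ell)}(-1) = 0$.

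Next I would set up the Hermite--Birkhoff incidence matrix $E \in \{0,1\}^{2 \times (d+1)}$. In the first row, corresponding to the point $-1$, set $\varepsilon_{1,j} = 1$ for $j \in \{0, 1, \dots, w-1\}$; by hypothesis these are exactly the orders for which $\fpol^{(j)}(-1) = j!\cdot h_{d-j} = 0$, i.e.\ $h_d, h_{d-1}, \dots, h_{\beta+1}$ all vanish. In the second row, corresponding to $0$, set $\varepsilon_{2,j} = 1$ precisely at the $d+1-w$ indices $j$ with $f_{d-j} = 0$. The total weight of $E$ is then $w + (d+1-w) = d+1$, as required by \cref{thm:polya}.

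The main step is to verify $M_j \geq j+1$ for all $j \in \{0, \dots, d-1\}$. For $j \leq w-1$ the first row alone contributes $j+1$, so the bound is immediate. For $j \geq w$ the first row contributes exactly $w$, while the second row contributes $(j+1) - \#\{i \in \{0, \dots, j\} : f_{d-i} \neq 0\} \geq (j+1) - w$, since the total number of non-vanishing $f_i$ is exactly $w$; summing gives $M_j \geq j+1$. P\'olya's Theorem then declares $E$ poised, so the zero polynomial is the only polynomial of degree at most $d$ satisfying the prescribed derivative-vanishing conditions. However, by construction $\fpol$ itself satisfies these conditions, so $\fpol \equiv 0$, forcing every $f_i$ to vanish and thereby contradicting the assumption that $\Phi$ is not trivially false on $k$-vertex graphs. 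Hence at least one of $h_d, \dots, h_{\beta+1}$ is non-zero. The only subtle part of the argument is the Hamming-weight bookkeeping in the case $j \geq w$; once that is in place, P\'olya delivers the conclusion at once.
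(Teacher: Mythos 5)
Your proof is correct and follows essentially the same route as the paper: translate the $f$- and $h$-entries into vanishing conditions on $\fpol^{(j)}(0)$ and $\fpol^{(j)}(-1)$, form the two-point Hermite--Birkhoff incidence matrix, verify P\'olya's condition $M_j\geq j+1$, and derive $\fpol\equiv 0$, contradicting non-triviality. The only cosmetic differences are that you obtain $\fpol^{(j)}(-1)=j!\,h_{d-j}$ via the shift identity $\fpol(x-1)=\sum_{\ell} h_\ell\, x^{d-\ell}$ instead of the paper's direct computation, and you place exactly $d+1-w$ ones in the second row up front rather than trimming the matrix to total weight $d+1$ afterwards.
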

\begin{proof}
    Recall the definition of the $f$-polynomial $\fpol(x)=\sum_{i=0}^d f_i \cdot x^{d-i}$
    and observe that \[
        \fpol^{(j)}(x) = \sum_{i=0}^{d-j} f_i \cdot (d-i)_j \cdot x^{d-j-i}.
    \] By $j_j = j!$, we immediately obtain $\fpol^{(j)}(0) = f_{d-j} \cdot j!$.
    Therefore, by assumption, we have $\fpol^{(j)}(0) = 0$ for $\beta+1$ many indices $j$.

Furthermore, we see that
    \begin{alignat*}{3}
        \fpol^{(j)}(-1) &= \sum_{i=0}^{d-j} f_i \cdot (d-i)_j \cdot (-1)^{d-j-i}
                        &&= j! \cdot \sum_{i=0}^{d-j} f_i \cdot \binom{d-i}{j} \cdot (-1)^{d-j-i}\\
        ~&= j! \cdot \sum_{i=0}^{d-j} f_i \cdot \binom{d-i}{(d-j)-i} \cdot (-1)^{d-j-i}~
         &&= j! \cdot h_{d-j}.
    \end{alignat*}

    Now assume for the sake of contradiction that each of the values $h_d,\dots,h_{\beta+1}$ is
    zero. Consequently, $\fpol^{(j)}(-1)=0$ for $j=0,\dots, w-1$.  Interpreting those
    evaluations of the derivatives of the $f$-polynomial as an instance of
    Hermite-Birkhoff interpolation, the corresponding matrix $E$ looks as
    follows:\footnote{Recall that an entry~$1$ in the matrix $E$ represents an evaluation
    $\fpol^{(j)}(-1)=0$ in the first row and an evaluation $\fpol^{(j)}(0)=0$ in the
    second row.}
    \[
        \begin{blockarray}{rcccccccc}
            ~ & 0 & 1 & 2 & \dots & w-1 & w& \dots & d \\[1em]
            \begin{block}{l(cccccccc)}
                ~& ~1 & 1 & 1 & \dots & 1 & 0 & \dots & 0 \\
                ~ & \varepsilon_{20} & \varepsilon_{21} & \varepsilon_{22} & \dots &
                \varepsilon_{2(w-1)}& \varepsilon_{2w} & \dots & \varepsilon_{2d}~ \\
            \end{block}
        \end{blockarray}
    \]
    In particular, at least $\beta+1=d+1-w$ of the values $\varepsilon_{2j}$ are $1$;
    As $\beta+1$ and $w$ sum up to $d+1$, we can easily verify that the conditions of
    P\'olya's Theorem (\cref{thm:polya}) are satisfied:
    Let us modify $E$ by arbitrarily choosing \emph{precisely} $\beta+1$ of the
    $\varepsilon_{2,j}$ that are $1$ and set the others to $0$, and call the resulting
    matrix $\hat{E}$.  We then have both $M_j \geq j+1$ (for all $j \in
    \{0,\dots,d-1\}$) and the first and second row of $\hat{E}$ sum up to
    \emph{precisely} $d+1$. Hence the matrix $\hat{E}$ is poised, that is, the only
    polynomial of degree at most $d$
    that satisfies the corresponding instance of Hermite-Birkhoff interpolation is the zero
    polynomial. As we obtained $\hat{E}$ from $E$ just by ignoring some vanishing
    conditions, the same conclusion is true for $E$ and thus $\fpol=0$ is the unique
    solution. This, however, contradicts the fact that the property $\Phi$ is not
    trivially false on $k$-vertex graphs, completing the proof.
\end{proof}

Combining \cref{lem:coef_sums,lem:birkhoff_interpol} yields our main technical result,
which we restate here for convenience.
\mnthmcom*
\begin{proof}
    Set $d=\binom{k}{2}$ and $\beta=d-w$. By \cref{eq:indsubgmp} the function $a$ exists
    and is computable and has a finite support.
    Now, \cref{lem:birkhoff_interpol} implies that at least
    one of the values $h^{\Phi,k}_d,\dots,h^{\Phi,k}_{\beta+1}$ is non-zero
    and thus, by \cref{lem:coef_sums}, at least one of the values
    $\tilde{h}_d,\dots,\tilde{h}_{\beta+1}$ is non-zero as well.
    Next, observe that $\tilde{h}_i = \sum_{K \in \mathcal{H}_i} a(K)$ for all
    $i \in \{0,\dots,d\}$, where $\mathcal{H}_i$ is the set of all graphs on $k$ vertices
    and $i$ edges. In particular, $\tilde{h}_i \neq 0$ implies that $a(K)\neq 0$ for at least
    one $K \in \mathcal{H}_i$, yielding the claim.
\end{proof}

\section{A Classification of
    \texorpdfstring{\#IndSub$\bm{(\Phi)}$}{{\#IndSub(Phi)}} by the Hamming
Weight of the \texorpdfstring{$\bm{f}$}{f}-Vectors}\label{sec:meta_theorem}
In this section, we derive a general hardness result for $\#\indsubsprob(\Phi)$ based on the
Hamming weight of the $f$-vector.
In a sense, we ``black-box'' \cref{thm:main_theorem_combinatorial}; using the resulting
classification, we establish first hardness results and almost tight conditional lower bounds for a
variety of families of graph properties.

However, note that taking a closer look at the number of edges of the graphs with
non-vanishing coefficients (as provided by \cref{thm:main_theorem_combinatorial})
often yields improved, sometimes even matching conditional lower bounds; we defer the
treatment of the refined analysis to \cref{sec:refined_bounds}.

In what follows, we write $\mathcal{K}(\Phi)$ for the set of all $k$ such that $\Phi_k$ is non-empty.

\begin{theorem}\label{thm:main_general}
    Let $\Phi$ denote a computable graph property and suppose that the set $\mathcal{K}(\Phi)$ is
    infinite. Let $\beta: \mathcal{K}(\Phi) \to \mathbb{Z}_{\ge 0}$ denote the
    function that maps $k$ to $\binom{k}{2}-\mathsf{hw}(f^{\Phi,k})$.
    If $\beta(k)\in \omega(k)$ then the problem $\#\indsubsprob(\Phi)$ is
    $\#\W{1}$-complete and cannot be solved in time \[g(k)\cdot
    |V(G)|^{o({(\beta(k)/k)}/{\log(\beta(k)/k)})}\] for any function $g$, unless
    ETH fails. The same statement holds for $\#\indsubsprob(\overline{\Phi})$ and
    $\#\indsubsprob(\neg\Phi)$.
\end{theorem}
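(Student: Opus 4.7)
The plan is to combine our main combinatorial statement (\cref{thm:main_theorem_combinatorial}) with Complexity Monotonicity (\cref{thm:monotonicity}) and the fine-grained dichotomy for homomorphism counting (\cref{thm:homsdicho}). For each $k \in \mathcal{K}(\Phi)$, \cref{thm:main_theorem_combinatorial} yields a graph $F_k$ on exactly $k$ vertices with at least $\beta(k)+1$ edges such that $a(F_k) \neq 0$ in the unique expansion $\#\indsubs{\Phi,k}{\star} = \sum_H a(H)\cdot \#\homs{H}{\star}$. I would set $\mathcal{H} := \{F_k : k \in \mathcal{K}(\Phi)\}$; since $\Phi$ is computable and the coefficients $a(H)$ are themselves computable, $\mathcal{H}$ is recursively enumerable.

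Next, I would lower bound the treewidth of $F_k$ via its average degree. As $F_k$ has $k$ vertices and at least $\beta(k)+1$ edges, its average degree is at least $2(\beta(k)+1)/k$, so \cref{lem:convenient} gives $\mathsf{tw}(F_k) \geq (\beta(k)+1)/k > \beta(k)/k$. The hypothesis $\beta(k) \in \omega(k)$ ensures $\mathsf{tw}(F_k) \to \infty$, so $\mathcal{H}$ has unbounded treewidth; \cref{thm:homsdicho} then asserts that $\#\homsprob(\mathcal{H})$ is $\#\W{1}$-complete and, under ETH, not solvable in time $f(k) \cdot n^{o(\mathsf{tw}(F_k)/\log \mathsf{tw}(F_k))}$ for any function $f$. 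Applying \cref{thm:monotonicity} to the non-vanishing coefficient $a(F_k)$ then yields a parameterized Turing-reduction from $\#\homsprob(\mathcal{H})$ to $\#\indsubsprob(\Phi)$ in which the host graph grows only by a $k$-dependent factor. Combined with the $\#\W{1}$-membership of $\#\indsubsprob(\Phi)$ via the Jerrum--Meeks reduction to $\#\textsc{Clique}$, this already gives $\#\W{1}$-completeness.

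To obtain the stated fine-grained lower bound, I would suppose for contradiction that $\#\indsubsprob(\Phi)$ is solvable in time $g(k) \cdot n^{o((\beta(k)/k)/\log(\beta(k)/k))}$ for some $g$. Since $x \mapsto x/\log x$ is monotonically increasing for $x > e$ and $\mathsf{tw}(F_k) \geq \beta(k)/k \to \infty$, one has $(\beta(k)/k)/\log(\beta(k)/k) \leq \mathsf{tw}(F_k)/\log \mathsf{tw}(F_k)$ for all sufficiently large $k$; hence every function that is $o((\beta(k)/k)/\log(\beta(k)/k))$ is also $o(\mathsf{tw}(F_k)/\log \mathsf{tw}(F_k))$, and plugging the hypothetical algorithm into the monotonicity reduction from the previous paragraph contradicts \cref{thm:homsdicho}. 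Finally, the statements for $\neg\Phi$ and $\overline{\Phi}$ follow immediately from \cref{fac:invariance}: the identities $\#\indsubs{\neg\Phi,k}{G} = \binom{|V(G)|}{k} - \#\indsubs{\Phi,k}{G}$ and $\#\indsubs{\overline{\Phi},k}{G} = \#\indsubs{\Phi,k}{\overline{G}}$ give trivial, parameter-preserving Turing-reductions in both directions, so both the $\#\W{1}$-completeness and the ETH lower bound transfer verbatim. The step that demands the most care is purely one of exponent bookkeeping, namely tracking simultaneously the $k$-dependent blow-up in $n$ from Complexity Monotonicity and the monotonicity comparison between $(\beta(k)/k)/\log(\beta(k)/k)$ and $\mathsf{tw}(F_k)/\log \mathsf{tw}(F_k)$; no new combinatorial idea is required beyond \cref{thm:main_theorem_combinatorial}.
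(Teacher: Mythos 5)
Your proposal is correct and follows essentially the same route as the paper's proof: extract a graph $H_k$ with non-vanishing coefficient and at least $\binom{k}{2}-\mathsf{hw}(f^{\Phi,k})$ edges from \cref{thm:main_theorem_combinatorial}, bound its treewidth from below via the average degree (\cref{lem:convenient}), invoke \cref{thm:homsdicho} for the family of these graphs, transfer hardness through Complexity Monotonicity (\cref{thm:monotonicity}), and handle $\neg\Phi$ and $\overline{\Phi}$ via \cref{fac:invariance}. The only differences are cosmetic: you make explicit a few bookkeeping steps (recursive enumerability of $\mathcal{H}$, monotonicity of $x/\log x$, $\#\W{1}$-membership) that the paper leaves implicit.
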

Note that the condition of $\mathcal{K}(\Phi)$ being infinite is necessary for
hardness: Otherwise there is a constant~$c$ such that we can output $0$ whenever
$k\geq c$ and solve the problem by brute-force if $k< c$, yielding an algorithm with a
polynomial running time.
Note further that the $(\log(\beta(k)/k))^{-1}$-factor in the exponent
is related to the question of whether it is possible to
``beat treewidth''~\cite{Marx10}.\footnote{See \cref{thm:homsdicho} and its discussion.}
In particular, if the factor of $(\log \mathsf{tw}(H))^{-1}$ in \cref{thm:homsdicho} can
be dropped, then all further results in this section can be strengthened to yield tight conditional
lower bounds under ETH.
\begin{proof}
    By \cref{thm:main_theorem_combinatorial}, for each $k\in \mathcal{K}(\Phi)$ we obtain
    a graph $H_k$ with $k$ vertices and at
    least $\beta(k)$ edges such that $a(H_k)\neq 0$, where $a$ is the function
    in~\cref{eq:indsubgmp}. The average degree of $H_k$ satisfies
    \[d(H_k) = \frac{1}{k}\cdot \sum_{v \in V(H_k)}\mathsf{deg}(v) = \frac{2|E(H_k)|}{k}
    \geq \frac{2\beta(k)}{k} \,,\]
    where the second equality is due to the Handshaking Lemma. By
    \cref{lem:convenient}, we thus obtain that $\mathsf{tw}(H_k) \geq
    \frac{\beta(k)}{k}$, which is unbounded as $\beta(k)\in \omega(k)$ by assumption.

    Now let $\mathcal{H}$ denote the set of all graphs $H_k$ for $k \in \mathcal{K}(\Phi)$.
    By \cref{thm:homsdicho}, we obtain that $\#\homsprob(\mathcal{H})$ is
    $\#\W{1}$-complete and cannot be solved in time
    \[g(k)\cdot |V(G)|^{o(({\beta(k)/k})/{\log(\beta(k)/k)})}\] for any function
    $g$, unless ETH fails. Further, by Complexity Monotonicity
    (\cref{thm:monotonicity}), the same is true for $\#\indsubsprob(\Phi)$ as well.
    Finally, we use \cref{fac:invariance} to obtain the same result for
    $\#\indsubsprob(\overline{\Phi})$ and $\#\indsubsprob(\neg\Phi)$; completing the
    proof.
\end{proof}

\subsection{Low Edge-Densities and Sparse Graph Properties}\label{sec:low_edge_dens}

As a first application of \cref{thm:main_general},
we consider properties $\Phi$ that satisfy \[\mathsf{hw}(f^{\Phi,k}) \in o(k^2).\]
We say that such a property $\Phi$ has a \emph{low edge-densities}.
Properties with low edge-density subsume, for example,
exclusion of a set of fixed minors such as planarity. They have been studied by
Jerrum and Meeks~\cite{JerrumM15density}, where they show that $\#\indsubsprob(\Phi)$
is $\#\W{1}$-complete for these properties.
However, their proof uses Ramsey's Theorem and thus only establishes an implicit conditional
lower bound of $g(k) \cdot |V(G)|^{o(\log k)}$. In contrast, we achieve the following, almost
tight lower bound:

\begin{theorem}\label{cor:low_edge_densities}
    Let $\Phi$ denote a computable graph property with low edge-densities.
    Suppose that the set~$\mathcal{K}(\Phi)$ is infinite.
    Then $\#\indsubsprob(\Phi)$ is $\#\W{1}$-complete
    and cannot be solved in time \[g(k)\cdot |V(G)|^{o\left(k/\log k\right)}\] for any function
    $g$, unless ETH fails. The same is true for $\#\indsubsprob(\overline{\Phi})$ and
    $\#\indsubsprob(\neg\Phi)$.
\end{theorem}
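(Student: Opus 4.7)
The plan is to derive \cref{cor:low_edge_densities} as a direct application of \cref{thm:main_general}. Let $\beta(k) := \binom{k}{2} - \mathsf{hw}(f^{\Phi,k})$ for $k \in \mathcal{K}(\Phi)$, as in the statement of \cref{thm:main_general}. First, I would verify the growth hypothesis $\beta(k)\in\omega(k)$. Since $\Phi$ has low edge-densities, $\mathsf{hw}(f^{\Phi,k})\in o(k^2)$ as $k$ ranges over $\mathcal{K}(\Phi)$. Hence, for all sufficiently large $k\in\mathcal{K}(\Phi)$, we have $\mathsf{hw}(f^{\Phi,k})\leq \tfrac{1}{4}k^2$, which yields $\beta(k)\geq \binom{k}{2}-\tfrac{1}{4}k^2$, a quantity that lies in $\Theta(k^2)$; in particular $\beta(k) \in \omega(k)$, as required.

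Next, I would translate the lower bound supplied by \cref{thm:main_general}. That theorem rules out an algorithm of running time $g(k)\cdot |V(G)|^{o((\beta(k)/k)/\log(\beta(k)/k))}$. Since $\beta(k) \in \Theta(k^2)$, we have $\beta(k)/k \in \Theta(k)$ and therefore $\log(\beta(k)/k) \in \Theta(\log k)$, so
\[
\frac{\beta(k)/k}{\log(\beta(k)/k)} \in \Theta\!\left(\frac{k}{\log k}\right).
\]
Plugging this asymptotic into the bound from \cref{thm:main_general} yields the claimed exponent $o(k/\log k)$ together with $\#\W{1}$-completeness. The corresponding statements for $\overline{\Phi}$ and $\neg\Phi$ are already part of \cref{thm:main_general} (which internally uses \cref{fac:invariance}), so no additional argument is required.

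There is essentially no substantive obstacle here: the combinatorial content has already been absorbed into \cref{thm:main_theorem_combinatorial} and its complexity-theoretic counterpart \cref{thm:main_general}. The proof reduces to checking that `low edge-density' is by definition a quantitative strengthening of $\beta(k)\in\omega(k)$---indeed about as strong as possible, since $\beta(k)$ cannot exceed $\binom{k}{2}$---and to converting the resulting rate into the advertised $k/\log k$ exponent in the lower bound.
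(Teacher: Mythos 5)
Your proposal is correct and follows essentially the same route as the paper: verify that low edge-density forces $\beta(k)=\binom{k}{2}-\mathsf{hw}(f^{\Phi,k})\in\Theta(k^2)$ (hence $\beta(k)\in\omega(k)$), observe that then $o\bigl((\beta(k)/k)/\log(\beta(k)/k)\bigr)=o(k/\log k)$, and invoke \cref{thm:main_general}, which already covers $\overline{\Phi}$ and $\neg\Phi$. The paper's proof is just a terser version of exactly this argument.
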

\begin{proof}
    If $\Phi$ has low edge-densities, then we have $\beta(k)=\binom{k}{2}- \mathsf{hw}(f^{\Phi,k})
    \in \Theta(k^2)$. Thus \[
    o\left(\frac{\beta(k)/k}{\log(\beta(k)/k)}\right) = o\left(k/\log k\right) \,.\]
    The claim hence follows by \cref{thm:main_general}.
\end{proof}
The previous result applies, in particular, to sparse properties. In \cref{sec:refined_bounds} we show, that a refined analysis based on
Tur\'an's Theorem as well as \cref{thm:main_theorem_combinatorial} establishes a tight conditional lower bound for sparse properties $\Phi$ that additionally satisfy a density condition on $\mathcal{K}(\Phi)$; the combination of those two results then implies \cref{thm:sparse_intro}.

\subsection{Graph Properties Depending Only on the Number of Edges}
Jerrum and Meeks~\cite{JerrumM15density,JerrumM17} asked whether
$\#\indsubsprob(\Phi)$ is $\#\W{1}$-complete whenever $\Phi$ is non-trivial infinitely
often and only depends on the number of edges of a graph, that is,
\[\forall H_1, H_2: |E(H_1)|=|E(H_2)| \Rightarrow \Phi(H_1) = \Phi(H_2).\]
We answer this question affirmatively, even for properties that can depend both on the number of edges and vertices of the graph, and additionally provide an almost tight conditional
lower bound:

\begin{theorem}\label{cor:number of edges}
    Let $\Phi$ denote a computable graph property that only depends on the number of edges and the number of vertices of a
    graph. If $\Phi_k$ is non-trivial only for finitely many $k$ then $\#\indsubsprob(\Phi)$
    is fixed-parameter tractable. Otherwise, $\#\indsubsprob(\Phi)$ is $\#\W{1}$-complete and
    cannot be solved in time \[g(k)\cdot |V(G)|^{o\left(k/\log k\right)}\] for any function $g$,
    unless ETH fails.
\end{theorem}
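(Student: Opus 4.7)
The plan is to reduce the hard direction to the meta-result \cref{thm:main_general}, after isolating the structural consequence of ``depending only on the number of edges''. For the tractable direction, suppose $\Phi_k$ is non-trivial only for $k$ in some finite set bounded by $k_0$. The FPT algorithm is: on input $(G,k)$ with $k > k_0$, determine in time $f(k)$ whether $\Phi$ is constantly true or constantly false on $k$-vertex graphs (using computability of $\Phi$) and output $\binom{|V(G)|}{k}$ or $0$ accordingly; for $k \le k_0$, brute-force over all $k$-subsets of $V(G)$ takes $O(|V(G)|^{k_0})$ time. Both branches fit into $f(k)\cdot |V(G)|^{O(1)}$.

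For the hardness direction, I would first extract the key structural observation: since $\Phi(K_k[A])$ depends only on $|A|$ for fixed $k$, each coordinate $f^{\Phi,k}_i$ equals either $0$ or the full binomial $\binom{\binom{k}{2}}{i}$, depending only on whether $\Phi$ holds for some $k$-vertex graph with $i$ edges. The same dichotomy applies to $\neg\Phi$, and the nonzero coordinates of $f^{\Phi,k}$ and $f^{\neg\Phi,k}$ partition $\{0, \dots, \binom{k}{2}\}$, so
\[ \mathsf{hw}(f^{\Phi,k}) + \mathsf{hw}(f^{\neg\Phi,k}) = \binom{k}{2} + 1. \]

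Next I would pigeonhole over the infinitely many $k$ for which $\Phi_k$ is non-trivial: one of the two sides, call it $\Psi \in \{\Phi, \neg\Phi\}$, satisfies $\mathsf{hw}(f^{\Psi,k}) \le (\binom{k}{2}+1)/2$ for an infinite subsequence of $k$. On this subsequence, $\beta^{\Psi}(k) := \binom{k}{2} - \mathsf{hw}(f^{\Psi,k}) \in \Theta(k^2) \subseteq \omega(k)$. Applying \cref{thm:main_general} to $\Psi$ yields $\#\W{1}$-completeness and a conditional lower bound of $g(k)\cdot |V(G)|^{o((\beta^\Psi(k)/k)/\log(\beta^\Psi(k)/k))}$; since $\beta^\Psi(k)/k \in \Theta(k)$ along the subsequence, this reduces to the stated $g(k)\cdot |V(G)|^{o(k/\log k)}$. \Cref{thm:main_general} transfers hardness between $\Psi$ and $\neg\Psi$, so the conclusion holds for $\Phi$ regardless of which side the pigeonhole selected.

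The main obstacle in this strategy is the necessity of the negation step: one cannot in general apply the meta-theorem directly to $\Phi$, since properties such as ``$H$ has at least one edge'' have $f$-vectors of nearly full Hamming weight and thus $\beta(k)$ too small. The identity above, genuinely enabled by the edge-count-only assumption, is precisely what forces one of $\Phi$ or $\neg\Phi$ to have small Hamming weight, and thereby makes \cref{thm:main_general} applicable. The remaining bookkeeping on the exponent is routine.
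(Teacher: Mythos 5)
Your tractable direction and your structural observation (for fixed $k$, each $f^{\Phi,k}_i$ is either $0$ or $\binom{\binom{k}{2}}{i}$, so the weights of $f^{\Phi,k}$ and $f^{\neg\Phi,k}$ are complementary) are exactly the paper's; your version of the identity, with sum $\binom{k}{2}+1$, is in fact the precise one. However, the hardness step has a genuine gap: you pick one \emph{global} $\Psi \in \{\Phi,\neg\Phi\}$ by pigeonholing over the non-trivial $k$ and then invoke \cref{thm:main_general}, but the hypothesis of that theorem is that $\beta(k)=\binom{k}{2}-\mathsf{hw}(f^{\Psi,k})\in\omega(k)$ as $k$ ranges over \emph{all} of $\mathcal{K}(\Psi)$, not just over your chosen subsequence. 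The domain $\mathcal{K}(\Psi)$ also contains the $k$ on which $\Phi_k$ is trivial: if $\Phi_k$ is trivially true, then $k\in\mathcal{K}(\Phi)$ with $\mathsf{hw}(f^{\Phi,k})=\binom{k}{2}+1$, and if $\Phi_k$ is trivially false, then $k\in\mathcal{K}(\neg\Phi)$ with $\mathsf{hw}(f^{\neg\Phi,k})=\binom{k}{2}+1$; in either case $\beta^{\Psi}(k)$ is not even positive there. Concretely, let $\Phi_k$ be trivially true for $k\equiv 0 \pmod 3$, trivially false for $k\equiv 1 \pmod 3$, and ``even number of edges'' for $k\equiv 2\pmod 3$: then neither $\Phi$ nor $\neg\Phi$ satisfies the hypothesis of \cref{thm:main_general}, so your pigeonhole has no valid global choice, even though the theorem being proved does cover this $\Phi$.

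The repair---and the paper's actual move---is to make the choice per $k$ rather than globally: define the hybrid property $\hat\Phi$ by $\hat\Phi_k:=\Phi_k$ if $\mathsf{hw}(f^{\Phi,k})\leq \frac{1}{2}\binom{k}{2}$ and $\hat\Phi_k:=\neg\Phi_k$ otherwise. This $\hat\Phi$ is computable, the trivial slices become empty under either branch, so $\mathcal{K}(\hat\Phi)$ is precisely the infinite set of non-trivial $k$, and $\mathsf{hw}(f^{\hat\Phi,k})\leq\frac{1}{2}\binom{k}{2}$ holds for \emph{every} $k\in\mathcal{K}(\hat\Phi)$; hence $\beta(k)\in\Theta(k^2)$ on the whole domain and \cref{thm:main_general} applies, while $\#\indsubsprob(\Phi)$ and $\#\indsubsprob(\hat\Phi)$ are interreducible query-by-query via \cref{fac:invariance}. (Alternatively, you could restrict your $\Psi$ to the good subsequence, declaring it false for all other $k$; this restricted property still depends only on edge and vertex counts, is computable, and tightly reduces to $\#\indsubsprob(\Phi)$.) Apart from this quantification issue, your exponent bookkeeping and the transfer between $\Psi$ and $\neg\Psi$ are correct and coincide with the paper's argument.
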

Note that \cref{cor:number of edges} is also true for $\#\indsubsprob(\overline{\Phi})$
and $\#\indsubsprob(\neg\Phi)$, as $\neg\Phi$ and $\overline{\Phi}$ depend only on the
number of edges and vertices of a graph if and only if $\Phi$ does.
\begin{proof}
    First, assume that $\Phi_k$ is non-trivial only for finitely many $k$. Then, there
    is a constant $c$ such that for every $k > c$, the property $\Phi_k$ is either
    trivially true or trivially false.
    Hence, given as input a graph $G$ and an integer $k$, we check whether $k\leq c$.
    If this is the case, we solve the problem by brute-force. Otherwise, we check
    whether $\Phi_k$ is trivially false or trivially true.\footnote{This step is the
    reason why we only get fixed-parameter tractability and not necessarily
    polynomial-time tractability.} If $\Phi_k$ is false, we output $0$; otherwise we output
    $\binom{n}{k}$. It is immediate that this algorithm yields fixed-parameter tractability.

    Now assume that $\Phi_k$ is non-trivial for infinitely many $k$. Since for $\Phi_k$ we fix the number of vertices to be $k$, by assumption $\Phi_k$ only
    depends on the number of edges of a graph. Thus, we have
    \begin{equation}\label{eq:edges_complement}
        \mathsf{hw}(f^{\neg\Phi,k}) = \binom{k}{2} - \mathsf{hw}(f^{\Phi,k}).
    \end{equation}
    Hence, set
    \[\hat{\Phi}_k := \begin{cases} \Phi_k &\text{if } \mathsf{hw}(f^{\Phi,k}) \leq \frac{1}{2}\binom{k}{2}\\
            \neg\Phi_k &\text{if } \mathsf{hw}(f^{\Phi,k}) > \frac{1}{2}\binom{k}{2} \,.
    \end{cases} \]
    We observe that, by assumption, $\mathcal{K}(\hat{\Phi})$ is infinite, and by
    \cref{fac:invariance} the problems $\#\indsubsprob(\Phi)$ and
    $\#\indsubsprob(\hat{\Phi})$ are equivalent. By definition and
    by~\cref{eq:edges_complement}, we see that $\mathsf{hw}(f^{\hat{\Phi},k}) \leq
    \frac{1}{2}\binom{k}{2}$ and therefore $\beta(k) = \binom{k}{2} -
    \mathsf{hw}(f^{\hat{\Phi},k}) \in \Theta(k^2)$. Thus, we have
    \[o\left(\frac{\beta(k)/k}{\log(\beta(k)/k)}\right) = o\left(k/\log k\right).\]
    The claim now follows by \cref{thm:main_general}.
\end{proof}

\subsection{Monotone Graph Properties}

Recall that a property $\Phi$ is called monotone if it is closed under taking subgraphs.
The decision version of $\#\indsubsprob(\Phi)$, that is, deciding whether there is an
induced subgraph of size $k$ that satisfies $\Phi$ is known to be $\W{1}$-complete if
$\Phi$ is monotone. Further, $\Phi$ is  non-trivial and $\mathcal{K}(\Phi)$ is infinite
(this follows implicitly by a result of Khot and Raman~\cite{KhotR02}).
However, as the  reduction of Khot and Raman is not parsimonious, the reduction does not
yield $\#\W{1}$-completeness of the counting version. More
importantly, the proof of Khot and Raman uses Ramsey's Theorem and thus only implies a
conditional lower bound of $g(k)\cdot |V(G)|^{o(\log k)}$. Using our main result, we achieve a
much stronger and almost tight lower bound under ETH.

\begin{theorem}\label{thm:monotone_basic}
    Let $\Phi$ denote a computable graph property that is monotone and non-trivial.
    Suppose that $\mathcal{K}(\Phi)$ is infinite. Then $\#\indsubsprob(\Phi)$ is
    $\#\W{1}$-complete and cannot be solved in time
    \[g(k)\cdot |V(G)|^{o\left(k/\log k\right)}\] for any function $g$, unless ETH fails. The
    same is true for $\#\indsubsprob(\overline{\Phi})$ and $\#\indsubsprob(\neg\Phi)$.
\end{theorem}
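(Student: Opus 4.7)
The plan is to apply the meta-theorem \cref{thm:main_general} to $\Phi$. For that, it suffices to show that the function $\beta(k) = \binom{k}{2} - \mathsf{hw}(f^{\Phi,k})$ satisfies $\beta(k) \in \Theta(k^2)$ on $\mathcal{K}(\Phi)$; indeed, this yields $\beta(k)/k \in \Theta(k)$ and hence $(\beta(k)/k)/\log(\beta(k)/k) \in \Theta(k/\log k)$, matching the claimed lower bound. The analogous conclusions for $\#\indsubsprob(\overline{\Phi})$ and $\#\indsubsprob(\neg\Phi)$ are then immediate from the final sentence of \cref{thm:main_general}.

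The key combinatorial step will be to bound $\mathsf{hw}(f^{\Phi,k})$ away from $\binom{k}{2}$ by exploiting monotonicity. Since $\Phi$ is non-trivial, there exists some graph $H^\star$ with $\Phi(H^\star) = 0$; set $m := |V(H^\star)|$. I first observe that $m \geq 2$: if $m \leq 1$, then by monotonicity no graph with at least one vertex could satisfy $\Phi$, making $\mathcal{K}(\Phi)$ empty, which contradicts the hypothesis. Since $H^\star$ is a subgraph of $K_m$, monotonicity further implies that every $k$-vertex graph containing $K_m$ as a subgraph fails $\Phi$. Applying Tur\'an's Theorem (\cref{thm:turan}) with $r = m-1$, every $k$-vertex graph with more than $(1 - \tfrac{1}{m-1}) \cdot k^2/2$ edges contains $K_m$ as a subgraph. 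Therefore $f^{\Phi,k}_i = 0$ whenever $i > (1 - \tfrac{1}{m-1}) \cdot k^2/2$, and hence $\mathsf{hw}(f^{\Phi,k}) \leq (1 - \tfrac{1}{m-1}) \cdot k^2/2 + 1$.

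Combining these estimates gives $\beta(k) \geq \binom{k}{2}/(m-1) - O(k) \in \Theta(k^2)$, where the implicit constants depend only on $\Phi$ (through $m$). Invoking \cref{thm:main_general} then delivers both the $\#\W{1}$-completeness and the conditional lower bound of $g(k)\cdot |V(G)|^{o(k/\log k)}$ for $\#\indsubsprob(\Phi)$, as well as for $\#\indsubsprob(\overline{\Phi})$ and $\#\indsubsprob(\neg\Phi)$. I do not foresee any serious obstacle: the argument is essentially a one-line extremal-graph-theoretic estimate feeding into the ready-made meta-theorem. The remaining gap between our $k/\log k$ bound and the conjecturally tight $k$ bound is inherited from the $\log$-factor in \cref{thm:homsdicho}, and closing it further (as done in \cref{sec:refined_bounds}) will require looking at the actual edge count of the surviving graph provided by \cref{thm:main_theorem_combinatorial} and upgrading Tur\'an to the Kostochka--Thomason bound on clique-minors rather than clique-subgraphs.
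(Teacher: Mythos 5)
Your proposal is correct and follows essentially the same route as the paper's proof: use monotonicity plus Tur\'an's Theorem to show that $f^{\Phi,k}_i$ vanishes for all $i$ above a $(1-\tfrac{1}{r})\tfrac{k^2}{2}$ threshold, conclude $\beta(k)\in\Theta(k^2)$, and invoke \cref{thm:main_general} (which also covers $\overline{\Phi}$ and $\neg\Phi$). The only cosmetic difference is your choice of Tur\'an parameter $r=m-1$ versus the paper's $r=|V(F)|$, together with your explicit check that $m\geq 2$; both yield the same $\Omega(k^2)$ bound.
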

\begin{proof}
    As $\Phi$ is non-trivial, there is a graph $F$ such that $\Phi(H)$ is false for
    every $H$ that contains $F$ as a (not necessarily induced) subgraph. Set $r=|V(F)|$
    and fix $k\in \mathcal{K}(\Phi)$. By Tur\'ans Theorem (\cref{thm:turan}) we
    have that every graph $H$ on $k$ vertices with more than
    $\left(1-\frac{1}{r}\right)\cdot \frac{k^2}{2}$ edges contains the clique $K_{r+1}$
    and thus $F$ as a subgraph. Consequently, $\Phi$ is false on every graph with $k$
    vertices and more than $\left(1-\frac{1}{r}\right)\cdot \frac{k^2}{2}$ edges.
    Therefore, we have
    \[\beta(k)= \binom{k}{2} - \mathsf{hw}(f^{\Phi,k}) \geq \binom{k}{2} -
    \left(1-\frac{1}{r}\right)\cdot \frac{k^2}{2} = \frac{k^2}{2r} - \frac{k}{2} \in
\Omega(k^2). \]
    Thus $\beta(k) \in \Theta(k^2)$ and we conclude that
    \[o\left(\frac{\beta(k)/k}{\log(\beta(k)/k)}\right) = o\left(k/\log k\right) .\]
    The claim hence follows by \cref{thm:main_general}.
\end{proof}

\section{Refined Lower Bounds and Clique-Minors}\label{sec:refined_bounds}
Recall that the lower bounds of the previous section become tight if it is impossible to
``beat treewidth'', that is, if the $(\log k)^{-1}$ factor in the exponent of
\cref{thm:homsdicho} can be dropped. In this section, we show that the lower
bounds of the previous section can also be refined---and in case of sparse properties
even be made tight---without the latter assumption. This requires relying on two
results from extremal graph theory on forbidden cliques and clique-minors. The first one
is Tur\'an's Theorem, which we have seen already. The second one is a consequence of the
Kostochka-Thomason-Theorem:

\begin{theorem}[\cite{Kostochka84,Thomason01}]\label{thm:clique_minors}
    There is a constant $c > 0$ such that every graph $H$ with an average degree of at least
    $ct\sqrt{\log t}$ contains the clique $K_t$ as a minor.
\end{theorem}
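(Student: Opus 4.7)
The plan is to establish the theorem via a two-step reduction followed by a construction of $t$ pairwise-adjacent connected branch sets. Write $d = c t\sqrt{\log t}$ for the assumed average-degree bound, where $c$ will be chosen at the end.

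First, I would reduce from average degree to minimum degree. Starting from any graph $H$ with average degree at least $d$, I iteratively delete any vertex whose degree is below $d/4$. A standard double-counting argument (each deletion removes at most $d/4$ edges and exactly one vertex, so the edge-to-vertex ratio cannot decrease) shows that the remaining non-empty subgraph still has average degree at least $d/2$, and by construction has minimum degree at least $d/4$. Hence it suffices to prove the theorem under the stronger hypothesis that $H$ has minimum degree $\Omega(t\sqrt{\log t})$.

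Second, I would extract a \emph{compact} dense subgraph $H'$ of $H$, meaning a subgraph whose order is essentially of the same order of magnitude as its minimum degree. The Kostochka--Thomason approach obtains $H'$ via an extremal/inductive argument: either $H$ itself contains a $K_t$-minor (and we are done), or $H$ contains a proper subgraph with higher edge-to-vertex ratio, allowing the induction to proceed until a subgraph with extremal edge density is reached. Once a dense $H'$ of order $O(t\sqrt{\log t})$ with minimum degree $\Omega(t\sqrt{\log t})$ is available, I would construct the $K_t$-minor by a randomized branch-set argument: pick $t$ seed vertices, grow branch sets $V_1,\dots,V_t$ by random BFS-style expansion, and apply Chernoff-type concentration to show that with positive probability each $V_i$ induces a connected subgraph and every pair $V_i, V_j$ shares at least one edge of $H'$. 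Contracting each $V_i$ then yields the desired clique minor.

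The main obstacle is the $\sqrt{\log t}$ factor in the bound. Simpler arguments (such as Mader's iterated-contraction technique) only yield $\Omega(t\log t)$, and an additional $\sqrt{\log t}$-improvement is genuinely delicate; it requires exploiting the interplay between the order of $H'$ and its minimum degree in the branch-set construction, combined with sharp probabilistic estimates. The tightness of $\sqrt{\log t}$ is confirmed by random graphs $G(n,p)$ with appropriate edge probability, which have average degree $\Theta(t\sqrt{\log t})$ yet are $K_t$-minor-free with positive probability; thus the factor cannot be removed by a better proof, only the constant $c$ can be optimized.
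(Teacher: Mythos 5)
The paper itself does not prove this statement: it is imported as a black box from Kostochka and Thomason and only used, via the Handshaking Lemma, to extract large clique minors in \cref{lem:main_refined_bounds}. So your sketch has to be measured against the literature proof it is reconstructing, and there it is a correct roadmap but not yet a proof: the two steps you defer are exactly where the content of the theorem lies. Your first reduction (repeatedly deleting vertices of degree below $d/4$ to get minimum degree $\Omega(d)$) is fine and standard. But the second step, ``either $H$ has a $K_t$-minor or it has a proper subgraph of higher edge-to-vertex ratio, iterate until an extremal piece is reached,'' does not by itself yield what you need, namely a subgraph $H'$ of order $O(t\sqrt{\log t})$ whose minimum degree is still $\Omega(t\sqrt{\log t})$; subgraph-density induction alone gives no control on the order of the extremal piece. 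The actual arguments work with minor-minimal configurations, use edge \emph{contractions} (so minor-monotonicity, not subgraph-monotonicity, drives the induction), and additionally secure high \emph{connectivity} of the small dense piece, which you silently need later so that the randomly grown branch sets can be kept connected and disjoint.

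The third step hides the heart of the $\sqrt{\log t}$ bound behind the phrase ``Chernoff-type concentration.'' The decisive estimate is that for two random connected vertex sets of size $s$ in a graph whose order is comparable to its minimum degree, the probability that they fail to be joined by an edge is $\exp(-\Omega(s^2))$ (roughly $s^2$ near-independent vertex pairs must all miss), so that $s=\Theta(\sqrt{\log t})$ already beats a union bound over the $\binom{t}{2}$ pairs of branch sets, and $t$ disjoint branch sets of size $s$ fit into a dense piece of order $\Theta(t\sqrt{\log t})$. Without carrying out this computation (and the accompanying argument that connectivity, disjointness and pairwise adjacency can be achieved simultaneously with positive probability), your outline only certifies the weaker $\Omega(t\log t)$ bound you attribute to Mader. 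The closing remark on tightness via random graphs is correct (this is the Bollob\'as--Catlin--Erd\H{o}s construction) but plays no role in proving the stated upper bound.
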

Note that \cref{thm:clique_minors} is often stated in terms of the number of edges of a graph,
instead of its average degree. However, due to the Handshaking-Lemma, both statements are
equivalent.

Roughly speaking, we combine the Kostochka-Thomason-Theorem with
\cref{thm:main_theorem_combinatorial} to find graphs with large clique-minors in the
linear combination of homomorphisms associated with a graph property~$\Phi$ as given by
\cref{eq:indsubgmp}. This then allows us to derive hardness by a reduction
from the problem of \emph{finding} cliques, instead of relying on
\cref{thm:homsdicho}.

In what follows, we say that a subset $\mathcal{K}$ of the natural numbers is \emph{dense}
if there is a constant $\ell\geq 1$ such that for all $n\in \mathbb{N}_{>0}$ there
is a $k\in \mathcal{K}$ that satisfies $n\leq k \leq \ell n$. Now recall that
$\mathcal{K}(\Phi)$ is the set of all~$k$ such that $\Phi_k$ is not empty. The lower
bounds for $\#\indsubsprob(\Phi)$ in the current section require the set~$\mathcal{K}(\Phi)$ to be dense. Roughly speaking, this is to exclude artificial properties (such as for instance  $\Phi(H)= 1$ if and only if $H$ is an independent set and
$|V(H)|=2\uparrow m$ for some $m\in\mathbb{N}$, where $2\uparrow m$ is the $m$-fold
exponential tower of $2$). While the latter property satisfies the conditions of
\cref{thm:main_general}, we cannot construct a \emph{tight} reduction to
$\#\indsubsprob(\Phi)$ as the only non-trivial oracle queries satisfy $k= 2\uparrow m$ for
some $m\in\mathbb{N}$.
Fortunately, monotone properties exclude such artificial properties:
\begin{lemma}\label{lem:easy_monotone}
    Let $\Phi$ denote a non-trivial monotone graph property such that $\mathcal{K}(\Phi)$
    is infinite. Then~$\mathcal{K}(\Phi)$ is the set of all positive integers and thus
    dense.
\end{lemma}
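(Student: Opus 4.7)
The plan is to exploit monotonicity directly: any witness graph for $\Phi$ on $k'$ vertices yields witness graphs on \emph{every} smaller vertex count by simply passing to subgraphs. Combined with the assumption that $\mathcal{K}(\Phi)$ is infinite, this will force $\mathcal{K}(\Phi)$ to contain arbitrarily large integers, and then (by the passage to subgraphs) all positive integers below them as well.

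Concretely, I would proceed as follows. First, fix an arbitrary positive integer $k$. Since $\mathcal{K}(\Phi)$ is infinite by assumption, there exists some $k' \in \mathcal{K}(\Phi)$ with $k' \geq k$; choose any such $k'$ and any graph $G \in \Phi_{k'}$, which exists by definition of $\mathcal{K}(\Phi)$. Next, pick any subset $S \subseteq V(G)$ of size $k$ and consider the induced subgraph $H := G[S]$, which is in particular a subgraph of $G$ on exactly $k$ vertices. Since $\Phi$ is monotone, i.e., closed under taking subgraphs, we have $\Phi(H) = 1$. Therefore $\Phi_k$ is non-empty, i.e., $k \in \mathcal{K}(\Phi)$. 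Since $k$ was arbitrary, $\mathcal{K}(\Phi) = \mathbb{Z}_{>0}$.

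For the final claim (density), it suffices to observe that the set $\mathbb{Z}_{>0}$ is trivially dense: taking the constant $\ell = 1$, every positive integer $n$ satisfies $n \leq n \leq \ell n$ and $n \in \mathcal{K}(\Phi)$.

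There is essentially no obstacle here: the only subtle point is to make sure we use the right notion of ``subgraph'' (where vertex deletion is allowed), which is exactly the sense of monotonicity adopted in the paper (``closed under taking subgraphs''). The non-triviality hypothesis in the statement of the lemma is not needed for the argument; infiniteness of $\mathcal{K}(\Phi)$ alone suffices, since it already guarantees the existence of the graph $G$ used above.
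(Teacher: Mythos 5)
Your proof is correct and is essentially identical to the paper's argument: both pick some $k' \geq k$ in $\mathcal{K}(\Phi)$, take a witness graph on $k'$ vertices, and delete $k'-k$ vertices, using monotonicity to conclude $k \in \mathcal{K}(\Phi)$. Your observation that non-triviality is not needed is also consistent with the paper's proof, which likewise does not use it.
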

\begin{proof}
    Fix a $n\in \mathbb{N}_{>0}$. As $\mathcal{K}(\Phi)$ is infinite, there is a $k\geq n$
    in $\mathcal{K}(\Phi)$. Thus there is a graph $H\in \Phi_k$. Now delete $k-n$
    arbitrary vertices of $H$ and call the resulting graph $H'$. As $\Phi$ is monotone, we
    have that $H'\in \Phi_n$ and hence $n\in \mathcal{K}(\Phi)$ as well.
\end{proof}

The following technical lemma is the basis for the lower bounds in this section; recall
that the problem $\#\homsprob(\mathcal{H})$ asks, given a graph $H\in \mathcal{H}$ and an
arbitrary graph $G$, to compute the number of homomorphisms from $H$ to $G$.
\begin{lemma}\label{lem:main_refined_bounds}
    Let $r\geq 1$ denote a constant and let $\mathcal{H}$ denote a decidable class of graphs.
    Further, let~$\mathcal{K}(\mathcal{H})$ denote the set of all positive integers $k$
    such that there is a graph $H\in \mathcal{H}$ with $k$ vertices and at least $k^2/2r -
    k/2$ edges.
    If $\mathcal{K}(\mathcal{H})$ is dense, then $\#\homsprob(\mathcal{H})$ cannot be
    solved in time
	\[ f(|V(H)|) \cdot |V(G)|^{o(|V(H)|/\sqrt{\log |V(H)|})} \]
	for any function $f$, unless ETH fails.
\end{lemma}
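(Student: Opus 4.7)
My plan is to reduce counting $K_s$-cliques to $\#\homsprob(\mathcal{H})$ and then invoke the ETH-based lower bound that rules out an $f(s)\cdot n^{o(s)}$-time algorithm for counting $K_s$-cliques (Chen et al.). For each positive integer $s$, I would use the density of $\mathcal{K}(\mathcal{H})$ to pick some $k\in \mathcal{K}(\mathcal{H})$ of size $k = \Theta(s\sqrt{\log s})$; by the definition of $\mathcal{K}(\mathcal{H})$, there then exists $H \in \mathcal{H}$ on $k$ vertices with at least $k^2/(2r)-k/2$ edges. The Handshaking Lemma yields an average degree of at least $k/r - 1 = \Omega(s\sqrt{\log s})$, so by the Kostochka-Thomason Theorem (\cref{thm:clique_minors}) the graph $H$ contains $K_s$ as a minor, with branch sets $V_1,\ldots,V_s \subseteq V(H)$ that I may arrange to cover $V(H)$ by absorbing any leftover vertex into an adjacent branch. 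The graph $H$ itself is located by brute-force enumeration over $k$-vertex graphs, at an additive cost depending only on $s$.

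Next, I would perform the standard blow-up to reduce from a $K_s$-clique instance $G'$ on $n'$ vertices. Let $G$ have vertex set $V(H)\times V(G')$ with colour classes $C_v = \{v\}\times V(G')$ for each $v \in V(H)$. For every edge $\{u,v\}\in E(H)$ internal to some branch $V_i$, add the identity edges $\{(u,x),(v,x)\}$ for all $x \in V(G')$; for every edge $\{u,v\}\in E(H)$ with $u \in V_i$ and $v \in V_j$ for $i \neq j$, add $\{(u,x),(v,y)\}$ whenever $\{x,y\} \in E(G')$. A color-preserving homomorphism $\varphi\colon H \to G$ is then forced by the identity edges and the internal connectivity of each branch to satisfy $\varphi(v) = (v,x_i)$ for every $v \in V_i$, and the cross-branch edges force $\{x_i,x_j\} \in E(G')$ for all $i \neq j$. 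Hence $\#\cphoms{H}{G}$ equals $s!$ times the number of $K_s$-cliques in $G'$, and the tight inclusion-exclusion reduction from $\cphomsprob$ to $\#\homsprob$ cited in the proof of \cref{thm:homsdicho} lets me evaluate $\#\cphoms{H}{G}$ with oracle calls to $\#\homsprob(\mathcal{H})$ on inputs of size $|V(G)| = k\cdot n'$.

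Finally, I would extract the contradiction by parameter chasing. Assuming $\#\homsprob(\mathcal{H})$ admits an algorithm running in time $f(|V(H)|) \cdot |V(G)|^{o(|V(H)|/\sqrt{\log |V(H)|})}$, I substitute $|V(H)| = k = \Theta(s\sqrt{\log s})$, $|V(G)| = k\cdot n'$, and $k/\sqrt{\log k} = \Theta(s)$. The resulting factor $k^{o(s)} = 2^{o(s\log s)}$ is absorbed into a function of $s$ alone, and the overall running time of the reduction becomes $g(s)\cdot (n')^{o(s)}$ for some computable $g$, contradicting the ETH lower bound for counting $K_s$-cliques. I expect the hard part to be the clean handling of the blow-up when the branch sets do not already cover $V(H)$ and the verification that the augmentation incurs only an $O(|V(H)|)$ (hence $s$-bounded) blow-up in $|V(G)|$; here the density hypothesis is crucial, since it ensures that a suitable $k$ of order $\Theta(s\sqrt{\log s})$ exists for \emph{every} $s$, rather than only for infinitely many.
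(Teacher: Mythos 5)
Your argument is sound in outline, but it takes a genuinely different route from the paper. The paper keeps the pattern size tied to the clique parameter (it picks $k\in\mathcal{K}(\mathcal{H})$ with $\hat{k}\le k\le \ell\hat{k}$) and puts the $\sqrt{\log}$-slack into the \emph{host}: it builds the auxiliary graph whose vertices are the $h(k)$-cliques of the padded instance, with $h(k)=\Theta(\sqrt{\log k})$, extracts a $K_{k/h(k)}$-minor of $H$ via Kostochka--Thomason (\cref{thm:clique_minors}), and uses the known minor-to-pattern reduction for homomorphism counts so that the assumed $|V(G)|^{o(k/\sqrt{\log k})}$ algorithm yields an $n^{o(\hat{k})}$ clique test. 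You instead put the slack into the \emph{pattern}: density lets you choose $k=\Theta(s\sqrt{\log s})$, so that $K_s$ itself is a minor of $H$ and $o(k/\sqrt{\log k})=o(s)$, and you then run a colourful blow-up on $V(H)\times V(G')$ followed by the cp-Hom-to-Hom inclusion--exclusion. Both proofs use the same three ingredients (density, Kostochka--Thomason, a minor-based reduction to $\#\homsprob(\mathcal{H})$); your version avoids the clique-power host graph and the attendant rounding issues, at the price of slightly more delicate constant bookkeeping --- for $K_s$ you need $d(H)\ge cs\sqrt{\log s}$, so you must choose $k\ge c's\sqrt{\log s}$ for a constant $c'$ depending on $r$, $\ell$ and the Kostochka--Thomason constant $c$; your $\Theta/\Omega$ phrasing glosses over this, but it is easy to arrange, exactly as the paper does with its choice $r'=\max\{1/c,r+1\}$.

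One step of your construction needs repair: you cannot always ``absorb any leftover vertex into an adjacent branch.'' The lemma only guarantees that $H$ has many edges, not that it is connected, so $H$ may have entire connected components that meet no branch set, and these cannot be absorbed; your blow-up then leaves edges of $H$ incident to such vertices undefined. The fix is routine: either give every vertex $w$ of an untouched component a singleton colour class $\{(w,\star)\}$ and copy the edges of $H$ inside that component, so that these components contribute a factor exactly $1$ to $\#\cphoms{H}{G}$ (leftover vertices in components that do touch a branch set can be absorbed iteratively as you intend), or sidestep the hand-rolled blow-up altogether and invoke, as the paper does, the general tight reduction from counting homomorphisms from a minor $F$ of $H$ to counting homomorphisms from $H$, applied with $F=K_s$ and host $G'$, using $\#\homs{K_s}{G'}=s!\cdot\#\{s\text{-cliques in }G'\}$. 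With either repair your parameter chasing goes through and yields the stated lower bound.
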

\begin{proof}
    We construct a tight reduction from the problem $\textsc{Clique}$, which asks, given a
    graph $G$ and a parameter $\hat{k}\in\mathbb{N}_{>0}$, to \emph{decide} whether there
    is a clique of size $\hat{k}$ in $G$. It is known that $\textsc{Clique}$ cannot be
    solved in time $\hat{f}(\hat{k})\cdot |V(G)|^{o(\hat{k})}$ for any function $\hat{f}$,
    unless ETH fails~\cite{Chenetal05,Chenetal06}.

    Now assume there is an algorithm $\mathbb{A}$ that solves $\#\indsubsprob(\Phi)$
    in time \[ f(|V(H)|) \cdot |V(G)|^{o(|V(H)|/\sqrt{\log |V(H)|})} \]
    for some function $f$. We use $\mathbb{A}$ to solve $\textsc{Clique}$ in time
    $\hat{f}(\hat{k})\cdot |V(G)|^{o(\hat{k})}$ for some function $\hat{f}$.

    As $\mathcal{K}(\mathcal{H})$ is dense, there is a constant $\ell \geq 1$ such
    that for all $\hat{k}\in \mathbb{N}_{>0}$ there is a $k\in
    \mathcal{K}(\mathcal{H})$ such that $\hat{k} \leq k \leq \ell \hat{k}$.
    Given a graph $G$ with $n$ vertices and $\hat{k}\in \mathbb{N}_{>0}$, we construct the
    graph $G'$ as follows: We first search for a $k\in \mathcal{K}(\mathcal{H})$
    that satisfies $\hat{k} \leq k \leq \ell \hat{k}$. Note that finding $k$ is computable as
    $\mathcal{K}(\mathcal{H})$ is decidable. Then, we add $k-\hat{k}$ ``fresh'' vertices
    to $G$ and add edges between all pairs of new vertices and between all pairs of a new
    vertex and an old vertex.

    Next, let $c$ denote
    the constant from \cref{thm:clique_minors} and set
    \[h(k):= cr'\cdot \sqrt{\log ({k}/{cr'})}\,,\]
	where $r' := \max\{1/c,r+1\}$.
    Now, we construct a graph $\hat{G}$ from $G'$ as follows:
    The vertices of $\hat{G}$ are the $h(k)$-cliques of $G'$,\footnote{\label{ftnt:rounding}Formally, we have to round $h(k)$ and later $k/h(k)$. For the sake of readability, we assume that all logs and fractions yield integers, but we point out that this might require to find a $\lceil k/h(k)\rceil$-clique at the end of the proof, while the oracle can only determine the existence of a $\lfloor k/h(k)\rfloor$-clique. However, using the latter, we can easily decide whether there is a $\lceil k/h(k)\rceil$-clique  by checking for each vertex whether its neighbourhood contains a $\lfloor k/h(k)\rfloor$-clique.} and two vertices $C_1$ and
    $C_2$ of $\hat{G}$ are made adjacent if all edges between vertices in $C_1$ and
    vertices in $C_2$ are present in $G'$. Note that $\hat{G}$ has $O(n^{h(k)})$
    vertices and can be constructed in time $g(\hat{k})\cdot O(n^{h(k)})$ for some
    computable function $g$.
	\begin{claim}\label{clm:cliqueETH}
        The graph $G$ contains a clique of size $\hat{k}$ if and only if $\hat{G}$
        contains a clique  of size $k/h(k)$.
	\end{claim}
	\begin{claimproof}
        Let $C$ denote a $\hat{k}$-clique in $G$. Then we obtain a $k$-clique in $G'$ by adding
        the fresh $k-\hat{k}$ vertices to $C$. Next, partition $C$ in blocks of size $h(k)$.
        Each block will be a vertex of $\hat{G}$ and all corresponding vertices are pairwise
        adjacent by the construction of $\hat{G}$. As there are $k/h(k)$ many blocks, we found
        the desired clique in $\hat{G}$.

        For the other direction, let $\hat{C}$ denote a $k/h(k)$-clique in $\hat{G}$. By the
        definition of $\hat{G}$, each vertex of $\hat{C}$ corresponds to a clique of size
        $h(k)$ in $G'$. Furthermore, two different of those cliques cannot share a common
        vertex as the corresponding vertices in $\hat{G}$ are adjacent---recall that we do
        not allow self-loops. Consequently, the union of the $k/h(k)$ many cliques constitutes
        a clique of size $k$ in $G'$. Finally, at most $k-\hat{k}$ of the vertices of this
        clique can be fresh vertices, and thus $G$ contains a clique of size (at least)
        $\hat{k}$.
    \end{claimproof}
    Next, we search for a graph $H\in \mathcal{H}$ with $k$ vertices and at least
    $k^2/2r - k/2$ edges. By assumption, this can be done in time $g'(k)$ for some
    computable function $g$ as $\mathcal{H}$ is decidable. Now we see that
    \[d(H)= \frac{1}{k} \cdot \sum_{v \in V(H)} \mathsf{deg}(v) = \frac{2 |E(H)|}{k} \geq
    \frac{k}{r} -1\,. \]
    Set $t = \frac{k}{h(k)}$. For $k$ large enough,\footnote{If $k$ is not large enough
    for the inequalities to hold, then $k$ and thus $\hat{k}$ are bounded by a constant,
    and we can compute the number of $\hat{k}$-cliques in $G$ by brute-force.} we have
    \begin{align*}
        ct\sqrt{\log t} &= c \cdot  \frac{k}{cr'\cdot \sqrt{\log
                ({k}/{cr'})}} \cdot \sqrt{\log ({k}/{cr') -
                \log\sqrt{\log ({k}/{cr'})}}}\\
                &\leq c \cdot  \frac{k}{cr'\cdot \sqrt{\log ({k}/{cr'})}}
                \cdot \sqrt{\log ({k}/{cr'})}
                = \frac{k}{r'} \leq \frac{k}{r} - 1 \leq d(H)
    \end{align*}
    \Cref{thm:clique_minors} thus implies that $K_t$ is a minor of $H$. Furthermore, it is
    known that, whenever a graph $F$ is a minor of a graph $H$, there is a tight reduction
    from counting homomorphisms from $F$ to counting homomorphisms from $H$ --- see for
    instance~\cite[Chapter~2.5]{Roth19} and Section~3 in the full version\footnote{Full version available at \url{https://arxiv.org/abs/1902.04960}.} of~\cite{DellRW19icalp}. Consequently, we can use the algorithm $\mathbb{A}$ to
    compute the number of homomorphisms from $K_t$ to $\hat{G}$. By assumption on
    $\mathbb{A}$, this takes time at most
    \[ f(|V(H)|) \cdot |V(\hat{G})|^{o(|V(H)|/\sqrt{\log |V(H)|})}  = f(k) \cdot
    (n^{h(k)})^{o(k/\sqrt{\log k})} = f(k) \cdot n^{o(k)} \,,\]
    where the latter holds as $h(k)\in \Theta(\sqrt{\log k})$---recall that $r'$ and $c$
    are constants. However, it is easy to see that $\hat{G}$ contains a $t$-clique if and
    only if the number of homomorphisms from $K_t$ to $\hat{G}$ is at least $1$. By
    \cref{clm:cliqueETH}, this is equivalent to $G$ having a clique of size
    $\hat{k}$. As $k \in O(\hat{k})$ (recall that $\ell$ is a constant), the overall
    running time is bounded by
    \[ g(\hat{k})\cdot O(n^{h(k)}) + g'(k) + f(k) \cdot n^{o(k)} \leq \hat{f}(\hat{k})
    \cdot n^{o(\hat{k})}   \]
    for $\hat{f}(\hat{k}) := g(\hat{k}) + g'(\ell \hat{k}) + f(\ell \hat{k})$. This yields
    the desired contradiction and concludes the proof.
\end{proof}

We are now able to establish the refined lower bounds for $\#\indsubsprob(\Phi)$.
\begin{theorem}\label{thm:monotone_refined}
    Let $\Phi$ denote a computable graph property that is monotone and non-trivial.
    Suppose that $\mathcal{K}(\Phi)$ is infinite. Then $\#\indsubsprob(\Phi)$ cannot be
    solved in time
    \[g(k)\cdot |V(G)|^{o(k/\sqrt{\log k})}\] for any function $g$, unless ETH
    fails. The
    same is true for $\#\indsubsprob(\overline{\Phi})$ and $\#\indsubsprob(\neg\Phi)$.
\end{theorem}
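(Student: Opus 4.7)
The plan is to combine the structural information on the coefficient vector given by \cref{thm:main_theorem_combinatorial} with the reduction-from-\textsc{Clique} argument packaged in \cref{lem:main_refined_bounds}. First, I would invoke \cref{lem:easy_monotone} to replace the hypothesis ``$\mathcal{K}(\Phi)$ is infinite'' by the much stronger statement that $\mathcal{K}(\Phi) = \mathbb{N}_{>0}$, so densenesss is automatic. Next, since $\Phi$ is non-trivial and monotone, there is a finite graph $F$ with $\Phi(F) = 0$, and by monotonicity any $H$ containing $F$ as a (not necessarily induced) subgraph fails $\Phi$. Setting $r := |V(F)|$ and applying Tur\'an's Theorem exactly as in the proof of \cref{thm:monotone_basic} shows
\[\mathsf{hw}(f^{\Phi,k}) \leq \left(1-\tfrac{1}{r}\right)\tfrac{k^2}{2}, \qquad \text{hence}\qquad \beta(k) = \binom{k}{2} - \mathsf{hw}(f^{\Phi,k}) \geq \tfrac{k^2}{2r} - \tfrac{k}{2}.\]

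Second, for every $k \in \mathbb{N}_{>0}$ I would apply \cref{thm:main_theorem_combinatorial} to obtain a graph $H_k$ on $k$ vertices with at least $\beta(k) \geq k^2/(2r) - k/2$ edges whose coefficient in the homomorphism expansion of $\#\indsubs{\Phi,k}{\star}$ is non-zero. Since $\Phi$ is computable, the coefficient function from \cref{thm:impl_dicho} is computable, so the class $\mathcal{H} := \{H_k : k \in \mathbb{N}_{>0}\}$ is decidable; moreover $\mathcal{K}(\mathcal{H}) = \mathbb{N}_{>0}$ is dense (with constant $\ell = 1$). By \cref{lem:main_refined_bounds} applied with this $r$ and this $\mathcal{H}$, the problem $\#\homsprob(\mathcal{H})$ cannot be solved in time $f(|V(H)|)\cdot |V(G)|^{o(|V(H)|/\sqrt{\log |V(H)|})}$ unless ETH fails.

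Third, I would transfer this lower bound to $\#\indsubsprob(\Phi)$ via Complexity Monotonicity. For each fixed $k$, \cref{thm:monotonicity} applied to the graph $F = H_k$ and the coefficient function $a_k$ yields an algorithm that, given oracle access to $\#\indsubs{\Phi,k}{\star}$, computes $\#\homs{H_k}{\star}$ with only polynomial overhead in $|V(G)|$ (and an $f(k)$-overhead in $k$). Hence any algorithm solving $\#\indsubsprob(\Phi)$ in time $g(k)\cdot |V(G)|^{o(k/\sqrt{\log k})}$ would yield an algorithm for $\#\homsprob(\mathcal{H})$ violating the bound above, contradicting ETH. Finally, the statements for $\overline{\Phi}$ and $\neg\Phi$ are immediate from \cref{fac:invariance}, which expresses $\#\indsubs{\overline{\Phi},k}{G}$ and $\#\indsubs{\neg\Phi,k}{G}$ in terms of $\#\indsubs{\Phi,k}{\star}$ applied to $\overline{G}$ or $G$ itself.

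The only non-routine point is making sure that the reduction of \cref{lem:main_refined_bounds} composes tightly with Complexity Monotonicity: the oracle queries generated by \cref{thm:monotonicity} have at most $g(a_k)\cdot n$ vertices, which preserves an $n^{o(k/\sqrt{\log k})}$ running time since the blow-up factor depends only on $k$. Everything else is a direct assembly of tools already established in the paper.
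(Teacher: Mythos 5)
Your proposal is correct and follows essentially the same route as the paper's proof: Tur\'an's Theorem with $r=|V(F)|$ bounds the Hamming weight of $f^{\Phi,k}$, \cref{thm:main_theorem_combinatorial} yields graphs $H_k$ with at least $k^2/2r-k/2$ edges and non-zero coefficient, \cref{lem:easy_monotone} supplies density, and the conclusion follows by combining Complexity Monotonicity with \cref{lem:main_refined_bounds} and \cref{fac:invariance}. Your additional remarks on decidability of $\mathcal{H}$ and the tight composition of the two reductions only make explicit what the paper leaves implicit.
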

\begin{proof}
    We begin similarly as in the proof of \cref{thm:monotone_basic}: As $\Phi$ is
    non-trivial, there is a graph $F$ such that $\Phi(H)$ is false for
    every $H$ that contains $F$ as a (not necessarily induced) subgraph. Set $r=|V(F)|$
    and fix $k\in \mathcal{K}(\Phi)$. By Tur\'ans Theorem (\cref{thm:turan}) we
    have that every graph $H$ on $k$ vertices with more than
    $\left(1-\frac{1}{r}\right)\cdot \frac{k^2}{2}$ edges contains the clique $K_{r+1}$
    and thus $F$ as a subgraph. Consequently, $\Phi$ is false on every graph with $k$
    vertices and more than $\left(1-\frac{1}{r}\right)\cdot \frac{k^2}{2}$ edges.
    We now use \cref{thm:main_theorem_combinatorial} and obtain a computable and unique
    function $a$ of finite support such that
    \[\#\indsubs{\Phi,k}{\ast} = \sum_{H \in \mathcal{G}} a(H) \cdot \#\homs{H}{\ast},\]
    satisfying that there is a graph $H_k$ on $k$ vertices and at least
    \[\binom{k}{2}-\mathsf{hw}(f^{\Phi,k})+1  \geq \binom{k}{2} -
    \left(1-\frac{1}{r}\right)\cdot \frac{k^2}{2} + 1 \geq \frac{k^2}{2r} - \frac{k}{2}\]
    edges such that $a(H_k) \neq 0$.
    Consequently, Complexity Monotonicity (\cref{thm:monotonicity}) yields a tight
    reduction from the problem $\#\homsprob(\mathcal{H})$ where $\mathcal{H}:=\{H_k~|~
    k\in \mathcal{K}(\Phi)\}$. By the previous observation, the graph $H_k$ has $k$
    vertices and at least $k^2/2r - k/2$ many edges, and by \cref{lem:easy_monotone}
    the set of $k$ such that $H_k\in \mathcal{H}$ is dense. Thus we can use
    \cref{lem:main_refined_bounds}, which concludes the proof---note that the
    results for $\#\indsubsprob(\overline{\Phi})$ and $\#\indsubsprob(\neg\Phi)$ follow by
    \cref{fac:invariance}.
\end{proof}

We continue with the refined lower bound for properties that only depend on the number of
edges of a graph; in this case, we have to assume density.
\begin{theorem}\label{cor:number of edges_refined}
	Let $\Phi$ denote a computable graph property that only depends on the number of edges of a
    graph. If the set of $k$ for which $\Phi_k$ is non-trivial is dense, then
    $\#\indsubsprob(\Phi)$ cannot be solved in time \[g(k)\cdot
    |V(G)|^{o(k/\sqrt{\log k})}\] for any function $g$,
	unless ETH fails.
\end{theorem}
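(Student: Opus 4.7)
The plan is to follow the proof of \cref{thm:monotone_refined} as closely as possible: for every relevant $k$, extract a graph $H_k$ on $k$ vertices with $\Omega(k^2)$ edges whose coefficient in the homomorphism expansion from \cref{eq:indsubgmp} is non-zero, then apply Complexity Monotonicity (\cref{thm:monotonicity}) together with \cref{lem:main_refined_bounds}. The only essential difference from the monotone case is that we cannot invoke Tur\'an's Theorem to bound $\mathsf{hw}(f^{\Phi,k})$ from above. Instead, I would exploit the same complementation trick used in the proof of \cref{cor:number of edges}.

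More precisely, define a new graph property $\hat{\Phi}$ by
\[
\hat{\Phi}_k := \begin{cases} \Phi_k & \text{if } \mathsf{hw}(f^{\Phi,k}) \leq \tfrac{1}{2}\binom{k}{2}, \\ \neg \Phi_k & \text{otherwise.} \end{cases}
\]
Since $\Phi_k$ depends only on the number of edges, the identity $\mathsf{hw}(f^{\Phi,k}) + \mathsf{hw}(f^{\neg\Phi,k}) = \binom{k}{2}$ from \cref{eq:edges_complement} guarantees $\mathsf{hw}(f^{\hat{\Phi},k}) \leq \tfrac{1}{2}\binom{k}{2}$ for every $k$. Moreover, $\hat{\Phi}_k$ is non-trivial if and only if $\Phi_k$ is non-trivial, so the set of $k$ for which $\hat{\Phi}_k$ is non-trivial remains dense, and \cref{fac:invariance} provides a tight reduction between $\#\indsubsprob(\hat{\Phi})$ and $\#\indsubsprob(\Phi)$ (preserving the parameter $k$).

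For each such $k$, applying \cref{thm:main_theorem_combinatorial} to $\hat{\Phi}$ yields a graph $H_k$ on $k$ vertices with at least
\[
\binom{k}{2} - \mathsf{hw}(f^{\hat{\Phi},k}) + 1 \;\geq\; \tfrac{1}{2}\binom{k}{2} + 1 \;\geq\; \frac{k^2}{4} - \frac{k}{2}
\]
edges, whose coefficient in the associated linear combination of homomorphism counts is non-zero. Setting $\mathcal{H} := \{H_k : k \text{ with } \hat{\Phi}_k \text{ non-trivial}\}$, Complexity Monotonicity (\cref{thm:monotonicity}) produces a tight reduction from $\#\homsprob(\mathcal{H})$ to $\#\indsubsprob(\hat{\Phi})$. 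The set of $k$ for which there is $H \in \mathcal{H}$ with $k$ vertices and at least $k^2/(2\cdot 2) - k/2$ edges coincides with the set where $\Phi_k$ is non-trivial and is therefore dense by hypothesis, so \cref{lem:main_refined_bounds} applied with $r = 2$ delivers the claimed $g(k)\cdot |V(G)|^{o(k/\sqrt{\log k})}$ lower bound under ETH.

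The main obstacle to check is decidability of $\mathcal{H}$, which is required by \cref{lem:main_refined_bounds}. This is routine: since $\Phi$ is computable, the coefficient function $a$ from \cref{eq:indsubgmp} is computable as well (by \cref{thm:impl_dicho}), so for any given $k$ one can effectively enumerate all graphs on $k$ vertices and select $H_k$ as any graph among them having at least $k^2/4 - k/2$ edges with $a(H_k) \neq 0$, whose existence is guaranteed by Step~2 above.
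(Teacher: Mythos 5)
Your proposal is correct and follows essentially the same route as the paper: the complementation trick from the proof of \cref{cor:number of edges} to force $\mathsf{hw}(f^{\hat{\Phi},k}) \leq \tfrac{1}{2}\binom{k}{2}$, then \cref{thm:main_theorem_combinatorial} to extract a graph $H_k$ with at least $\tfrac{k^2}{4}-\tfrac{k}{2}$ edges and non-zero coefficient, and finally Complexity Monotonicity together with \cref{lem:main_refined_bounds} applied with $r=2$. The added remark on decidability of $\mathcal{H}$ is a fine (and correct) bit of extra care that the paper leaves implicit.
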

\begin{proof}
    We use the same set-up as in the proof of \cref{cor:number of edges}. In
    particular, we obtain $\hat{\Phi}$ such that $\#\indsubsprob(\Phi)$ and
    $\#\indsubsprob(\hat{\Phi})$ are equivalent, $\mathcal{K}(\hat{\Phi})$ is dense, and
    $\mathsf{hw}(f^{\hat{\Phi},k}) \leq \frac{1}{2}\binom{k}{2}$ for all $k\in
    \mathcal{H}(\hat{\Phi})$. We use \cref{thm:main_theorem_combinatorial} and
    obtain a computable and unique function $a$ of finite support such that
	\[\#\indsubs{\Phi,k}{\ast} = \sum_{H \in \mathcal{G}} a(H) \cdot \#\homs{H}{\ast},\]
    satisfying that there is a graph $H_k$ on $k$ vertices and at least
    \[\binom{k}{2}-\mathsf{hw}(f^{\Phi,k})+1  \geq \frac{k^2}{4} - \frac{k}{2}\]
    edges such that $a(H_k) \neq 0$. The application of Complexity Monotonicity
    (\ref{thm:monotonicity}) and \cref{lem:main_refined_bounds} is now similar to the
    previous proof; the only difference is, that we can choose $r=2$.
\end{proof}

As a final result in this section, we establish a tight conditional lower bound for sparse
properties; recall that a property $\Phi$ is called \emph{sparse} if there is a
constant $s$ only depending on $\Phi$ such that $\Phi$ is false on every graph with $k$
vertices and more than $sk$ edges. Instead of relying in the Kostochka-Thomason-Theorem,
it suffices to use Tur\'an's Theorem, however.

\begin{theorem}\label{thm:sparse_tight}
    Let $\Phi$ denote a computable sparse graph property such that $\mathcal{K}(\Phi)$ is
    dense. Then $\#\indsubsprob(\Phi)$ cannot be solved in time \[g(k)\cdot
    |V(G)|^{o\left(k\right)}\] for any function $g$,
    unless ETH fails. The same is true for $\#\indsubsprob(\overline{\Phi})$ and
    $\#\indsubsprob(\neg\Phi)$.
\end{theorem}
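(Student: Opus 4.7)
The plan is to combine Theorem \ref{thm:main_theorem_combinatorial} with Tur\'an's Theorem to extract from the homomorphism-basis decomposition of $\#\indsubs{\Phi,k}{\star}$ a graph $H_k$ containing a \emph{linear-sized} clique subgraph; this avoids the Kostochka--Thomason step and yields a direct reduction from $\textsc{Clique}$. Concretely, let $s$ be the sparsity constant of $\Phi$. Then every graph on $k$ vertices satisfying $\Phi$ has at most $sk$ edges, so $f^{\Phi,k}_i = 0$ for all $i > sk$, giving $\mathsf{hw}(f^{\Phi,k}) \leq sk + 1$. Applying Theorem \ref{thm:main_theorem_combinatorial} to every $k\in\mathcal{K}(\Phi)$, I obtain a computable function $a$ of finite support with
\[
\#\indsubs{\Phi,k}{\star} = \sum_{H\in\mathcal{G}} a(H)\cdot \#\homs{H}{\star},
\]
together with a graph $H_k$ on $k$ vertices and at least $\binom{k}{2} - sk$ edges such that $a(H_k)\neq 0$.

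Next I would apply Tur\'an's Theorem (\cref{thm:turan}) to $H_k$: for any $r < k/(1+2s)$ we have $\binom{k}{2}-sk > (1-1/r)\cdot k^{2}/2$ (for $k$ sufficiently large), so $H_k$ contains $K_{t}$ as a subgraph, and in particular as a minor, for some $t \geq \lfloor k/(1+2s)\rfloor + 1$. I would then set up the reduction from $\textsc{Clique}$ analogously to \cref{lem:main_refined_bounds}, but without any block-construction: given an instance $(G,\hat k)$ of $\textsc{Clique}$, use the density of $\mathcal{K}(\Phi)$ to find $k\in\mathcal{K}(\Phi)$ with $(1+2s)\hat k \leq k \leq \ell(1+2s)\hat k$, so that $t\geq \hat k$ and $k\in O(\hat k)$. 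Composing the tight reduction from counting homomorphisms out of a minor (cited in the proof of \cref{lem:main_refined_bounds}) with Complexity Monotonicity (\cref{thm:monotonicity}) yields a reduction from computing $\#\homs{K_t}{G}$ to $\#\indsubsprob(\Phi)$ on instances of host-graph size $O(|V(G)|)$ and parameter $O(k)=O(\hat k)$. Since $\#\homs{K_t}{G} = t!\cdot\#(\text{$t$-cliques of $G$})$, an algorithm for $\#\indsubsprob(\Phi)$ running in time $g(k)\cdot|V(G)|^{o(k)}$ would decide the existence of a $\hat k$-clique (and hence a $t$-clique, as $t\geq \hat k$ via padding with a universal vertex if needed) in time $\hat f(\hat k)\cdot |V(G)|^{o(\hat k)}$, contradicting the ETH-based lower bound for $\textsc{Clique}$.

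For the $\#\indsubsprob(\overline{\Phi})$ and $\#\indsubsprob(\neg\Phi)$ variants, I would invoke \cref{fac:invariance}: complementing the host graph turns an algorithm for $\overline{\Phi}$ into one for $\Phi$ (with the same running time), and subtracting from $\binom{|V(G)|}{k}$ does the same for $\neg\Phi$; the hardness then transfers immediately.

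The main technical point to get right is the parameter matching: we need $k\in O(\hat k)$ to turn an $|V(G)|^{o(k)}$ bound on $\#\indsubsprob(\Phi)$ into an $|V(G)|^{o(\hat k)}$ bound on $\textsc{Clique}$, and simultaneously $t\geq \hat k$ so that a $t$-clique oracle actually decides the original instance. Both constraints are satisfied by choosing the search window for $k$ to start above $(1+2s)\hat k$, and this is precisely where the density assumption on $\mathcal{K}(\Phi)$ is essential---without it we could not guarantee a suitable $k$ exists in the needed linear range. Everything else (the $\omega(k)$-type growth of $\beta(k)$, the $\sqrt{\log k}$-loss through Kostochka--Thomason) is avoided in the sparse case because Tur\'an already delivers a clique subgraph of linear order.
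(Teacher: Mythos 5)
Your proposal is correct and follows essentially the same route as the paper's own proof: sparsity bounds $\mathsf{hw}(f^{\Phi,k})$ by roughly $sk$, \cref{thm:main_theorem_combinatorial} yields a $k$-vertex graph $H_k$ with $\approx\binom{k}{2}-sk$ edges and nonzero coefficient, Tur\'an's Theorem gives a clique of linear size $\Theta(k/s)$ in $H_k$, and Complexity Monotonicity plus the minor reduction, together with density of $\mathcal{K}(\Phi)$ and padding with universal vertices, transfers the ETH lower bound for \textsc{Clique} (with $\overline{\Phi}$ and $\neg\Phi$ handled via \cref{fac:invariance}). The only differences are cosmetic (your rounding of the Tur\'an parameter is slightly tighter than the paper's choice $r=\lceil k/(2(s+1)+1)\rceil$, and the padding uses $t-\hat k$ universal vertices rather than one), and they do not affect the argument.
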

\begin{proof}
    Let $s$ denote the constant given by the definition of sparsity, and fix $k\in
    \mathcal{K}(\Phi)$. We use \cref{thm:main_theorem_combinatorial} and obtain
    a computable and unique function $a$ of finite support such that
	\[\#\indsubs{\Phi,k}{\ast} = \sum_{H \in \mathcal{G}} a(H) \cdot \#\homs{H}{\ast},\]
    satisfying that there is a graph $H_k$ on $k$ vertices and at least
    $\binom{k}{2}-\mathsf{hw}(f^{\Phi,k})+1$ edges such that $a(H_k) \neq 0$.
	 Now choose $r := \lceil\frac{k}{2(s+1)+1}\rceil$,
	and observe that
    \[\binom{k}{2}-\mathsf{hw}(f^{\Phi,k})+1 > \binom{k}{2}-sk >  \binom{k}{2}-(s+1)k \geq
    \left(1-\frac{1}{r}\right)\cdot \frac{1}{2}|V(H_k)|^2\,.\]
    By Tur\'ans Theorem (\cref{thm:turan}), $H_k$ hence contains $K_{r+1}$ as a subgraph
    and, in particular, $K_r$ as a minor. Furthermore, Complexity Monotonicity
    (\cref{thm:monotonicity}) shows that we can, given a graph $G$ compute
    $\#\homs{H_k}{G}$ in linear time if we are given oracle access to
    $\#\indsubs{\Phi,k}{\star}$. As we have seen in the proof of Lemma~\ref{lem:main_refined_bounds}, it is known that, whenever a graph $F$ is a
    minor of a graph $H$, there is a tight reduction from counting homomorphisms from $F$
    to counting homomorphisms from $H$~\cite{DellRW19icalp,Roth19}. In particular,
    this implies that we can compute $\#\homs{K_{r}}{G}$ in linear time if we are given
    oracle access to $\#\indsubs{\Phi,k}{\star}$. Note further that $\#\homs{K_{r}}{G}$
    is at least~$1$ if and only if $G$ contains a clique of size $r$.

	We continue similarly as in the proof of \cref{lem:main_refined_bounds}:
    Assume that there is an algorithm $\mathbb{A}$ that solves $\#\indsubsprob(\Phi)$ in
    time $g(k)\cdot |V(G)|^{o\left(k\right)}$ for some function $g$. We show that
    $\mathbb{A}$ can be used to solve the problem of \emph{finding} a clique of size
    $\hat{k}$ in a graph $G$ in time $\hat{f}(\hat{k})\cdot|V(G)|^{o(\hat{k})}$
    for some function $\hat{f}$. Given $\hat{k}$ and $G$, search for the smallest
    $k\in\mathcal{K}(\Phi)$ such that
	\[\hat{k} \leq \lceil\frac{k}{2(s+1)+1}\rceil \,,\]
    and note that finding such a $k$ is computable in time only depending on $\hat{k}$ as
    $\Phi$ is computable. Note further that $k\in O(\hat{k})$ as $s$ is a constant and
    $\mathcal{K}(\Phi)$ is dense.

    We construct the graph $G'$ from $G$ by adding $\lceil\frac{k}{2(s+1)+1}\rceil - \hat{k}$ ``fresh''
    vertices and adding edges between every pair of new vertices and every pair containing
    one new and one old vertex. It is easy to see that $G$ has a clique of size $\hat{k}$
    if and only if $G'$ has a clique of size $\lceil\frac{k}{2(s+1)+1}\rceil$.
    By the analysis and
    assumptions above, we can decide whether the latter is true by using $\mathbb{A}$ in
    time $g(k)\cdot |V(G')|^{o\left(k\right)}$. As $|V(G')|\in O(|V(G)|)$ and $k\in
    O(\hat{k})$, the overall time to decide whether $G$ has a clique of size $\hat{k}$ is
    hence bounded by
		$\hat{f}(\hat{k})\cdot|V(G)|^{o(\hat{k})}$
    for some function $\hat{f}$, which is impossible, unless ETH
    fails~\cite{Chenetal05,Chenetal06}. The results for $\#\indsubsprob(\overline{\Phi})$
    and $\#\indsubsprob(\neg\Phi)$ follow by Fact~\ref{fac:invariance}.
\end{proof}

\section{Hereditary Graph Properties}\label{sec:hereditary}
While we established hardness for a variety of graph properties with $f$-vectors of small
hamming weight in the previous sections, we observe that our meta-theorem
(\cref{thm:main_general}) does not apply for properties~$\Phi$ for which
$f^{\Phi,k}$, $f^{\neg\Phi,k}$ and $f^{\overline{\Phi},k}$ have large hamming weight.
A well-studied class of properties containing examples of such $\Phi$ is the family of
hereditary graph properties: In contrast to monotone properties, which are closed under
taking subgraphs, a property $\Phi$ is called \emph{hereditary} if it is closed under
taking \emph{induced} subgraphs. It is a well-known fact that every hereditary property
$\Phi$ is characterized by a (possibly infinite) set $\Gamma(\Phi)$ of forbidden induced
subgraphs, that is
\[\Phi(G) = 1 \Leftrightarrow \forall H \in \Gamma(\Phi): \#\indsubs{H}{G}=0\,. \]
Given any graph $H$, the property $\Phi$ of being \emph{$H$-free}, that is, not containing
$H$ as an induced subgraph, is hereditary with  $\Gamma(\Phi)=\{H\}$.

In this section, we settle the hardness-question for many hereditary graph properties as
well. Note that $\Phi$ is hereditary if and only if its inverse $\overline{\Phi}$ is. In
particular, $H\in \Gamma(\Phi) \Leftrightarrow \overline{H} \in \Gamma(\overline{\Phi})$.
\begin{restatable}{theorem}{hermn}
    Let $H$ denote graph that is not the trivial graph with a single vertex and let $\Phi$
    denote the property $\Phi(G)=1:\Leftrightarrow$ $G$ is $H$-free. Then
    $\#\indsubsprob(\Phi)$ is $\#\W{1}$-complete and cannot be solved in time
	\[g(k)\cdot |V(G)|^{o(k)}\] for any function $g$, unless ETH fails. The same is true for
    the problem $\#\indsubsprob(\neg\Phi)$.\ifx\hermnt\undefined{\lipicsEnd}\fi
\end{restatable}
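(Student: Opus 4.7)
The plan is to derive the theorem from Theorem~\ref{thm:critical_hardness_intro}: since the $H$-free property $\Phi$ is hereditary and is defined by the single forbidden induced subgraph $H$ (i.e.\ $\Gamma(\Phi) = \{H\}$), it suffices to establish that $\Phi$ is \emph{critical}. Given this, Theorem~\ref{thm:critical_hardness_intro} immediately yields the $\#\W{1}$-completeness as well as the conditional lower bound $g(k)\cdot |V(G)|^{o(k)}$ under ETH, and the analogous statement for $\#\indsubsprob(\neg\Phi)$ follows from Fact~\ref{fac:invariance}: the identity $\#\indsubs{\neg\Phi,k}{G} = \binom{|V(G)|}{k} - \#\indsubs{\Phi,k}{G}$ shows that $\#\indsubsprob(\Phi)$ and $\#\indsubsprob(\neg\Phi)$ are interreducible with only a polynomial overhead.

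The entire technical burden therefore lies in verifying criticality of $\Phi$ in the generic case where $H$ contains at least one edge. Concretely, I need to exhibit an edge $e = \{u,v\} \in E(H)$ such that the graph $H^\ast$ obtained from $H$ by deleting $e$ and then cloning each of $u$ and $v$ is again $H$-free. The strategy I would follow is an extremal choice: pick $e = \{u,v\}$ so that $u$ has minimum degree in $H$ and, subject to that, $v$ has minimum degree among the neighbors of $u$. Assume for contradiction that $H$ embeds as an induced subgraph of $H^\ast$ via some injection $\varphi\colon V(H) \hookrightarrow V(H^\ast)$. Since $|V(H^\ast)| = |V(H)| + 2$, exactly two vertices of $H^\ast$ are omitted by $\varphi$; a case distinction on which two vertices of $\{u,u',v,v'\}\cup (V(H)\setminus\{u,v\})$ are missed, combined with the fact that $\{u,u'\}$ and $\{v,v'\}$ are pairs of (false) twins in $H^\ast$, forces $\varphi$ to identify twins and thereby to factor through the subgraph $H^\ast - \{u',v'\} = H - e$. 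Since $H - e$ has strictly fewer edges than $H$ on the same vertex set, it cannot contain $H$ as an induced subgraph, and the extremality of the degrees of $u$ and $v$ rules out the remaining ``shifted'' cases in which $\varphi$ uses $u'$ or $v'$ as an image. This contradiction establishes criticality.

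The edgeless boundary case needs separate care: if $E(H) = \emptyset$ then $H = \overline{K_n}$ for some $n \geq 2$, there is no edge of $H$ to delete, and the criticality framework as stated does not apply directly. Here I would pass to the inverse property: by Fact~\ref{fac:invariance}, $\#\indsubs{\Phi,k}{G} = \#\indsubs{\overline{\Phi},k}{\overline{G}}$, and $\overline{\Phi}$ is the ``$K_n$-free'' property, which falls under the generic case treated above ($K_n$ has edges whenever $n \geq 2$). Thus hardness for $\overline{\Phi}$ transfers to $\Phi$ via complementation in linear time.

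The main obstacle I foresee is the case analysis in the criticality proof itself: graphs $H$ with many automorphisms or with pre-existing twin vertices might admit ``accidental'' induced copies of $H$ inside $H^\ast$ that are not easily ruled out by degree comparisons alone. Getting the choice of $e$ right — so that the twin structure introduced by cloning is incompatible with every possible placement of $\varphi(V(H))$ in $V(H^\ast)$ — is the delicate combinatorial step, and is precisely the content that the acknowledgements attribute to D.\ Marx. Once this lemma is in hand, the rest of the argument is a direct appeal to Theorem~\ref{thm:critical_hardness_intro} together with Fact~\ref{fac:invariance}.
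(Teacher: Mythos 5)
Your high-level architecture matches the paper: establish that the $H$-free property is critical, invoke \cref{thm:critical_hardness_intro}, dispose of $\neg\Phi$ via \cref{fac:invariance}, and handle edgeless $H$ by passing to $\overline{H}$. The gap is in the criticality lemma itself. Your extremal edge choice fails, and more fundamentally it can happen that \emph{no} edge of $H$ is critical, so that the critical edge must be found in $\overline{H}$ even when $H$ has edges---whereas you reserve the passage to $\overline{H}$ for the edgeless case only. Concretely, take $H=P_3$ with vertices $a,b,c$ and edges $\{a,b\},\{b,c\}$. Your rule picks $u=a$ (minimum degree) and $v=b$; deleting $\{a,b\}$ and adding even a single clone $b'$ of $b$ yields a graph in which $b,c,b'$ induce a $P_3$ (clones are non-adjacent false twins), so the exploded graph is not $H$-free, and by symmetry the same holds for the other edge. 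Thus $P_3$ has no critical edge at all, and the ``shifted'' embedding that uses a vertex together with its own clone is exactly the case your degree-extremality argument cannot rule out. The paper's \cref{lem:hfree_critical} uses a different invariant: the false-twin partition. An edge whose two endpoints are singleton blocks of that partition is critical (compare edge counts of the quotient graphs $H\!\!\downarrow$ before and after the explosion), and such an edge always exists in $H$ or in $\overline{H}$ because no vertex can have a false twin in both $H$ and $\overline{H}$; when $H$ has adjacent-to-everything-relevant false twins (as in $P_3$ or stars), the critical edge lives in $\overline{H}$.

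A second, independent problem is that you only verify the explosion with one clone of each endpoint ($|V(H)|+2$ vertices), but the formal definition of a critical edge---and the reduction from counting independent sets in bipartite graphs behind \cref{thm:critical_hardness_intro} (see \cref{lem:reduction_hereditary})---requires $H^{x,y}_{u,v}$ to be $H$-free for \emph{all} $x,y$, since the reduction explodes the edge with $x,y$ equal to the sizes of the two sides of the bipartite instance. The two conditions genuinely differ: for $H=K_{1,n}$ with $n\geq 3$, your choice of a minimum-degree edge (leaf--center) passes the one-clone test, yet taking $n$ clones of the center produces an induced $K_{1,n}$ (a leaf together with the $n$ pairwise non-adjacent center clones), so the reduction would break on instances your check accepts. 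Both issues are repaired by the false-twin argument, which yields $H$-freeness of the exploded graphs for arbitrary $x,y$; the remaining steps of your proposal (application of the hardness theorem for critical properties and the complementation identities) are sound and coincide with the paper.
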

\def\hermnt{1}

\noindent We start with some terminology used in this section.
Given a graph $H$, a pair $(u,v) \in V(H)^2$, and two non-negative integers $x$ and $y$, we
construct the \emph{exploded} graph $H(u,v,x,y)$ by adding $x-1$ clones of $u$ and $y-1$
clones of $v$, including all incident edges; if $x$ or $y$ are zero, then we delete $u$ or
$v$, respectively. Given an edge $e=\{u,v\}$ of $H$ and two non-negative integers $x$ and
$y$, we define the $e$-\emph{exploded} graph as $H_{u,v}^{x,y} := (V(H),E(H)\setminus \{u,
v\})(u,v,x,y)$. Consult \cref{fig:explosion} for a visualization. An edge $e=\{u,v\}$ of a
graph $H$ is called \emph{critical}, if $\#\indsubs{H}{H^{x,y}_{u,v}}=0$ for every
pair~$x,y\in \mathbb{N}_{\geq 0}$.

Now let $\Phi$ denote a hereditary graph property and let
$\Gamma(\Phi)$ denote the associated set of forbidden induced subgraphs. We say that $\Phi$
has a \emph{critical edge} if there is a graph $H\in \Gamma(\Phi)$ and an edge $\{u,v\}\in
E(H)$ such that  for all positive integers $x$ and
$y$, the graph $H_{u,v}^{x,y}$ satisfies $\Phi$, that is, for every $\hat{H}\in\Gamma(\Phi)$,
we have \[\#\indsubs{\hat{H}}{H_{u,v}^{x,y}} = 0.\]
Finally, we say that a hereditary property $\Phi$ is \emph{critical} if either $\Phi$ or
its inverse $\overline{\Phi}$ has a critical edge. We will see later in this section that
every critical property $\Phi$ will induce hardness of $\#\indsubsprob(\Phi)$.

\usetikzlibrary{calc,shapes,fit}

\tikzset{vertex/.style={circle, fill, inner sep=1.5pt, outer sep=1.9pt}}
\tikzset{cvertex/.style={circle, fill, inner sep=1pt, outer sep=1.3pt, lipicsGray!60}}
\tikzset{ccvertex/.style={circle, fill, inner sep=1.3pt, outer sep=1.7pt, lipicsGray!80}}
\tikzset{svertex/.style={circle,draw=white, line width=1.2pt,fill=black, inner sep=1.8pt, outer         sep=0pt}}

\def\ccol{red}
\def\ecol{blue}
\def\icol{red!50!blue}

\tikzset{edge/.style={very thick}}
\tikzset{ccedge/.style={lipicsGray!50}}
\tikzset{cedge/.style={lipicsGray!75}}
\tikzset{dedge/.style={gray,thick, double=white, double distance=9pt}}
\tikzset{sdedge/.style={white,thick, double=black, double distance=1.2pt}}

\begin{figure}[t]
    \centering
    \begin{tikzpicture}
        \begin{scope}
            \node[vertex,label=right:{$u$}] (u) at (0,0) {};
            \node[vertex,label=right:{$v$}] (v) at (0,-2) {};

            \draw[edge] (u) -- (v);

            \node (e1) at (-2.2, .1) {};
            \node (e2) at (.4, -2.1) {};
            \node[cvertex] (x1) at (-1.8, -1.5) {};
            \node[cvertex] (x2) at (-2, -.5) {};
            \node[cvertex] (x3) at (-1.6, -1) {};
            \node[fit=(x1)(x2)(x3), rounded corners=3pt, draw=lipicsGray!40,
                fill=lipicsGray!10] {};
            \node[fit=(e1)(e2), rounded corners=5pt, draw] (f) {};
            \node at ($(f.south) + (0,-.3)$) {$H$};

            \node[cvertex] (x1) at (-1.7, -1.5) {};
            \node[cvertex] (x2) at (-2, -.5) {};
            \node[cvertex] (x3) at (-1.6, -.84) {};
            \node[cvertex] (x4) at (-2, -1.16) {};
            \draw[cedge] (u) -- (x3) -- (v);
            \draw[cedge] (u) -- (x2);
            \draw[cedge] (v) -- (x1) -- (x2) -- (x3) -- (x4);
        \end{scope}
        \begin{scope}[xshift=12em]
            \node[vertex,label=right:{$u$}] (u) at (0,0) {};
            \node[ccvertex,label=right:{$u$}] (u1) at (0,-.4) {};
            \node[ccvertex,label=right:{$u$}] (u2) at (0,-.8) {};

            \node[ccvertex,label=right:{$v$}] (v1) at (0,-1.6) {};
            \node[vertex,label=right:{$v$}] (v) at (0,-2) {};

            \node (e1) at (-2.2, .1) {};
            \node (e2) at (.4, -2.1) {};
            \node[cvertex] (x1) at (-1.8, -1.5) {};
            \node[cvertex] (x2) at (-2, -.5) {};
            \node[cvertex] (x3) at (-1.6, -1) {};
            \node[fit=(x1)(x2)(x3), rounded corners=3pt, draw=lipicsGray!40,
                fill=lipicsGray!10] {};
            \node[fit=(e1)(e2), rounded corners=5pt, draw] (f) {};
            \node at ($(f.south) + (0,-.3)$) {$H^{3,2}_{u, v}$};

            \node[cvertex] (x1) at (-1.7, -1.5) {};
            \node[cvertex] (x2) at (-2, -.5) {};
            \node[cvertex] (x3) at (-1.6, -.84) {};
            \node[cvertex] (x4) at (-2, -1.16) {};
            \draw[ccedge] (u1) -- (x3) -- (v1);
            \draw[ccedge] (u2) -- (x3);
            \draw[ccedge] (u1) -- (x2);
            \draw[ccedge] (u2) -- (x2);
            \draw[ccedge] (v1) -- (x1);
            \draw[cedge] (u) -- (x3) -- (v);
            \draw[cedge] (u) -- (x2);
            \draw[cedge] (v) -- (x1) -- (x2) -- (x3) -- (x4);
        \end{scope}
        \begin{scope}[xshift=24em]
            \node[ccvertex,label=right:{$v$}] (v1) at (0,-1.6) {};
            \node[ccvertex,label=right:{$v$}] (v2) at (0,-1.2) {};
            \node[vertex,label=right:{$v$}] (v) at (0,-2) {};

            \node (e1) at (-2.2, .1) {};
            \node (e2) at (.4, -2.1) {};
            \node[cvertex] (x1) at (-1.8, -1.5) {};
            \node[cvertex] (x2) at (-2, -.5) {};
            \node[cvertex] (x3) at (-1.6, -1) {};
            \node[fit=(x1)(x2)(x3), rounded corners=3pt, draw=lipicsGray!40,
                fill=lipicsGray!10] {};
            \node[fit=(e1)(e2), rounded corners=5pt, draw] (f) {};
            \node at ($(f.south) + (0,-.3)$) {$H^{0, 3}_{u, v}$};

            \node[cvertex] (x1) at (-1.7, -1.5) {};
            \node[cvertex] (x2) at (-2, -.5) {};
            \node[cvertex] (x3) at (-1.6, -.84) {};
            \node[cvertex] (x4) at (-2, -1.16) {};
            \draw[ccedge] (v2) -- (x3) -- (v1);
            \draw[ccedge] (v1) -- (x1) -- (v2);
            \draw[cedge] (x3) -- (v);
            \draw[cedge] (v) -- (x1) -- (x2) -- (x3) -- (x4);
        \end{scope}
    \end{tikzpicture}
    \caption{Different explosions of the edge $\{u, v\}$ of  a graph $H$.}\label{fig:explosion}
\end{figure}
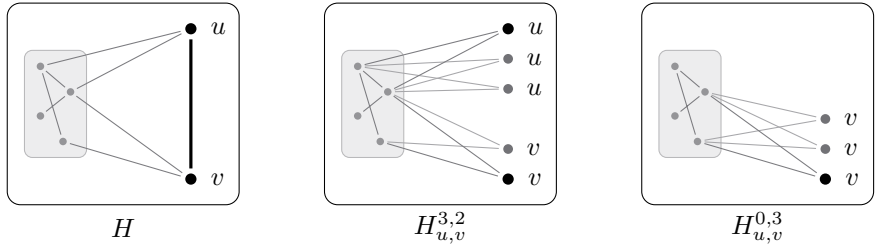

First, we establish that for every graph $H$ with at least two vertices, the property
``$H$-free'' is critical. To this end, we rely on neighbour-sharing vertices: Given
two vertices $u$ and $v$ of a graph $H$, we say that $u$ and $v$ are \emph{false twins} if
they have the same set of adjacent vertices. Note that, in particular, false twins cannot
be adjacent as we consider graphs without self-loops. Furthermore, for a graph $H$, we
define the partition $P(H)$ by adding two vertices to the same block if and only if they
are false twins. Finally, we define the graph $H\!\!\downarrow$ by identifying all vertices in
$H$ with the block of $P(H)$ they belong to, that is, the vertices of $H\!\!\downarrow$
are the blocks of $P(H)$ and
two blocks $B$ and $B'$ are adjacent if there are vertices $v\in B$ and $v'\in B'$ such
that $\{v,v'\}\in E(H)$.

\begin{lemma}\label{lem:hfree_critical}
    Let $H$ denote a graph with at least $2$ vertices and let $\Phi$ denote a hereditary graph
    property such that~$\Gamma(\Phi)=\{H\}$. Then $\Phi$ is critical.
\end{lemma}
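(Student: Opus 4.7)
The plan is to reduce the lemma to a cleaner sub-claim and then handle a residual case. Let me state the workhorse first: if a graph $H^{*}$ with at least $2$ vertices has no false twins in the sense of $P(H^{*})$ and has at least one edge, then any edge $\{u,v\}$ of $H^{*}$ is critical for $H^{*}$-freeness. I would prove this by a short counting argument. Suppose $\sigma : H^{*} \to (H^{*})^{x,y}_{u,v}$ is an induced embedding. The sets of $u$- and $v$-copies in the exploded graph are each pairwise false-twin classes; so their $\sigma$-preimages $T_{U}, T_{V} \subseteq V(H^{*})$ must be pairwise-false-twin sets in $H^{*}$ and hence singletons or empty. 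Counting against the $|V(H^{*})|-2$ available ``core'' vertices in $V(H^{*}) \setminus \{u,v\}$ forces $|T_{U}| = |T_{V}| = 1$, at which point the induced subgraph $(H^{*})^{x,y}_{u,v}[\sigma(V(H^{*}))]$ is visibly isomorphic to $H^{*}$ with the edge $\{u,v\}$ deleted, contradicting that $\sigma$ is an isomorphism onto its image.

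Next I would dispose of two of the three global cases by applying the workhorse directly. If $H$ has no false twins and at least one edge, we are done for $\Phi$ itself; if $H$ has no edges, then $H = \overline{K_{n}}$ so $\overline{H} = K_{n}$ has no false twins (all closed neighborhoods are distinct in $K_{n}$) and the workhorse applied to $\overline{H}$ gives the required edge for $\overline{\Phi}$. Dually, if $H$ has no \emph{true} twins, then $\overline{H}$ has no false twins (since false twins in $\overline{H}$ correspond to true twins in $H$), so we apply the workhorse to $\overline{H}$.

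The one remaining case, which I expect to be the main obstacle, is when $H$ has both false twins and true twins. Here I would pick a pair $u,v$ of true twins in $H$ and prove directly that the edge $\{u,v\}$ is critical for $\Phi$. The key structural observation is that, because $N_{H}[u] = N_{H}[v]$, the explosion $H^{x,y}_{u,v}$ merges all $u$- and $v$-copies into a \emph{single} anti-clique module $U \cup V$ of size $x+y$ with common external neighborhood $W := N_{H}(u) \setminus \{v\}$. Since $u \sim v$ in $H$, an induced embedding $\sigma$ cannot place both $\sigma(u), \sigma(v)$ into this anti-clique. If exactly one of them lies in $U \cup V$, then setting $T = \sigma^{-1}(U \cup V)$ gives $u \in T$ and $T$ lies in the false-twin class of $u$ in $H$, whose common neighborhood equals $N_{H}(u) = W \cup \{v\}$ (size $|W|+1$), yet $\sigma$ must map this set injectively into the external neighborhood $W$: contradiction.

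The subtle part is when neither $\sigma(u)$ nor $\sigma(v)$ lies in $U \cup V$. Then both are in the core, and a counting argument analogous to the workhorse shows $|T| \geq 2$, so $T$ lies in a false-twin class of $H$ of size at least $2$. Because $u,v$ are true twins, they share the same adjacency with every vertex outside $\{u,v\}$, which forces $\sigma(u)$ and $\sigma(v)$ either both into $W$ (and $T \subseteq W$) or both into $Z := V(H) \setminus \{u,v\} \setminus W$ (and $T \subseteq Z$). In each subcase I would derive a contradiction by comparing the induced subgraph $H[V(H) \setminus T]$ with its claimed isomorphic image $H[\sigma(V(H) \setminus T)]$: the restriction of $\sigma$ to $\{u,v\}$ must send a $K_{2}$ onto an edge among two vertices that play the combinatorial role of a false-twin pair in the image, producing a concrete mismatch in the induced edge-count. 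I expect this final subcase analysis, essentially showing that the ``true-twin pair $\{u,v\}$'' and the ``false-twin pair $T$'' cannot be simultaneously realized in an induced copy of $H$ inside $H^{x,y}_{u,v}$, to be the hardest part of the proof and the place where I would need to be most careful.
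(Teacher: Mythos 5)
Your workhorse lemma and the easy cases are fine: the preimage of each clone-class under an induced embedding is a set of pairwise false twins, the counting forces one $u$-copy and one $v$-copy in the image, and the image graph then has one edge too few. Your handling of the subcase where $\sigma(u)$ or $\sigma(v)$ lands in the exploded module is also correct (indeed simpler than you state: $\sigma$ must map the $|W|+1$ neighbours of $u$ injectively into the module's neighbourhood $W$). But the residual case --- $H$ containing both a false-twin pair and a true-twin pair, with $\sigma(u),\sigma(v)$ both in the core --- is exactly where your proof stops being a proof. The intended contradiction is only sketched, and the sketch as written is confused: $\sigma(u)$ and $\sigma(v)$ are adjacent, so they cannot ``play the combinatorial role of a false-twin pair'' in the image; what is preserved is that they are \emph{true} twins of the image graph. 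Moreover, a plain edge-count comparison of $H[V(H)\setminus T]$ with its image only closes the case $|T|=2$, where $\sigma$ covers the whole core and one gets the parity contradiction $2\deg(t)=2\deg(u)-1$. Nothing in your argument rules out $|T|\ge 3$; then $\sigma$ misses $|T|-2$ core vertices, the image of $V(H)\setminus T$ is a proper induced subgraph of $H-u-v$, and the global edge count balances automatically (both sides lose $|T|\cdot|N_H(t)|$ edges), so no ``concrete mismatch'' appears. Since you yourself flag this subcase as unfinished, this is a genuine gap, not a presentational one.

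For comparison, the paper avoids this case entirely by two observations you do not make. First, no vertex can simultaneously have a false twin in $H$ and a false twin in $\overline{H}$ (equivalently, a true twin in $H$); this short exclusivity argument shows that if $H$ has a false-twin pair $u,v$, then $u,v$ are adjacent in $\overline{H}$ and are singleton twin-classes there. Second, the paper proves a stronger version of your workhorse: an edge is critical as soon as its two \emph{endpoints} have no false twins, with no global assumption on $H$; this is shown by passing to the quotient graph obtained by contracting false-twin classes and comparing edge counts of the quotients. With these two ingredients the case analysis collapses and your problematic configuration never arises. To repair your proof you would either need to establish that endpoint-only strengthening (your $T_U,T_V$-counting does not give it, since a large false-twin class elsewhere can absorb the count), or genuinely finish the $|T|\ge 3$ analysis, which your current plan does not address.
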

\begin{proof}
    We show that either $\Phi$ or $\overline{\Phi}$ has a critical edge. As
    $\Gamma(\Phi)=\{H\}$, we need to prove that at least one of $H$ and
    $\overline{H}$ has a critical edge. Let us start with the following claim.
    \begin{claim}\label{clm:critical_singleton}
        Let $F$ denote a graph with an edge $\{u,v\}$ such that $\{u\}$ and $\{v\}$ are
        singleton sets in the partition $P(F)$. Then $\{u,v\}$ is a critical edge of $F$.
    \end{claim}
    \begin{claimproof}
        Suppose there are integers $x,y\in \mathbb{N}_{\geq 0}$ such that there is an induced
        subgraph $F'$ of $F^{x,y}_{u,v}$ that is isomorphic to $F$. Then there is a
        natural bijection between the blocks of $F$ and the blocks of  $F^{x,y}_{u,v}$
        sending the class of each vertex not equal to $u,v$ to themselves, sending the
        class $\{u\}$ to the class of its $x$ clones and similar for $v$. But note that
        the graph $F^{x,y}_{u,v}\!\!\downarrow$ has one fewer edge than $F\!\!\downarrow$
        (since the previous edge $\{u,v\}$ was removed). However, the induced subgraph
        $F'$ of  $F^{x,y}_{u,v}$ satisfies that $F'\!\!\downarrow$ is a subgraph of
        $F^{x,y}_{u,v}\!\!\downarrow$ and thus also has at least one fewer edges than
        $F\!\!\downarrow$, a contradiction to $F'$ being isomorphic to $F$.
    \end{claimproof}
    Using the previous claim, it suffices to show that there are two vertices $u$ and $v$
    such that $\{u,v\}$ is an edge and $\{u\}$ and $\{v\}$ are singletons in either one of
    $H$ and $\overline{H}$.

    We first show that for every vertex $z\in V(H)=V(\overline{H})$, the set $\{z\}$ is either a
    singleton in $P(H)$ or in $P(\overline{H})$. To this end, assume that $z$ has a false
    twin $z'$ in $H$ and a false twin $z''$ in $\overline{H}$. Consequently, $\{z,z'\}\notin
    E(H)$ and $\{z,z''\}\notin E(\overline{H})$, and thus $\{z,z'\} \in E(\overline{H})$ and
    $\{z,z''\}\in E(H)$. Now, as $z$ and $z'$ are false twins in $H$ and $\{z,z''\}\in E(H)$,
    we see that $\{z', z''\} \in E(H)$. However, as $z$ and $z''$ are false twins in
    $\overline{H}$ and $\{z,z'\} \in E(\overline{H})$, we see that $\{z', z''\} \in
    E(\overline{H})$ as well, which leads to the desired contradiction.

    Now assume without loss of generality that $H$ has at least one edge; otherwise we consider
    $\overline{H}$. If there are false twins $u$ and $v$ in $H$, then, by the previous
    argument, $\{u\}$ and $\{v\}$ are singletons in $P(\overline{H})$ and $u$ and $v$ are
    adjacent in $\overline{H}$. By \cref{clm:critical_singleton}, we obtain a
    critical edge of $\overline{H}$. If there are no false twins in~$H$, we can choose an
    arbitrary edge of~$H$ which is then again critical by
    \cref{clm:critical_singleton}. This concludes the
    proof.\footnote{For your amusement: if you color all vertices red that are singletons in $H$
    and color all vertices blue that are singletons only in $\overline H$, then the fact
    that the Ramsey number $R(2)$ is $3$ shows that for $H$ having at least $3$
    vertices, we find a critical edge in $H$ or $\overline{H}$.}
\end{proof}
We have shown that every hereditary property defined by a single (non-trivial) forbidden
induced subgraph is critical. Let us now provide some examples of critical hereditary
properties that are defined by multiple forbidden induced subgraphs:
\begin{enumerate}
    \item $\Phi(H) = 1 :\Leftrightarrow H$ is perfect. A graph $H$ is \emph{perfect} if
        for every induced subgraph of $H$, the size of the largest clique equals the
        chromatic number. By the Strong Perfect Graph Theorem~\cite{ChudnovskyRST06}, we
        have that~$\Gamma(\Phi)$ is the set of all odd cycles of length at least $5$ and
        their complements. Now observe that every edge of the cycle of length $5$ is
        critical for $\Phi$ as the exploded graph is bipartite and thus perfect.
    \item $\Phi(H) = 1 :\Leftrightarrow H$ is chordal. A graph is \emph{chordal} if it
        does not contain an induced cycle of length $4$ or more. Consequently, we can
        choose an arbitrary edge of the cycle of length $4$ as a critical edge for $\Phi$,
        as the resulting exploded graphs do not contain any cycle.
    \item $\Phi(H) = 1 :\Leftrightarrow H$ is a split graph. A \emph{split graph} is a
        graph whose vertices can be partitioned into a clique and an independent set. It
        is known~\cite{FoldesH77} that $\Gamma(\Phi)$ contains the cycles of length $4$
        and $5$, and the complement of the cycle of length $4$. The latter is the graph
        containing two disjoint edges, and it is easy to see that any of those two edges
        is critical for $\Phi$.
\end{enumerate}

We now establish hardness of $\#\indsubsprob(\Phi)$ for critical hereditary properties. We
start with the following lemma, which constructs a (tight) parameterized Turing-reduction from
counting cliques of size~$k$ in bipartite graphs.
\begin{lemma}\label{lem:reduction_hereditary}
    Let $\Phi$ denote a computable and critical hereditary graph property. There is an
    algorithm~$\mathbb{A}$ with oracle access to $\#\indsubsprob(\Phi)$ that expects as
    input a bipartite graph $G$ and a positive integer $k$, and computes the number of
    independent sets of size $k$ in $G$ in time $O(|G|)$. Furthermore, the number of calls
    to the oracle is bounded by $O(1)$ and every queried pair~$(\hat{G},\hat{k})$
    satisfies $|V(\hat{G})| \in O(|V(G)|)$ and $\hat{k} \in O(k)$.
\end{lemma}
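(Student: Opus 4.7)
My plan is to exploit criticality by constructing a gadget that encodes the bipartite independent-set-counting problem into a hereditary-property counting problem. First, I would reduce to the case that $\Phi$ itself (rather than $\overline{\Phi}$) has a critical edge: if only $\overline{\Phi}$ does, I replace each oracle query $\#\indsubs{\overline{\Phi}, k'}{X}$ by $\#\indsubs{\Phi, k'}{\overline{X}}$ using \cref{fac:invariance}. So fix $H \in \Gamma(\Phi)$ and a critical edge $\{u,v\} \in E(H)$, and set $U := V(H) \setminus \{u,v\}$; observe that $|U|$ is a constant depending only on $\Phi$.

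Second, given a bipartite input $G$ with parts $A$ and $B$ and a positive integer $k$, I would build $\hat{G}$ on vertex set $U \sqcup A \sqcup B$ with the following edges: all edges of $H[U]$; for each $a \in A$ an edge to every $w \in N_H(u) \cap U$ (so that $a$ plays the role of $u$); for each $b \in B$ an edge to every $w \in N_H(v) \cap U$; a copy of each edge of $G$ between $A$ and $B$; and no other edges. Since $|U|$ is constant, this construction takes $O(|G|)$ time and $|V(\hat{G})| \in O(|V(G)|)$.

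The core combinatorial observation is that for any $S \subseteq V(G)$ of size $k$ with $S = S_A \cup S_B$ ($S_A \subseteq A$, $S_B \subseteq B$), the induced subgraph $\hat{G}[U \cup S]$ is isomorphic to $H_{u,v}^{|S_A|, |S_B|}$ whenever $S$ is independent in $G$ (verified by directly matching adjacencies), while if $S$ is not independent then any edge $\{a,b\} \in E(G)$ with $a,b \in S$ yields $\hat{G}[U \cup \{a,b\}] \cong H$ as an induced subgraph (with $a, b$ playing $u, v$); by hereditariness $\Phi$ then fails on $\hat{G}[U \cup S]$. Criticality supplies $\Phi(H_{u,v}^{x,y}) = 1$ for all $x,y \geq 1$, so whenever $S_A$ and $S_B$ are both non-empty, independence of $S$ in $G$ is equivalent to $\hat{G}[U \cup S]$ satisfying $\Phi$.

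To isolate induced subgraphs containing all of $U$, I would apply inclusion-exclusion over $T \subseteq U$:
\[ N^* := \sum_{T \subseteq U}(-1)^{|T|} \cdot \#\indsubs{\Phi,\, |V(H)|-2+k}{\hat{G}[V(\hat{G}) \setminus T]}\,, \]
which costs $2^{|U|} = O(1)$ oracle calls, each on a graph of size $O(|V(G)|)$ with parameter $k + O(1) \in O(k)$. By the combinatorial observation, $N^*$ equals the number of $k$-independent sets of $G$ meeting both sides of the bipartition, plus boundary contributions from $S \subseteq A$ and from $S \subseteq B$ (which are automatically independent, since $G$ is bipartite, and give induced subgraphs isomorphic to $H_{u,v}^{k,0}$ and $H_{u,v}^{0,k}$ respectively). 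Since $\Phi$ is computable, I can decide whether $\Phi(H_{u,v}^{k,0})$ and $\Phi(H_{u,v}^{0,k})$ are $1$, and correct by an explicit $\binom{|A|}{k} + \binom{|B|}{k}$ adjustment to recover the total $k$-independent-set count of $G$.

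\textbf{Main obstacle.} The most delicate point is handling the boundary cases: criticality only controls $H_{u,v}^{x,y}$ for $x,y \geq 1$, so $k$-sets entirely inside $A$ or $B$ are not governed by it and must be corrected by the binomial terms above. A secondary point is the ``contains all of $U$'' requirement, which would blow up exponentially in $k$ if attacked naively, but is handled by an inclusion-exclusion whose size is bounded by the constant $2^{|U|}$ since $U$ depends only on the fixed property $\Phi$ and not on $k$ or $|G|$.
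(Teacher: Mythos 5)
Your proposal is correct and follows essentially the same route as the paper's proof: the same exploded-graph gadget $\hat{G}$ built from $H_{u,v}^{|A|,|B|}$ with the edges of $G$ inserted between the two clone classes, the same key observation that $U$ together with a $k$-set $S$ induces $H_{u,v}^{|S_A|,|S_B|}$ exactly when $S$ is independent (and otherwise contains $H$ as an induced subgraph), and the same inclusion--exclusion over the constant-size set $U=V(H)\setminus\{u,v\}$ using $2^{|U|}=O(1)$ oracle calls. The only deviation is your boundary correction for $S\subseteq A$ or $S\subseteq B$, which is unnecessary: since $H_{u,v}^{k,0}$ is an induced subgraph of $H_{u,v}^{k,1}$ and $\Phi$ is hereditary, criticality for positive $x,y$ already forces $\Phi(H_{u,v}^{k,0})=\Phi(H_{u,v}^{0,k})=1$, so $N^*$ equals the full independent-set count directly; dropping the correction also removes your extra evaluation of $\Phi$ on a graph of size depending on $k$, which strictly speaking exceeds the stated $O(|G|)$ time bound (though harmlessly so for the hardness application).
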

\begin{proof}
    Assume without loss of generality that $\Phi$ has a critical edge; otherwise we use
    \cref{fac:invariance} and proceed with $\overline{\Phi}$. Hence choose $H\in
    \Gamma(\Phi)$ and $e=\{u,v\} \in E(H)$ such that $\#\indsubs{\hat{H}}{H_{u,v}^{x,y}} = 0$
    for every $\hat{H}\in \Gamma(\Phi)$ and for all non-negative integers $x,y$.

    Now let $G=(U\dot\cup V, E)$ and $k$ denote the given input. If $U$ or $V$ are
    empty, then we can trivially compute the number of independent sets of size $k$.
    Hence, we have $U=\{u_1,\dots,u_{n_1}\}$ and $V=\{v_1,\dots,v_{n_2}\}$ for some
    integers $n_1,n_2 >0$. Now, we proceed as follows:
    In the first step, we construct the graph $H_{u,v}^{n_1,n_2}$.
    In the next step, we identify $u$ and its
    $n_1-1$ clones with the vertices of $U$ and $v$ and its $n_2-1$ clones with the
    vertices of $V$. Finally, we add the edges $E$ of $G$. We call the resulting graph
    $\hat{G}$ and we observe that $\hat{G}$ can clearly be constructed in time
    $O(|G|)$; note that $|H|$ is a constant as $\Phi$ is fixed. Consult \cref{fig:ghat}
    for a visualization of the construction.

    \begin{figure}[t]
        \centering
        \begin{tikzpicture}
            \begin{scope}
                \node[vertex,label=right:{$u$}] (u) at (0,0) {};
                \node[vertex,label=right:{$v$}] (v) at (0,-2) {};

                \draw[edge] (u) -- (v);

                \node (e1) at (-2.2, .1) {};
                \node (e2) at (.4, -2.1) {};
                \node[cvertex] (x1) at (-1.8, -1.5) {};
                \node[cvertex] (x2) at (-2, -.5) {};
                \node[cvertex] (x3) at (-1.6, -1) {};
                \node[fit=(x1)(x2)(x3), rounded corners=3pt, draw=lipicsGray!40,
                    fill=lipicsGray!10] {};
                \node[fit=(e1)(e2), rounded corners=5pt, draw] (f) {};
                \node at ($(f.south) + (0,-.3)$) {$H$};

                \node[cvertex] (x1) at (-1.7, -1.5) {};
                \node[cvertex] (x2) at (-2, -.5) {};
                \node[cvertex] (x3) at (-1.6, -.84) {};
                \node[cvertex] (x4) at (-2, -1.16) {};
                \draw[cedge] (u) -- (x3) -- (v);
                \draw[cedge] (u) -- (x2);
                \draw[cedge] (v) -- (x1) -- (x2) -- (x3) -- (x4);

                \node[vertex] (U3) at (2.6, -0.4) {};
                \node[vertex] (U4) at (1.8, -0.4) {};
                \node[vertex] (V2) at (2.6, -1.6) {};
                \node[vertex] (V3) at (2.2, -1.6) {};
                \node[vertex] (V4) at (1.8, -1.6) {};
                \node[fit=(U3)(U4), rounded corners=3pt, draw=lipicsGray!40] (x1) {};
                \node[fit=(V2)(V3)(V4), rounded corners=3pt, draw=lipicsGray!40] (x2) {};
                \node at ($(x2.south) + (0,-.2)$) {\small$V$};
                \node at ($(x1.north) + (0,.2)$) {\small$U$};
                \node (f1) at (2.81, .1) {};
                \node (f2) at (1.6, -2.1) {};
                \draw[sdedge] (U3) -- (V4);
                \draw[sdedge] (U3) -- (V2);
                \draw[sdedge] (V3) -- (U4) -- (V4);
                \node[fit=(f1)(f2), rounded corners=5pt, draw] (g) {};
                \node at ($(g.south) + (0,-.3)$) {$G$};
            \end{scope}
            \begin{scope}[xshift=22em]
                \node (e1) at (-2.2, .1) {};
                \node (e2) at (.4, -2.1) {};
                \node (f1) at (2.2, .1) {};
                \node (f2) at (1.6, -2.1) {};
                \node[cvertex] (x1) at (-1.8, -1.5) {};
                \node[cvertex] (x2) at (-2, -.5) {};
                \node[cvertex] (x3) at (-1.6, -1) {};
                \node[fit=(x1)(x2)(x3), rounded corners=3pt, draw=lipicsGray!40,
                    fill=lipicsGray!10] {};
                \node[fit=(e1)(e2)(f1)(f2), rounded corners=5pt, draw] (f) {};
                \node at ($(f.south) + (0,-.3)$) {$\hat{G}$};

                \node[cvertex] (x1) at (-1.7, -1.5) {};
                \node[cvertex] (x2) at (-2, -.5) {};
                \node[cvertex] (x3) at (-1.6, -.84) {};
                \node[cvertex] (x4) at (-2, -1.16) {};

                \node[vertex] (U3) at (2, -0) {};
                \node[vertex] (U4) at (.5, -0) {};
                \node[vertex] (V2) at (2, -2) {};
                \node[vertex] (V3) at (1.25, -2) {};
                \node[vertex] (V4) at (0.5, -2) {};
                \draw[cedge] (U4) -- (x3) -- (V4);
                \draw[cedge] (U3) -- (x3) -- (V2);
                \draw[cedge] (x3) -- (V3);
                \draw[cedge] (U3) -- (x2);
                \draw[cedge] (U4) -- (x2);
                \draw[cedge] (V4) -- (x1) -- (x2) -- (x3) -- (x4);
                \draw[cedge] (V2) -- (x1);
                \draw[cedge] (V3) -- (x1);
                \draw[sdedge] (U3) -- (V4);
                \draw[sdedge] (U3) -- (V2);
                \draw[sdedge] (V3) -- (U4) -- (V4);
            \end{scope}
        \end{tikzpicture}
        \caption{The construction of $\hat{G}$ from
        \cref{lem:reduction_hereditary}.}\label{fig:ghat}
    \end{figure}

	Note that the construction induces a partition of the vertices of $\hat{G}$ into three sets:
	\[V(\hat{G}) = R ~\dot\cup~ U ~\dot\cup~ V,\]
    where $R = V(H)\setminus \{u,v\}$; we set $r:= |R|$. Now define
    \[\indsubs{\Phi,k+r}{\hat{G}}[R] := \{ F \in \indsubs{\Phi,k+r}{\hat{G}} ~|~
    R\subseteq V(F)  \},\]
    that is, $\indsubs{\Phi,k+r}{\hat{G}}[R]$ is the set of all induced subgraphs $F$ of
    size $k+r$ in $\hat{G}$ that satisfy $\Phi$ and that contain all vertices in $R$.
    Next, we show that the cardinality of $\indsubs{\Phi,k+r}{\hat{G}}[R]$ reveals the
    number of independent sets of size $k$ in $G$.
	\begin{claim}
		Let $\mathsf{IS}_k$ denote the set of independent sets of size $k$ in $G$. We have
		\[\#\indsubs{\Phi,k+r}{\hat{G}}[R] =\#\mathsf{IS}_k. \]
	\end{claim}
    \begin{claimproof}
        Let $b$ denote the function that maps a graph $F \in \indsubs{\Phi,k+r}{\hat{G}}[R]$ to a
        $k$-vertex subset of~$G$ given by \[b(F) := V(F) \cap (U ~\dot\cup ~V).\] We
        show that $\mathsf{im}(b) = \mathsf{IS}_k$.

        ``$\subseteq$'': Fix a graph $F \in \indsubs{\Phi,k+r}{\hat{G}}[R]$ and
        write $b(F)=U' ~\dot\cup~V'$ where $U' \subseteq U$ and $V' \subseteq V$. As
        $|V(F)| = k+r$, as well as $V(F) = R~\dot\cup~ U' ~\dot\cup ~V'$, and $|R|=r$,
        we see that $|b(F)|=k$. Now assume that
        $b(F)$ is not an independent set, that is, there is an edge $(u,v)\in U' \times V'$ in
        $F$. Observing that the induced subgraph $F[R\cup
        \{u,v\}]$ of $F$ is isomorphic to $H$, and that $H$ is a forbidden induced subgraph of the
        property $\Phi$, yields the desired contradiction.

        ``$\supseteq$'': Let $U' ~\dot\cup ~V'$ denote an independent set of size $k$ of $G$
        with $|U'|=k_1$ and $|V'|=k_2$; note that $k_1$ or $k_2$ might be zero. Let $F$
        denote the induced subgraph of $\hat{G}$ with vertices $U' ~\dot\cup ~V'
        ~\dot\cup~ R$. Then $F$ has $k+r$ vertices and is isomorphic to $H_{u,v}^{k_1,k_2}$.
        Suppose $F$ does not satisfy $\Phi$. Then $F$ has an induced subgraph isomorphic to a
        graph in $\Gamma(\Phi)$. However, this is impossible as
        $\indsubs{\hat{H}}{H_{u,v}^{x,y}} =
        \emptyset$ for all $\hat{H}\in \Gamma(\Phi)$ and non-negative integers $x,y$.

        This shows that $b$ is a surjective function from $\indsubs{\Phi,k+r}{\hat{G}}[R]$ to
        $\mathsf{IS}_k$. Furthermore, injectivity is immediate by the definition of $b$ as
        $V(F)\setminus (U~\dot\cup ~V) =R$ for every $F \in \indsubs{\Phi,k+r}{\hat{G}}[R]$,
        which proves the claim.
    \end{claimproof}
    It hence remains to show how our algorithm $\mathbb{A}$ can compute the cardinality of
    $\indsubs{\Phi,k+r}{\hat{G}}[R]$.
    \begin{claim}
        Write $R=\{z_1,\dots,z_r\}$. We see that
        \[ \#\indsubs{\Phi,k+r}{\hat{G}}[R] = \sum_{J \subseteq[r]} (-1)^{|J|} \cdot
        \#\indsubs{\Phi,k+r}{\hat{G}\setminus J}, \]
        where $\hat{G}\setminus J$ is the graph obtained from $\hat{G}$ by deleting all
        vertices $z_i$ with $i\in J$.
    \end{claim}
    \begin{claimproof}
        Using the principle of inclusion and exclusion, we obtain that
        \begin{align*}
            &\#\indsubs{\Phi,k+r}{\hat{G}}[R]\\
            &\quad= \#\indsubs{\Phi,k+r}{\hat{G}} - \#\{F \in \indsubs{\Phi,k+r}{\hat{G}}
            ~|~ \exists i \in [r]: z_i \notin V(F) \} \\
            &\quad= \#\indsubs{\Phi,k+r}{\hat{G}} - \left|\bigcup_{i=1}^r \{F \in
            \indsubs{\Phi,k+r}{\hat{G}} ~|~ z_i \notin V(F) \} \right|\\
            &\quad= \#\indsubs{\Phi,k+r}{\hat{G}} - \sum_{\emptyset \neq J \subseteq [r]}
            (-1)^{|J|+1} \left| \bigcap_{i \in J} \{F \in \indsubs{\Phi,k+r}{\hat{G}} ~|~
            z_i \notin V(F) \}\right| \\
            &\quad= \#\indsubs{\Phi,k+r}{\hat{G}} - \sum_{\emptyset \neq J \subseteq [r]}
            (-1)^{|J|+1} \# \{F \in \indsubs{\Phi,k+r}{\hat{G}} ~|~ \forall i \in J: z_i
            \notin V(F) \} \\
            &\quad= \#\indsubs{\Phi,k+r}{\hat{G}} - \sum_{\emptyset \neq J \subseteq [r]}
            (-1)^{|J|+1} \#\indsubs{\Phi,k+r}{\hat{G}\setminus J} \\
            &\quad=\sum_{J \subseteq[r]} (-1)^{|J|} \cdot
            \#\indsubs{\Phi,k+r}{\hat{G}\setminus J}\,.
        \end{align*}
    \end{claimproof}
    Consequently, the algorithm $\mathbb{A}$ requires linear time in $|G|$ and $2^r\in
    O(1)$ oracle calls, each query of the form $(\hat{G}\setminus J, k+r)$, to compute the
    number of independent sets of size $k$ in $G$ as shown in the previous claims. In
    particular, $|V(\hat{G}\setminus J)| \in O(|V(G)|)$ for each $J \subseteq [r]$ and
    $k+r \in O(k)$; this completes the proof.
\end{proof}

\begin{theorem}\label{thm:critical_hardness}
    Let $\Phi$ denote a computable and critical hereditary graph property. Then
    $\#\indsubsprob(\Phi)$ is $\#\W{1}$-complete and cannot be solved in time
    \[g(k)\cdot |V(G)|^{o(k)}\] for any function $g$, unless ETH fails. The same is true for
    the problem $\#\indsubsprob(\neg\Phi)$.
\end{theorem}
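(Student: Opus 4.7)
The plan is to invoke the reduction from \cref{lem:reduction_hereditary} in a black-box manner, so that the only missing ingredient is the hardness of counting $k$-independent sets in bipartite graphs. Concretely, I would proceed as follows.

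First, I would recall (or cite from the literature --- this is a folklore consequence of tight reductions from $\#\textsc{Clique}$, see e.g.\ \cite{CurticapeanDM17,Roth19} and the discussion of partitioned problems in \cref{thm:homsdicho}) that counting $k$-vertex independent sets in a bipartite graph $G$ is $\#\W{1}$-hard and, assuming ETH, cannot be done in time $g(k)\cdot |V(G)|^{o(k)}$ for any function $g$. I would then apply \cref{lem:reduction_hereditary} to the computable critical hereditary property $\Phi$: this gives an algorithm that, on input a bipartite graph $G$ and a parameter $k$, computes the number of $k$-independent sets of $G$ in time $O(|G|)$ using $O(1)$ oracle calls to $\#\indsubsprob(\Phi)$, each on an instance $(\hat G,\hat k)$ with $|V(\hat G)|\in O(|V(G)|)$ and $\hat k\in O(k)$. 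A hypothetical algorithm for $\#\indsubsprob(\Phi)$ running in time $g(k)\cdot |V(G)|^{o(k)}$ would therefore yield an algorithm for counting $k$-independent sets in bipartite graphs running in time $g'(k)\cdot |V(G)|^{o(k)}$ for some function $g'$, contradicting ETH. This immediately establishes both the $\#\W{1}$-hardness (membership in $\#\W{1}$ is known from \cite{JerrumM15} for every computable $\Phi$, as recalled in \cref{sec:prelims}) and the claimed conditional lower bound.

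Finally, for $\neg\Phi$ I would apply \cref{fac:invariance}: since
\[\#\indsubs{\Phi,k}{G} \;=\; \binom{|V(G)|}{k}-\#\indsubs{\neg\Phi,k}{G},\]
any algorithm solving $\#\indsubsprob(\neg\Phi)$ within time $g(k)\cdot |V(G)|^{o(k)}$ can be turned, with $O(1)$ overhead, into an equally fast algorithm for $\#\indsubsprob(\Phi)$. The lower bound therefore transfers to $\neg\Phi$ verbatim.

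The main obstacle in this argument is not the reduction itself but making sure the cited hardness baseline for counting $k$-independent sets in bipartite graphs really does come with the tight $|V(G)|^{o(k)}$ ETH lower bound. The straightforward citation route is a tight reduction from $\textsc{PartitionedSub}(\mathcal{K})$ (where $\mathcal{K}$ is the class of cliques) using the equivalence with $\cphom$ described in the proof of \cref{thm:homsdicho}; alternatively, one can apply \cref{thm:homsdicho} to the class of complete bipartite graphs $K_{k,k}$ together with the standard inclusion-exclusion from embeddings to induced subgraphs on a suitably encoded bipartite instance. Either way, no new parameterized complexity machinery is needed beyond what has already been assembled in \cref{sec:prelims}.
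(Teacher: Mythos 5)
Your proposal is correct and follows essentially the same route as the paper: apply \cref{lem:reduction_hereditary} as a black box to transfer hardness from counting $k$-independent sets in bipartite graphs, and handle $\neg\Phi$ via \cref{fac:invariance}. One caveat: the baseline hardness is not folklore and your sketched re-derivations (via $\textsc{PartitionedSub}$ or \cref{thm:homsdicho} on $K_{k,k}$) would not straightforwardly work, since the decision version of independent set on bipartite graphs is polynomial-time and establishing the counting hardness is a genuine result in its own right --- the paper simply cites the $\#\W{1}$-hardness from~\cite{CurticapeanDFGL19} and the tight $g(k)\cdot |V(G)|^{o(k)}$ ETH lower bound from~\cite{DorflerRSW19}, and you should do the same.
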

\begin{proof}
    It is known that counting independent sets of size $k$ in bipartite graphs is
    $\#\W{1}$-hard~\cite{CurticapeanDFGL19} and cannot be solved in time $g(k)\cdot
    |V(G)|^{o(k)}$ for any function $g$, unless ETH fails~\cite{DorflerRSW19}. The theorem thus
    follows by \cref{lem:reduction_hereditary}.
\end{proof}
As a particular consequence we establish a complete classification for the properties of
being $H$-free, including for instance claw-free graphs and co-graphs~\cite{Seinsche74}.

\hermn*
\begin{proof}
	Holds by \cref{lem:hfree_critical,thm:critical_hardness}.
\end{proof}
Finally, note that the previous result is in sharp contrast to the result of Khot and
Raman~\cite{KhotR02} concerning the decision version of $\#\indsubsprob(\Phi)$: Their
result implies that \emph{finding} an induced subgraph of size $k$ in a graph $G$ that
satisfies a hereditary property $\Phi$ with $\Gamma(\Phi)=\{H\}$ for some graph $H$ can be
done in time $f(k)\cdot|V(G)|^{O(1)}$ for some function $f$ if $H$ is neither a clique nor
an independent set of size at least $2$. \newpage

\bibliography{conference}
\end{document}